\newcommand{\R}{{\mathbb R}}
\newcommand{\Rpp}{{\mathbb{R}_{++}}}
\newcommand{\N}{{\mathbb N}}
\newcommand{\E}{{\mathbb E}}
\newcommand{\Prob}{{\mathbb P}}
\newcommand{\der}{{\mathrm{d}}}
\DeclareMathOperator{\Ei}{Ei}
\DeclareMathOperator{\var}{\mathrm{VaR}}
\DeclareMathOperator{\cvar}{\mathrm{CVaR}}
\DeclareMathOperator{\cvard}{\mathrm{CVaR}_\delta}
\DeclareMathOperator{\comprat}{\mathrm{CR}}
\DeclareMathOperator{\compratbold}{\mathbf{CR}}
\DeclareMathOperator{\dcvar}{\mathrm{CVaR}_\delta}
\DeclareMathOperator{\dcvarbold}{\mathbf{CVaR}_{\bm{\delta}}}
\DeclareMathOperator{\dcr}{\delta\text{-}\mathrm{CR}}
\DeclareMathOperator{\dcrbold}{\bm{\delta}\mathbf{-CR}}
\DeclareMathOperator*{\esssup}{ess\,sup}
\DeclareMathOperator*{\essinf}{ess\,inf}
\newcommand{\vmax}{v_{\max}}
\renewcommand{\vec}[1]{\mathbf{#1}}
\newcommand{\bv}[1]{\mathbf{#1}}
\newcommand{\indic}{\mathds{1}}
\newcommand{\Borel}{\mathscr{B}}
\newcommand{\calI}{{\mathcal{I}}}
\newcommand{\calO}{{\mathcal{O}}}
\newcommand{\opt}{{\textsc{Opt}}}
\newcommand{\alg}{{\textsc{Alg}}}
\newcommand{\pvec}{{\vec{p}}}
\newcommand{\qvec}{{\vec{q}}}
\newcommand{\vvec}{{\vec{v}}}
\newcommand{\xvec}{{\vec{x}}}
\newcommand{\evec}{{\vec{e}}}
\newcommand{\mat}[1]{\mathbf{#1}}
\newcommand{\toup}{%
  \mathrel{\nonscript\mkern-1.2mu\mkern1.2mu{\uparrow}}%
}
\newcommand{\todown}{%
  \mathrel{\nonscript\mkern-1.2mu\mkern1.2mu{\downarrow}}%
}
\newcommand*\tageq{\refstepcounter{equation}\tag{\theequation}}
\newtheorem{theorem}{Theorem}
\newtheorem{lemma}[theorem]{Lemma}
\newtheorem{definition}[theorem]{Definition}
\providecommand{\proofname}{Proof}
\newcommand{\jmlrBlackBox}{\rule{1.5ex}{1.5ex}}
\newcommand{\jmlrQED}{\hfill\jmlrBlackBox\par\bigskip}
\renewenvironment{proof}[1][]{
    \ifx\newenvironment#1\newenvironment
      \par\noindent{\bfseries\upshape \proofname\ }
    \else
      \def\temp{#1}%
      \def\noname{noname}%
      \ifx\temp\noname
      \else
        \par\noindent{\bfseries\upshape #1\ }
      \fi
    \fi
    \ignorespaces
  }{%
    \jmlrQED
  }
\crefname{lemma}{lemma}{lemmas}
\crefname{proposition}{proposition}{propositions}
\crefname{remark}{remark}{remarks}
\crefname{corollary}{corollary}{corollaries}
\crefname{definition}{definition}{definitions}
\crefname{conjecture}{conjecture}{conjectures}
\crefname{axiom}{axiom}{axioms}
\begin{document}
\title{Risk-Sensitive Online Algorithms}

\author{
  Nicolas Christianson\thanks{California Institute of Technology. Email: \href{mailto:nchristianson@caltech.edu}{\nolinkurl{nchristianson@caltech.edu}}}
  \and
  Bo Sun\thanks{University of Waterloo. Email: \href{mailto:bo.sun@uwaterloo.ca}{\nolinkurl{bo.sun@uwaterloo.ca}}}
  \and
  Steven Low\thanks{California Institute of Technology. Email: \href{mailto:slow@caltech.edu}{\nolinkurl{slow@caltech.edu}}}
  \and
  Adam Wierman\thanks{California Institute of Technology. Email: \href{mailto:adamw@caltech.edu}{\nolinkurl{adamw@caltech.edu}}}
}

\date{\monthyeardate\today\thanks{Accepted for presentation at the Conference on Learning Theory (COLT) 2024.}}

\maketitle

\begin{abstract}%
  We study the design of \emph{risk-sensitive online algorithms}, in which risk measures are used in the competitive analysis of randomized online algorithms. We introduce the $\dcvar$-competitive ratio ($\dcr$) using the conditional value-at-risk of an algorithm's cost, which measures the expectation of the $(1-\delta)$-fraction of worst outcomes against the offline optimal cost, and use this measure to study three online optimization problems: continuous-time ski rental, discrete-time ski rental, and one-max search. The structure of the optimal $\dcr$ and algorithm varies significantly between problems: we prove that the optimal $\dcr$ for continuous-time ski rental is $2-2^{-\Theta(\frac{1}{1-\delta})}$, obtained by an algorithm described by a delay differential equation. In contrast, in discrete-time ski rental with buying cost $B$, there is an abrupt phase transition at $\delta = 1 - \Theta(\frac{1}{\log B})$, after which the classic deterministic strategy is optimal. Similarly, one-max search exhibits a phase transition at $\delta = \frac{1}{2}$, after which the classic deterministic strategy is optimal; we also obtain an algorithm that is asymptotically optimal as $\delta \todown 0$ that arises as the solution to a delay differential equation.
\end{abstract}

\clearpage


\section{Introduction}

Randomness can improve decision-making performance in many online problems; for instance, randomization improves the competitive ratio of online ski rental from $2$ to $\frac{e}{e-1}$ \citep{karlinCompetitiveRandomizedAlgorithms1994}, of metrical task systems (MTS) from linear to polylogarithmic in number of states \citep{borodinOptimalOnlineAlgorithm1992,bubeckMetricalTaskSystems2021}, and of online search from polynomial to logarithmic in the fluctuation ratio \citep{el-yanivOptimalSearchOneWay2001,lorenzOptimalAlgorithmsKSearch2009}. However, this improved performance can only be obtained on average over multiple problem instances, as a randomized algorithm may vary wildly in its performance on any particular run.
While this may not pose a concern for decision-making agents facing a large number of problem instances, such variability may be undesirable if an agent has only a small number of instances to solve, or if they are sensitive to risks of a particular magnitude or likelihood.

Numerous fields, including economics, finance, and decision science, have fielded research on risk aversion and alternative \emph{risk measures} that enable modifying decision-making objectives to accommodate these risk preferences (e.g., \cite{markowitzPortfolioSelectionEfficient1959,prattRiskAversionSmall1964,jiaStandardMeasureRisk1996,artznerCoherentMeasuresRisk1999,szegoMeasuresRisk2002}). One of the most well-studied risk measures in recent years, due to its nice properties (as a \emph{coherent} risk measure) and computational tractability, is the \emph{conditional value-at-risk} ($\dcvar$), which measures the expectation of a random loss/reward on its $(1-\delta)$-fraction of worst outcomes \citep{rockafellarOptimizationConditionalValueatrisk2000,rockafellarConditionalValueatriskGeneral2002,acerbiCoherenceExpectedShortfall2002}. $\dcvar$ and other risk measures have been applied to problems spanning finance and insurance \citep{,krokhmalPortfolioOptimizationConditional2001,chiOptimalReinsuranceVaR2011}, energy systems \citep{ndrioPricingConditionalValue2021,madavanRiskBasedHostingCapacity2024}, and robotic control \citep{hakobyanRiskAwareMotionPlanning2019,ahmadiRiskAverseControlCVaR2022}, and have been studied as an objective in place of the expectation in MDPs \citep{chowAlgorithmsCVaROptimization2014,chowRiskSensitiveRobustDecisionMaking2015,keramatiBeingOptimisticBe2020}, bandits \citep{tamkinDistributionallyAwareExplorationCVaR2019,baudryOptimalThompsonSampling2021}, and online learning \citep{even-darRiskSensitiveOnlineLearning2006,uzielGrowthOptimalPortfolioSelection2018,somaOnlineRiskaverseSubmodular2023}.

Despite the significant extent of literature on risk-sensitive algorithms for online learning with the conditional value-at-risk, there has been no work on the design and analysis of \emph{competitive} algorithms for online optimization problems like ski rental, online search, knapsack, function chasing, or MTS with risk-sensitive objectives. These types of online optimization problems have deep connections with online learning \citep{blumOnlineLearningMetrical1997,buchbinderUnifiedAlgorithmsOnline2012,danielyCompetitiveRatioVs2019}, but also substantial qualitative differences due varied problem structures and the competitive analysis framework. Coupled with their practical applications (e.g., \cite{karlinDynamicTCPAcknowledgement2001,zhangCombinatorialSkiRental2020,antoniadisLearningAugmentedDynamicPower2021,sunCompetitiveAlgorithmsOnline2020}), we are thus motivated to ask: how can we design competitive online algorithms when we care about the $\dcvar$ of the cost/reward, and what are the optimal competitive ratios for different problems?

In this work, we begin to work toward answering this question, studying risk sensitivity in the design of competitive online algorithms for online optimization. In particular, we focus on two of the prototypical problems in online optimization: \emph{ski rental}, which, as a special case of MTS, encapsulates the fundamental ``rent vs. buy'' tradeoff inherent in online optimization with switching costs \citep{bansal2CompetitiveAlgorithmOnline2015,antoniadisLearningAugmentedDynamicPower2021}, and \emph{one-max search}, which exhibits a complementary ``accept vs. wait'' tradeoff fundamental to constrained online optimization \citep{geulenRegretMinimizationOnline2010,linCompetitiveOnlineOptimization2019}. While both of these problems are simple to pose, they both reflect crucial components of the difficulty of more complicated online optimization problems, and thus serve as ideal analytic testbeds for investigating the design of risk-sensitive algorithms in online optimization.

\subsection{Contributions}


In this work, we define a novel version of the competitive ratio that penalizes a randomized algorithm's cost via the conditional value at risk ($\dcvar$), which we call the $\dcvar$-competitive ratio ($\dcr$). We then study the design of algorithms for several online problems with the $\dcr$ objective. We make contributions along three fronts:

\noindent\textbf{(1) Optimal Risk-Sensitive Online Algorithms} We find the \emph{optimal} $\dcvar$-competitive algorithm for continuous-time ski rental with any $\delta$ and characterize its $\dcr$ as $2 - 2^{-\Theta\left(\frac{1}{1-\delta}\right)}$. For discrete-time ski rental, we analytically characterize the optimal $\dcvar$-competitive algorithm when $\delta = \calO(\frac{1}{B})$, where $B$ is the buying cost, and we prove that there is a phase transition at $\delta = 1 - \Theta(\frac{1}{\log B})$, after which the optimal $\dcr$ coincides with the deterministic optimal $2-\frac{1}{B}$. Finally, we propose an algorithm for one-max search whose $\dcr$ is asymptotically optimal for small $\delta$, and we prove that one-max search exhibits a phase transition at $\delta = \frac{1}{2}$, after which the optimal $\dcr$ coincides with the deterministic optimal $\sqrt{\theta}$, where $\theta$ is the so-called ``fluctuation ratio'' of the problem.

\noindent\textbf{(2) Techniques} For continuous-time ski rental and one-max search, we show that the conditional value-at-risk of an algorithm's cost can be written as an integral expression of its inverse cumulative distribution function. This parametrization is useful both for proving analytic bounds on algorithms' $\dcr$, and as a source for optimal algorithms for these problems: it is through this formulation that we obtain the delay differential equation describing the optimal algorithm for continuous-time ski rental, and similarly how we obtain our algorithm for one-max search, which is asymptotically optimal when $\delta$ is small. For both versions of ski rental, our results rely on structural characterizations of the optimal algorithm which, while evocative of similar results from the ski rental literature, require significantly more care due to the complicated behavior of the conditional value-at-risk.

\noindent\textbf{(3) Insights} We gain several new insights from our results. The phase transitions in the discrete-time ski rental and one-max search problems, where $\delta$ sufficiently large implies that the optimal $\dcr$ is the deterministic optimal competitive ratio, suggests that there is a sharp limit to the benefit that randomization can yield in certain risk-sensitive online problems. Moreover, the qualitative difference between the continuous- and discrete-time ski rental problems -- namely, the fact that the latter has a phase transition while the former does not -- indicates that continuous and discrete problems may, in general, behave differently when risk sensitivity is introduced.

\subsection{Related Work}

\paragraph{Risk-aware online algorithms}

As mentioned earlier, while numerous problems in MDPs, bandits, and online learning have been studied with the conditional value-at-risk and other risk measures penalizing the objective, we are not aware of existing work in the literature designing competitive online algorithms for online optimization with such objectives. Even-Dar et. al \cite{even-darRiskSensitiveOnlineLearning2006} consider the related problem of online learning with expert advice and rewards depending on the Sharpe ratio and mean-variance risk measures; they prove lower bounds precluding the possibility of obtaining sublinear regret in this setting as well as upper bounds for several relaxed objectives. While their work also considers a notion of competitive ratio against the best fixed expert in its lower bounds, their upper bounds focus on regret-style results, and their problem setting is markedly different from those we consider.
A related problem is the demonstration of \emph{high-probability guarantees} on the competitive ratio of randomized online algorithms, which was studied by Komm et al.~\cite{kommRandomizedOnlineComputation2022} for a general class of online problems. However, their work is concerned with proving that existing algorithms have performance close to some nominal value with high probability, rather than designing new algorithms that are provably optimal given an agent's particular risk preferences and the distribution over algorithm performance.

Closest to our current work is the recent paper of Dinitz et al.~\cite{dinitzControllingTailRisk2024} on risk-constrained algorithms for ski rental, where the objective remains to minimize the competitive ratio defined in terms of expected cost, but algorithms must satisfy additional constraints on the likelihood that their cost will exceed a specified value. This amounts to imposing constraints on the \emph{value-at-risk ($\var$)}, or quantiles, of the algorithm's competitive ratio; in contrast, we focus on algorithms that are optimal for a \emph{risk-sensitive objective} involving the conditional value at risk, which considers not just the likelihood of exceeding a certain value, but the expectation over the resulting tail of the distribution. $\var$ and $\cvar$ have been exhaustively compared in the financial literature (e.g., \cite{sarykalinValueatRiskVsConditional2008,chiOptimalReinsuranceVaR2011}), and $\cvar$, which is a so-called \emph{coherent} risk measure, often exhibits more favorable robustness and handling of tail events than $\var$, which is not coherent \citep{artznerCoherentMeasuresRisk1999}. Indeed, $\var$ is very sensitive to problem structure and parameter selection, leading to the interesting non-continuous behavior in solution structure observed in \cite{dinitzControllingTailRisk2024}. $\cvar$ does not beget such sensitivity, but influences the solution structure in its own unique way: for continuous-time ski rental and one-max search, we obtain algorithms that result from the solution of delay differential equations.

\paragraph{Beyond worst-case analysis of algorithms} The strengthening and weakening of the adversary as $\delta$ is varied in the $\dcvar$-competitive ratio is similar in spirit to beyond worst-case analysis, where the adversary is weakened or additional information is provided to enable improved bounds over the pessimistic and unrealistic adversarial setting. There is a significant breadth of work in the literature applying these ideas to online optimization and other online problems. Some notable directions on this subject are smoothed analysis \citep{schaferTopologyMattersSmoothed2004,haghtalabSmoothedAnalysisAdaptive2022}, in which an adversary's decision is tempered with stochastic noise; algorithms with advice \citep{bockenhauerOnlineAlgorithmsAdvice2017,emekOnlineComputationAdvice2011}, in which an algorithm receives a small number of accurate bits of information about the problem instance in advance; and algorithms with predictions \citep{lykourisCompetitiveCachingMachine2018,purohitImprovingOnlineAlgorithms2018,sunParetoOptimalLearningAugmentedAlgorithms2021,christiansonOptimalRobustnessconsistencyTradeoffs2023,lechowiczOnlineConversionSwitching2024}, in which algorithms are augmented with potentially unreliable predictions about the problem instance, and algorithms seek to exploit these predictions when they are accurate while maintaining worst-case guarantees when they are not.

\subsection{Notation}
Throughout, capital letters (e.g., $X$) refer to random variables on $\R$, which we interchangeably refer to via their measures (e.g., $\mu \in \Borel(\R)$) or their cumulative distribution functions (e.g., $F_X(x) = \mu(-\infty, x]$). Given a random variable $X$ with support bounded in the interval $[a, b]$, we define its inverse CDF as $F_X^{-1}(p) = \inf\{x \in [a, b] : F_X(x) \geq p\}$; note that, given the bounded support, this definition agrees with the standard definition (where the infimum is taken over all of $\R$) for all $p \in (0, 1]$, with the only disagreement being at $p = 0$, where $F_X^{-1}(0) = \essinf X$, whereas the standard definition yields $-\infty$; this variant is well-established in the literature (e.g., \cite[Definition 1.16]{wittingMathematischeStatistik1985}). We use this definition to ensure finiteness of $F_X^{-1}$ on $[0, 1]$, and the bounded support of $X$ will be clear by context whenever the inverse CDF is discussed. $\R_+$ denotes the nonnegative reals and $\Rpp$ the strictly positive reals, and $\N$ denotes the natural numbers. The notation $[\cdot]^+$ refers to the $\max\{\cdot, 0\}$ function, and for any $N \in \N$, we write $[N] = \{1, \ldots, N\}$ and denote by $\Delta_N$ the $N$-dimensional probability simplex. For a vector $\xvec \in \R^n$, we denote its $i$th entry $x_i$. The function $W_k(x)$ refers to the $k$th branch of the Lambert $W$ function, which is defined as a solution to $W_k(x)e^{W_k(x)} = x$ (see, e.g., \cite{corlessLambertFunction1996}).

\section{Background \& Preliminaries \label{section:background}}

In this section, we introduce risk measures and the conditional value-at-risk, and give overviews of the three online problems we study in this work.

\subsection{Risk Measures and the Conditional Value-at-Risk}

A \emph{risk measure} is a mapping from the set of $\R$-valued random variables to $\R$ that gives a deterministic valuation of the \emph{risk} associated with a particular random loss. As risk preferences can vary by decision-making agent and application, many different risk measures have been introduced and studied in the literature (see, e.g., \cite[Chapter 6]{shapiroLecturesStochasticProgramming2009} for several examples). A prominent class of measures that has emerged in practice due to its favorable properties is the set of \emph{coherent risk measures} \citep{artznerCoherentMeasuresRisk1999}. Perhaps one of the most well-studied coherent risk measures in recent years is the \emph{conditional value-at-risk} ($\cvar$): the $\cvar$ at probability level $\delta$ of a random variable $X$, written $\cvar_\delta[X]$, is the expectation of $X$ on the $\delta$-tail of its distribution, i.e., its $(1-\delta)$-fraction of worst outcomes. It can be defined in several ways:

\begin{definition}[Conditional Value-at-Risk]
Let $X$ be a real-valued random variable with CDF $F_X$. If $X$ has a density $f$, then for $\delta \in [0, 1)$ the conditional value-at risk at level $\delta$ of $X$ is defined as the expectation of $X$, conditional on its outcome lying in the $\delta$-tail of its distribution \citep{rockafellarOptimizationConditionalValueatrisk2000}:
$$\cvar_\delta[X] = \E[X | X \geq F_X^{-1}(\delta)].$$
For a general random loss $X$ with probability measure $\mu$, $\cvar_\delta[X]$ can be defined in several equivalent ways \citep{rockafellarConditionalValueatriskGeneral2002,acerbiCoherenceExpectedShortfall2002,duchiLearningModelsUniform2021}:
\begin{equation} \label{eq:cvar_definition}
    \cvar_\delta[X] = \inf_{t \in \R} \left\{t + \frac{1}{1-\delta} \E[X - t]^+\right\} = \frac{1}{1-\delta}\int_\delta^1 F_X^{-1}(p)\,\der p = \sup_{\nu \in \mathscr{Q}} \E_{Y \sim \nu}[Y],
\end{equation}
where in the final expression, $\mathscr{Q}$ is an uncertainty set of probability measures defined as
$$\mathscr{Q} = \{\nu : \mu = \beta\nu + (1-\beta)\rho\text{ for some measure $\rho$ and $\beta \in [1-\delta, 1]$}\}.$$
\end{definition}

The first expression in \eqref{eq:cvar_definition} is a variational form of $\cvar_\delta$, and is useful for tractable formulations of risk-sensitive optimization problems. The latter two expressions highlight the intuition that $\cvar_\delta[X]$ computes the expected loss of $X$ on the worst $(1-\delta)$-fraction of outcomes in its distribution, or, in the parlance of \cite{duchiLearningModelsUniform2021} which we sometimes adopt, on the ``worst $(1-\delta)$-sized subpopulation''.

From the above definition it is clear that $\cvar_0[X] = \E[X]$ and $\lim_{\delta \toup 1} \cvar_\delta[X] \to \esssup X$, the largest value that $X$ can take \citep{mafusalovEstimationAsymptoticsBuffered2018}; we thus define $\cvar_1[X] \coloneqq \esssup X$, so that $\cvar_\delta$ is defined for all $\delta \in [0, 1]$.

\subsection{Online Algorithms and Competitive Analysis}

In the study of online algorithms, algorithm performance is typically measured via the \emph{competitive ratio}, or the worst case ratio in (expected) cost between an algorithm and the offline optimal strategy that knows all uncertainty in advance.

\begin{definition}[Competitive ratio] \label{defn:comp_rat}
    Consider an online problem with uncertainty drawn adversarially from a set of instances $\calI$. Let $\alg$ be a deterministic online algorithm for the problem, and let $\opt$ be the offline optimal algorithm. $\alg$'s \textbf{competitive ratio ($\compratbold$)} is the worst-case ratio in cost between $\alg$ and $\opt$ over all problem instances:
    $$\comprat(\alg) \coloneqq \sup_{I \in \calI} \frac{\alg(I)}{\opt(I)}.$$
    If $\alg$ has competitive ratio $C$, it is also called \textbf{$\mathbf{C}$-competitive}. If $\alg$ is a randomized algorithm, then the competitive ratio is defined with its expected cost:
    $$\comprat(\alg) \coloneqq \sup_{I \in \calI} \frac{\E[\alg(I)]}{\opt(I)},$$
    where the expectation is taken over $\alg$'s randomness.
\end{definition}

In our work, we introduce a new version of the competitive ratio for randomized algorithms that goes beyond expected performance: instead, we penalize a randomized algorithm via the ratio between the conditional value-at-risk of its cost and the offline optimal's cost, terming this metric the \emph{$\dcvar$-competitive ratio} (abbreviated $\dcr$).

\begin{definition}[$\dcvar$-Competitive Ratio]
    Let $\alg$ be a randomized algorithm, and let $\opt$ be the offline optimal algorithm. The \textbf{$\dcvarbold$-Competitive Ratio ($\dcrbold$)} is defined as the worst-case ratio between the $\dcvar$ of $\alg$'s cost and the offline optimal cost:
    $$\dcr(\alg) \coloneqq \sup_{I \in \calI} \frac{\cvard[\alg(I)]}{\opt(I)},$$
    where the $\cvard$ is taken over $\alg$'s randomness.
\end{definition}

It is immediately clear that any deterministic algorithm has $\dcr = \comprat$ for all $\delta \in [0, 1]$, while for randomized algorithms these metrics will generally differ for $\delta > 0$. Note that, given the definition of $\dcvar$ as focusing on the \emph{worst} $(1-\delta)$-fraction of a distribution, the $\dcr$ may also be interpreted as a metric that gives the adversary additional power to shift the distribution of the algorithm's randomness. Under this interpretation, the $\dcr$ may be viewed as an interpolation between the classic randomized case where the adversary has no power over $\alg$'s randomness ($\delta = 0$), and the case where the adversary has full control over $\alg$'s randomness and determinism is optimal ($\delta = 1$). This model can also be seen as a complement to the oblivious adversary, which knows $\alg$ but cannot see the realization of its randomness, and the adaptive adversary, which sees all random outcomes; in the $\dcr$ case, while the adversary does not see $\alg$'s random outcome directly, it has the ability to control this outcome in a way limited by the $\cvar$.


\noindent

\subsection{Online Problems Studied \label{section:problems_studied}}
We now provide a brief introduction for each of the three problems we study in this work.
\paragraph{Continuous-Time Ski Rental}

In the \emph{continuous-time ski rental (CSR)} problem, a player faces a ski season of unknown and adversarially-chosen duration $s \in \Rpp$, and must choose how long to rent skis before purchasing them. In particular, the player pays cost equal to the duration of renting, and cost $B$ for purchasing the skis. Deterministic algorithms for ski rental are wholly determined by the day $x \in \Rpp$ on which the player stops renting and purchases the skis: an algorithm that rents until day $x$ and then purchases pays cost $s \cdot \indic_{x > s} + (x + B) \cdot \indic_{x \leq s}$. Randomized algorithms can be described by a random variable $X$ over purchase days, in which case the algorithm pays (random) cost $s \cdot \indic_{X > s} + (X + B) \cdot \indic_{X \leq s}$. Given knowledge of the total number of skiing days $s$, the offline optimal strategy is to rent for the entire season if $s < B$, incurring cost $s$, and to buy immediately otherwise, yielding cost $B$. Defining $\alpha_\delta^{\text{CSR}, \mu}$ as the $\dcr$ of a strategy $X \sim \mu$, we have
$$\alpha_\delta^{\text{CSR}, \mu} \coloneqq \sup_{s \in \Rpp}\alpha_\delta^{\text{CSR}, \mu}(s) \coloneqq \sup_{s \in \Rpp} \frac{\cvard[s \cdot \indic_{X > s} + (X + B) \cdot \indic_{X \leq s}]}{\min\{s, B\}},$$
where $\alpha_\delta^{\text{CSR}, \mu}(s)$ denotes the competitive ratio of the strategy $\mu$ when the adversary's decision is $s$. We denote by $\alpha_\delta^{\text{CSR}, *}$ the smallest $\dcr$ of any strategy. We will omit the ``CSR'' in the superscript when it is clear through context that we are discussing the continuous-time ski rental problem.

We will assume without loss of generality that $B = 1$. It is well known that $\alpha_1^{\text{CSR}, *} = 2$, which is achieved by purchasing skis deterministically at time $1$, and $\alpha_0^{\text{CSR}, *} = \frac{e}{e-1}$, which is achieved by a probability density supported on the interval $[0, 1]$ \citep{karlinCompetitiveSnoopyCaching1988,karlinCompetitiveRandomizedAlgorithms1994}. In the following lemma, which is proved in Appendix~\ref{appendix:lemma:cont_ski_rental_01_support}, we show that when considering $\dcr$ as a performance metric with general $\delta \in [0, 1]$, we may similarly restrict our focus to probability measures with support on $[0, 1]$.

\begin{lemma} \label{lemma:cont_ski_rental_01_support}
    Let $\mu_1$ be a distribution on $\R_+$. There is a distribution $\mu_2$ with support in $[0, 1]$ such that, for any $\delta \in [0, 1]$, $\mu_2$ has no worse $\dcr$ than $\mu_1$: $\alpha_\delta^{\text{CSR}, \mu_2} \leq \alpha_\delta^{\text{CSR}, \mu_1}$.
\end{lemma}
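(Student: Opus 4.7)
The plan is to take $\mu_2$ to be the pushforward of $\mu_1$ under the truncation map $X \mapsto \min\{X, 1\}$. By coupling the two distributions via a common draw $X \sim \mu_1$ (with the $\mu_2$-sample defined as $\min\{X, 1\}$), one can compare the two cost random variables on each realization. The analysis splits on whether the adversary's choice $s$ lies below $1$ or at/above $1$.

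For $s < 1$, the event $\{\min\{X, 1\} \leq s\}$ coincides with $\{X \leq s\}$, and on this event $\min\{X, 1\} = X$, so the random $\mu_2$-cost equals the random $\mu_1$-cost and $\alpha_\delta^{\mu_2}(s) = \alpha_\delta^{\mu_1}(s)$. For $s \geq 1$ the offline optimum equals $1$, so $\alpha_\delta^{\mu_i}(s) = \cvard[\mathrm{cost}_{\mu_i}(s)]$. Under $\mu_2$ the cost is always $\min\{X, 1\} + 1$, independent of $s$, since $\min\{X,1\} \leq 1 \leq s$. Under $\mu_1$ the cost $s \indic_{X > s} + (X+1) \indic_{X \leq s}$ is a pointwise nondecreasing function of $s$ for each fixed $X$ (the jump at $s = X$ is upward) and converges pointwise to $X + 1$ as $s \to \infty$. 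From the inverse-CDF representation $\cvard[Y] = \frac{1}{1-\delta}\int_\delta^1 F_Y^{-1}(p)\,\der p$, $\cvard$ is monotone with respect to a.s.\ domination and admits monotone convergence, so
\[
\sup_{s \geq 1} \cvard[\mathrm{cost}_{\mu_1}(s)] \;=\; \lim_{s \to \infty} \cvard[\mathrm{cost}_{\mu_1}(s)] \;=\; \cvard[X + 1] \;\geq\; \cvard[\min\{X, 1\} + 1] \;=\; \sup_{s \geq 1} \cvard[\mathrm{cost}_{\mu_2}(s)],
\]
where the inequality uses monotonicity of $\cvard$ applied to the pointwise bound $X + 1 \geq \min\{X, 1\} + 1$. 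Combining the two regimes yields $\alpha_\delta^{\mu_2} \leq \alpha_\delta^{\mu_1}$ for every $\delta \in [0, 1]$.

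The principal subtlety is that the pointwise-in-$s$ comparison fails: if $\mu_1$ places mass strictly above $1$, then for some intermediate $s \in [1, \infty)$ the $\mu_2$-cost (which equals $2$ on that mass) strictly exceeds the $\mu_1$-cost (which equals $s < 2$). The argument must therefore be made at the level of the supremum over $s$ rather than pointwise: although $\mu_1$ may temporarily outperform $\mu_2$ at some intermediate $s$, $\mu_1$'s worst case occurs as $s \to \infty$ and is at least as bad as $\mu_2$'s worst case, which is what drives the inequality on the suprema. The $\cvard$-specific ingredients needed are only monotonicity and monotone convergence, both of which fall out of the inverse-CDF integral representation.
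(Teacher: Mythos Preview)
Your proposal is correct and follows essentially the same route as the paper: both define $\mu_2$ as the law of $\min\{X,1\}$, observe that the costs coincide for $s<1$, and for $s\geq 1$ compare $\alpha_\delta^{\mu_2}(s)=1+\cvard[\min\{X,1\}]$ against $\sup_{s\geq 1}\alpha_\delta^{\mu_1}(s)=1+\cvard[X]$ via monotonicity of $\cvard$. Your explicit acknowledgment that the pointwise-in-$s$ comparison fails on $[1,2)$ and must be repaired by passing to the supremum (equivalently, the paper's evaluation at $s=+\infty$) is in fact more careful than the paper's presentation, which handles this step tersely.
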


An important consequence of the preceding lemma is that we can restrict the adversary's decisions to $s \in (0, 1]$, since choosing $s > 1$ will not change the $\dcr$ for any random strategy supported on $[0, 1]$. Thus for $\mu$ supported in $[0, 1]$, we have $\alpha_\delta^{\text{CSR}, \mu} = \sup_{s \in (0, 1]}\alpha_\delta^{\text{CSR}, \mu}(s)$.

\paragraph{Discrete-Time Ski Rental} In the \emph{discrete-time ski rental (DSR)} problem, a player faces a ski season of unknown and adversarially-chosen duration $s \in \N$ and must choose an integer number of days to rent skis before purchasing them; renting for a day costs $1$, and purchasing skis has an integer cost $B \geq 2$. The cost structure is essentially identical to the continuous-time case, except the algorithm's and adversary's decisions are restricted to lie in $\N$: if a player buys skis at the start of day $x \in \N$ and the true season duration is $s \in \N$, their cost will be $s \cdot \indic_{x > s} + (B + x - 1) \cdot \indic_{x \leq s}$. Thus for a random strategy $X \sim \mu$ with support on $\N$, the $\dcr$ is defined as follows:
$$\alpha_\delta^{\text{DSR}(B), \mu} \coloneqq \sup_{s \in \N}\alpha_\delta^{\text{DSR}(B), \mu}(s) \coloneqq \sup_{s \in \N} \frac{\cvard[s \cdot \indic_{X > s} + (B + X - 1) \cdot \indic_{X \leq s}]}{\min\{s, B\}}.$$
As in the continuous-time setting, we denote by $\alpha_\delta^{\text{DSR}(B), *}$ the smallest $\dcr$ of any strategy, and will omit the ``DSR'' from the superscript when it is clear from context, instead writing just $\alpha_\delta^{B, \mu}$. It is well known that $\alpha_1^{B, *} = 2 - \frac{1}{B}$, achieved by deterministically purchasing skis at the start of day $B$, and $\alpha_0^{B, *} = \frac{1}{1 - \left(1 - B^{-1}\right)^{B}}$, which approaches $\alpha_0^{\text{CSR}, *} = \frac{e}{e-1}$ as $B \to \infty$. Following identical reasoning as in Lemma~\ref{lemma:cont_ski_rental_01_support} for the continuous-time setting, we may without loss of generality restrict our focus to strategies $\mu$ with support on $[B]$, and likewise to adversary decisions in $[B]$. Finally, note that the discrete problem is easier than the continuous-time problem, i.e., $\alpha_\delta^{\text{DSR}(B), *} \leq \alpha_\delta^{\text{CSR}, *}$ for all $\delta \in [0, 1]$ and $B \in \N$; this is because we can embed DSR into the continuous setting by restricting the continuous-time adversary to choose season durations $\{\frac{1}{B}, \ldots, \frac{B-1}{B}, 1\}$ and reducing the player's buying cost by $\frac{1}{B}$.

\paragraph{One-Max Search}
In the \emph{one-max search (OMS)} problem, a player faces a sequence of prices $v_t \in [L, U]$ arriving online, with $U \geq L > 0$ known upper and lower bounds on the price sequence; we define the \emph{fluctuation ratio} $\theta = \frac{U}{L}$ as the ratio between these bounds. The player's goal is to sell an indivisible item for the greatest possible price: after observing a price $v_t$, the player can choose to either accept the price and earn profit $v_t$, or to wait and observe the next price. The duration $T \in \N$ of the sequence is \emph{a priori} unknown to the player, and if $T$ elapses and the player has not yet sold the item, they sell it for the smallest possible price $L$ in a compulsory trade. In the deterministic setting, the player aims to minimize their competitive ratio, defined as the worst-case ratio between the price accepted by the player and the optimal price $\vmax = \max_t v_t$:
$$\comprat(\alg) \coloneqq \sup_{(v_1, \ldots, v_T) \in [L, U]^T} \frac{\opt(v_1, \ldots, v_T)}{\alg(v_1, \ldots, v_T)} = \sup_{(v_1, \ldots, v_T) \in [L, U]^T} \frac{\vmax}{\alg(v_1, \ldots, v_T)},$$
with an expectation around $\alg$ in the denominator if the algorithm is randomized.
Note that this definition of competitive ratio differs from that in Definition~\ref{defn:comp_rat} because this is a reward maximization, rather than a loss minimization, problem. Likewise, when discussing the conditional value-at-risk and $\dcr$ in this setting, we will use the reward formulation, which is the expected reward on the worst (i.e., smallest) $(1-\delta)$-fraction of outcomes in the reward distribution \citep{acerbiCoherenceExpectedShortfall2002}:
\begin{equation} \label{eq:integral_form_cvar_reward}
    \dcvar[X] = \sup_{t \in \R}\left\{t - \frac{1}{1-\delta}\E[t - X]^+\right\} = \frac{1}{1-\delta}\int_0^{1-\delta} F_X^{-1}(p)\,\der p.
\end{equation}
While these definitions of $\dcvar$ and $\comprat$ differ from those employed in discussion of the ski rental problem, we will generally not distinguish which version we are using throughout this paper, as it will be clear from context which problem (and hence which version) we are concerned with.

The one-max search problem was first studied in \cite{el-yanivOptimalSearchOneWay2001}, which found that the optimal deterministic competitive ratio is $\sqrt{\theta}$, achieved by a ``reservation price'' or ``threshold'' \citep{sunParetoOptimalLearningAugmentedAlgorithms2021} algorithm that accepts the first price above $\sqrt{LU}$. Randomization improves the competitive ratio exponentially: the optimal randomized competitive ratio is ${1 + W_0\left(\frac{\theta - 1}{e}\right)} = \Theta(\log \theta)$, where $W_0$ is the principal branch of the Lambert $W$ function \citep{el-yanivOptimalSearchOneWay2001,lorenzOptimalAlgorithmsKSearch2009}. In this work, we restrict our focus to the class of \emph{random threshold algorithms} without loss of generality;\footnote{This restriction is made without loss of generality, in the sense that any randomized algorithm for OMS with $\dcr$ $\alpha$ can be approximated by a random threshold policy with $\dcr$ $\alpha + \epsilon$, with $\epsilon$ arbitrarily small; see Appendix~\ref{appendix:random_threshold_restriction} for a full explanation.} such algorithms fix a distribution $\mu$ with support on $[L, U]$, draw a threshold $X \sim \mu$ at random, and accept the first price above $X$, earning profit $L \cdot \indic_{X > \vmax} + X \cdot \indic_{X \leq \vmax}$.\footnote{When the player uses a random threshold algorithm, we may assume that they earn profit exactly $X$ whenever $X \leq \vmax$, since if the adversary (who is unaware of $X$) plays a sequence of prices that increases by $\epsilon$ at every time until reaching $\vmax$, then the player will accept the first price above $X$, which will be at most $X + \epsilon$; sending $\epsilon \to 0$, the player's profit is exactly $X$.}
Thus the $\dcr$ of a threshold algorithm is defined:
\begin{equation} \label{eq:dcr_oms}
    \alpha_\delta^{\mathrm{OMS}(\theta), \mu} \coloneqq \sup_{v \in [L, U]} \alpha_\delta^{\mathrm{OMS}(\theta), \mu}(v) \coloneqq \sup_{v \in [L, U]} \frac{v}{\dcvar[L \cdot \indic_{X > v} + X \cdot \indic_{X \leq v}]},
\end{equation}
where we denote by $\alpha_\delta^{\mathrm{OMS}(\theta), \mu}(v)$ the $\dcr$ of one-max search with fluctuation ratio $\theta$ restricted to price sequences with maximal price $v$, which is wholly determined by the distribution of the random threshold $X$. As in the ski rental problems, we denote by $\alpha_\delta^{\mathrm{OMS}(\theta), *}$ the optimal $\dcr$ for the problem, and we omit ``OMS'' from the superscript when the problem is clear from context.

\section{\hspace{-0.8mm}\texorpdfstring{$\dcvarbold$}{$\dcvar$}-Competitive Continuous-Time Ski Rental: \\ Optimal Algorithm and Lower Bound} \label{section:cont_ski_rental}

As noted in the previous section, the optimal deterministic competitive ratio for continuous-time ski rental is $\alpha_1^* = 2$, and the optimal randomized competitive ratio is $\alpha_0^* = \frac{e}{e-1}$. This immediately motivates the question of what the optimal $\dcr$ is, for arbitrary $\delta \in (0, 1)$: how does $\alpha_\delta^*$ grow as $\delta \toup 1$? And does $\alpha_\delta^*$ strictly improve on the deterministic worst case of 2 whenever $\delta < 1$?

The classical approach for obtaining the optimal randomized algorithm for continuous-time ski rental is to assume that the optimal purchase distribution has a probability density $p$ supported on $[0, 1]$, use this to express the expected cost of the algorithm given any adversary decision, and write out the inequalities that must be satisfied for the algorithm to be $\alpha$-competitive for some constant $\alpha$ \citep{karlinCompetitiveRandomizedAlgorithms1994}:
$$\int_0^s (t + 1)p(t)\,\der t + s\int_s^1 p(t)\,\der t \leq \alpha s \quad\text{for all $s \in [0, 1]$}.$$
The optimal $p$ is found by setting these inequalities to equalities, differentiating with respect to $s$, solving the resulting differential equations, and choosing $\alpha$ to ensure $p$ integrates to $1$. If we attempt to apply this methodology to the problem with the $\dcr$ objective, we are met with two challenges: first, while the assumption that the optimal strategy has a density and the trick of setting the above inequalities to equalities works in the expected cost setting, there is no guarantee that these assumptions can be imposed without loss of generality when the expectation is replaced with $\dcvar$. Second, and more formidably, even if we can restrict to densities, the limits of integration in the $\dcvar$ case will depend on the particular quantile structure induced by $p$. If $X \sim p$ and $F_X(s) \geq 1-\delta$, using the definition of $\dcvar$ as the expected cost on the worst $(1-\delta)$-sized subpopulation of the loss, one can compute
$$\dcvar[s \cdot \indic_{X > s} + (X + 1) \cdot \indic_{X \leq s}] = \int_{F_X^{-1}\left(F_X(s) - (1-\delta)\right)}^s (t+1)p(t)\,\der t,$$
whose lower limit of integration depends on $p$'s quantile structure in a nontrivial way (i.e., it is the smallest point with CDF value equal to $F_X(s) - (1-\delta)$), significantly complicating the formulation of any differential equation we could construct using this expression.

Not all is lost, however: if we instead take inspiration from the formulation of $\dcvar$ in terms of the inverse CDF of the loss distribution, it is possible to formulate the $\dcvar$ of the loss of an \emph{arbitrary} strategy $X$ (i.e., not necessarily one with a density) in terms of the inverse CDF of $X$. We state this result formally in the following lemma, which is proved in Appendix~\ref{appendix:lemma:cvar_cost_integral_general}.

\begin{lemma} \label{lemma:cvar_cost_integral_general}
    Let $X$ be a random variable supported in $[0, 1]$, and fix an adversary decision $s \in (0, 1]$. Then the $\cvar_\delta$ of the cost incurred by the algorithm playing $X$ is
    $$\cvar_\delta[s \cdot \indic_{X > s} + (X + 1) \cdot \indic_{X \leq s}] = \begin{cases} \frac{1}{1-\delta}\!\left[(1 \!- \!\delta \!- \!F_X(s)) s + \!\int_0^{F_X(s)} (1 \!+\! F_X^{-1}(t))\,\der t\right] & \text{if $F_X(s) \leq 1 \!-\! \delta$} \\ \frac{1}{1-\delta}\int_{F_X(s) - (1 - \delta)}^{F_X(s)} (1 + F_X^{-1}(t))\,\der t &\text{otherwise.}\end{cases}$$
\end{lemma}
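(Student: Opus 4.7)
The plan is to derive the distribution of the random loss $Y \coloneqq s \cdot \indic_{X > s} + (X + 1) \cdot \indic_{X \leq s}$ explicitly in terms of $F_X$, and then invoke the inverse-CDF representation $\cvar_\delta[Y] = \frac{1}{1-\delta}\int_\delta^1 F_Y^{-1}(p)\,\der p$ from \eqref{eq:cvar_definition}.

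First I would observe that $Y$ takes the value $s$ on the event $\{X > s\}$ (with probability $1 - F_X(s)$) and the values $X + 1 \in [1, s+1]$ on the event $\{X \leq s\}$. Since $s \in (0,1]$, the atom at $s$ lies weakly below the support of $X+1$, so I can read off
$$F_Y(y) = \begin{cases} 0 & y < s, \\ 1 - F_X(s) & s \leq y < 1, \\ 1 - F_X(s) + F_X(y-1) & 1 \leq y \leq s+1. \end{cases}$$
Inverting via $F_Y^{-1}(p) = \inf\{y : F_Y(y) \geq p\}$, using the convention for $F_X^{-1}$ from the notation section to cover the boundary $p = 0$, yields
$$F_Y^{-1}(p) = \begin{cases} s & 0 \leq p \leq 1 - F_X(s), \\ 1 + F_X^{-1}\!\left(p - (1 - F_X(s))\right) & 1 - F_X(s) < p \leq 1. \end{cases}$$

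Next I would plug this into the integral form of $\cvar_\delta$ and split on whether $\delta$ lies below or above the atom mass $1 - F_X(s)$. When $F_X(s) \leq 1 - \delta$, splitting $\int_\delta^1 = \int_\delta^{1-F_X(s)} + \int_{1-F_X(s)}^1$ produces a constant integrand equal to $s$ on the first piece, contributing $s(1 - F_X(s) - \delta)$, and, after substituting $t = p - (1 - F_X(s))$, the integral $\int_0^{F_X(s)}(1 + F_X^{-1}(t))\,\der t$ on the second. When $F_X(s) > 1 - \delta$, the whole domain $[\delta,1]$ already sits above $1 - F_X(s)$, and the same substitution turns the single integral into $\int_{F_X(s)-(1-\delta)}^{F_X(s)}(1 + F_X^{-1}(t))\,\der t$. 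Dividing through by $1-\delta$ and collecting the two cases yields exactly the stated expression.

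The only real subtlety I anticipate is the treatment of $F_Y^{-1}$ at the jump point $p = 1 - F_X(s)$, where $F_Y$ may be genuinely discontinuous (e.g., if $X$ has no atom at $0$, $F_Y$ jumps from $1 - F_X(s)$ to something strictly larger). This does not affect the Lebesgue integrals since a single point has measure zero, so the piecewise formula above is valid almost everywhere. The degenerate subcases $s = 1$ (in which the middle piece $[s,1)$ of $F_Y$ is empty) and $F_X(s) = 1$ (in which the atom at $s$ vanishes and one is automatically in Case 2 for any $\delta > 0$) are absorbed without modification into the general computation.
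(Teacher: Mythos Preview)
Your proposal is correct and follows essentially the same route as the paper: the paper also computes the CDF and inverse CDF of the cost $C(X,s)$ (factored out as a preliminary lemma), obtains the same piecewise formula $F_{C(X,s)}^{-1}(p) = s$ for $p \leq 1 - F_X(s)$ and $1 + F_X^{-1}(p + F_X(s) - 1)$ otherwise, and then plugs into the integral form of $\cvar_\delta$ with the same case split and substitution.
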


While the integral representation of the algorithm's cost given in Lemma~\ref{lemma:cvar_cost_integral_general} depends on both the CDF $F_X$ and the inverse CDF $F_X^{-1}$, it is possible to remove the CDF when it is continuous and strictly increasing on $[0, 1]$; in this case, for any $s \in [0, 1]$, we may define a corresponding $y = F_X(s)$ and replace $F_X(s)$ with $y$ and $s$ with $F_X^{-1}(y)$ in Lemma~\ref{lemma:cvar_cost_integral_general}'s representation. We will show later that the optimal strategy indeed has such a continuous and strictly increasing $F_X$ (see Lemmas~\ref{lemma:cont_time_optimal_strictly_increasing} and \ref{lemma:cont_time_optimal_continuous} in Appendix~\ref{appendix:theorem:optimal_cont_time_ski_rental}).

As a first application of the representation for the $\dcvar$-cost in Lemma~\ref{lemma:cvar_cost_integral_general}, we construct in the following theorem a family of densities parametrized by $\delta$ whose $\dcr$ we can compute analytically, giving an upper bound on $\alpha_\delta^*$, and in particular showing that $\alpha_\delta^* < 2$ for all $\delta \in [0, 1)$.


\begin{theorem} \label{theorem:cvar_skirental_firstalg}
    Let $p_\delta(x)$ be a probability density defined on the unit interval $[0, 1]$ as
    $$p_\delta(x) = \frac{(1-\delta)(1-e^{\frac{c}{1-\delta}})}{c(e^{\frac{c}{1-\delta}}(x-1)-x)},$$
    with constant $c = -\frac{1 + 2 W_{-1}\left(\nicefrac{-1}{2\sqrt{e}}\right)}{2} \approx 1.25643$, where $W_{-1}$ is the $-1$ branch of the Lambert $W$ function. Then the strategy that buys on day $X \sim p_\delta$ achieves competitive ratio
    $$\alpha_\delta^{p_\delta} = 2 - \frac{1}{e^{\frac{c}{1-\delta}}-1}.$$
    In particular, $\alpha_\delta^{p_\delta} < 2$ for all $\delta \in [0, 1)$.
\end{theorem}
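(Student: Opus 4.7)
The plan is to directly verify the claim by leveraging the closed-form integral expression for the $\cvar_\delta$-cost provided by Lemma \ref{lemma:cvar_cost_integral_general}, using the fact that the proposed density $p_\delta$ admits analytic CDF and inverse CDF.

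First I would check that $p_\delta$ is a valid density on $[0,1]$ (positivity and unit mass), and compute, with $\lambda \coloneqq c/(1-\delta)$, the CDF $F_X(s) = -\frac{1}{\lambda}\ln\bigl(1 - s(1 - e^{-\lambda})\bigr)$ and inverse $F_X^{-1}(y) = \frac{1 - e^{-\lambda y}}{1 - e^{-\lambda}}$, which are continuous and strictly monotone on $[0,1]$. Next I would identify the threshold $s^* \in (0, 1]$ at which $F_X(s^*) = 1 - \delta$, obtaining $s^* = \frac{1 - e^{-c}}{1 - e^{-\lambda}}$, which separates the two branches in Lemma~\ref{lemma:cvar_cost_integral_general}. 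The key algebraic fact I would then extract is that the definition $c = -\frac{1 + 2W_{-1}(-1/(2\sqrt{e}))}{2}$ is equivalent to $e^c = 2c + 1$, i.e., $\frac{e^c - 1}{c} = 2$; this identity is the entire reason the constant $c$ is chosen as it is, and it will collapse the ratios to clean expressions.

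Second, I would handle the regime $s \in (s^*, 1]$ where $F_X(s) > 1 - \delta$. Here the relevant integral $\int_{F_X(s) - (1-\delta)}^{F_X(s)}(1 + F_X^{-1}(t))\,\der t$ evaluates in closed form after using $e^{-\lambda F_X(s)} = 1 - s(1 - e^{-\lambda})$ and $\lambda(1-\delta) = c$; substituting the identity $\frac{e^c-1}{c} = 2$ causes most terms to cancel, leaving
\[
\cvar_\delta[\text{cost}] \;=\; 2s - \frac{1}{e^\lambda - 1},
\qquad\text{so}\qquad
\frac{\cvar_\delta[\text{cost}]}{s} \;=\; 2 - \frac{1}{s(e^\lambda - 1)}.
\]
This ratio is strictly increasing in $s$, so its sup over $(s^*, 1]$ is attained at $s = 1$ and equals $2 - \frac{1}{e^\lambda - 1}$, which is the claimed $\alpha_\delta^{p_\delta}$. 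For $s \in (0, s^*]$, I would apply the other branch of Lemma~\ref{lemma:cvar_cost_integral_general}; after the analogous simplification this yields an expression of the form $\cvar_\delta[\text{cost}]/s = 1 - \tfrac{1}{c} + \tfrac{y}{s(1-\delta)}\bigl[1 + \tfrac{1}{1-e^{-\lambda}} - s\bigr]$ with $y = F_X(s)$, which I would show is bounded above by its value at $s = s^*$ (namely $2 - \frac{1}{s^*(e^\lambda - 1)}$, by continuity of $\cvar_\delta$ across the threshold); since $s^* < 1$ this is strictly smaller than the regime-(b) maximum, so the global sup is exactly $2 - \frac{1}{e^\lambda - 1}$. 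The strict inequality $\alpha_\delta^{p_\delta} < 2$ for $\delta < 1$ is then immediate since $e^{c/(1-\delta)} - 1 > 0$.

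I expect the primary obstacle to lie in the regime $s \in (0, s^*]$: unlike regime (b), the ratio is not of the convenient form $a - b/s$ but depends nonlinearly on $s$ through $F_X(s)$. Establishing monotonicity (or at least the desired upper bound) there requires computing the derivative with respect to $s$, using the ODE $F_X'(s) = p_\delta(s)$ and the defining form of $p_\delta$, and checking that the critical behavior is controlled. However, the routine calculations should simplify substantially thanks to the identity $e^c = 2c + 1$, and continuity at $s^*$ together with the monotone behavior on each side suffices to pin down the supremum.
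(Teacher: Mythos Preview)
Your proposal is correct and follows the same overall route as the paper: compute $F_X$ and $F_X^{-1}$, invoke Lemma~\ref{lemma:cvar_cost_integral_general} in each of its two branches, optimize over the adversary in each branch, and combine. Two points of comparison are worth making.

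First, by substituting the identity $e^c=2c+1$ \emph{early}, you obtain in the regime $F_X(s)>1-\delta$ the clean closed form $\cvar_\delta[\text{cost}]=2s-\frac{1}{e^\lambda-1}$, from which monotonicity of the ratio in $s$ is immediate. The paper carries the raw expression and must differentiate to establish monotonicity there. Second, your continuity-at-$s^*$ trick is a genuine economy: once you know the regime-(a) supremum is attained at $s^*$, continuity gives regime-(a) max $=$ regime-(b) value at $s^*<$ regime-(b) value at $s=1$, and you are done. The paper instead computes the two branch maxima separately and then spends a nontrivial argument (defining $f(\delta;c)$ as their difference and analyzing its zeros and derivative over all $\delta\in[0,1)$) to show the second dominates; your argument bypasses that entirely.

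One caution for the regime $s\in(0,s^*]$, which you correctly flag as the crux: the ratio is \emph{not} monotone in $s$ (or in $y=F_X(s)$). A short calculation shows the derivative in $y$ is negative near $0$ and positive at $y=1-\delta$, so the function dips and then rises, with a single interior critical point that is a minimum. Hence the supremum is either the $s\to 0^+$ limit or the value at $s^*$; checking that the latter is larger (using $e^c=2c+1$) reduces to $4c^2-4c-1>0$, i.e.\ $c>\tfrac{1+\sqrt 2}{2}\approx 1.207$, which holds for the given $c\approx 1.256$. So your ``monotonicity'' fallback to ``at least the desired upper bound'' is exactly right, and this endpoint comparison is the missing step. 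Also note the boundary case $\delta=0$, where $s^*=1$ and the second branch is vacuous; the supremum then comes from the first branch at $s=1$, which by the same computation equals $2-\tfrac{1}{e^c-1}$.
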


We present a proof of this theorem in Appendix~\ref{appendix:theorem:cvar_skirental_firstalg}. Our approach is to compute the inverse CDF corresponding to the proposed density and reformulate the inequalities defining $\alpha_\delta^{p_\delta}$-competitiveness using Lemma~\ref{lemma:cvar_cost_integral_general}; the rest of the work is concerned with computing $\alpha_\delta^{p_\delta}$.

Intuitively, the strategy $p_\delta$ in Theorem~\ref{theorem:cvar_skirental_firstalg} behaves like one might expect a good algorithm for ski rental with the $\dcr$ metric should: it assigns less probability mass to earlier times and more to later times, and as $\delta$ increases, it shifts mass from earlier times to later times. However, the algorithm cannot be optimal, since $\alpha_0^{p_0} = 2 - \frac{1}{e^c - 1} \approx 1.60$, which is larger -- though only slightly -- than the randomized optimal $\frac{e}{e-1} \approx 1.58$. This motivates the question: is it possible to leverage the representation in Lemma~\ref{lemma:cvar_cost_integral_general} to obtain the optimal algorithm for continuous-time ski rental with the $\dcr$ objective?
In the following theorem, we answer this question in the affirmative: in particular, the optimal algorithm's inverse CDF is the solution to a delay differential equation defined on the interval $[0, 1]$.

\begin{theorem} \label{theorem:optimal_cont_time_ski_rental}
    For any $\delta \in [0, 1)$, let $\phi : [0, 1] \to [0, 1]$ be the solution to the delay differential equation
    \begin{align*}
        \phi'(t) &= \frac{1}{\alpha(1-\delta)}\left[\phi(t) - \phi(t - (1-\delta))\right] & \text{for $t \in [1-\delta, 1]$},
    \end{align*}
    with initial condition $\phi(t) = \log\left(1 + \frac{t}{(\alpha-1)(1-\delta)}\right)$ on $t \in [0, 1-\delta]$. Then when $\alpha = \alpha_\delta^*$, $\phi$ is the inverse CDF of the \textbf{unique optimal} strategy for continuous-time ski rental with the $\dcr$ metric.
\end{theorem}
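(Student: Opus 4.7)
The plan is to leverage \Cref{lemma:cvar_cost_integral_general} to rewrite the $\dcr$-competitiveness constraints as integral equations in the inverse CDF $\phi = F_X^{-1}$, then differentiate to obtain the DDE. First, I would establish the two structural reductions cited in the appendix: that the optimal strategy's CDF may be taken to be continuous on $[0,1]$ and strictly increasing on $[0,1]$. With these in hand, I can reparametrize the adversary decision $s \in (0,1]$ via $y = F_X(s) \in [0,1]$, so $s = \phi(y)$, and recast the $\dcr$ of a strategy with inverse CDF $\phi$ achieving competitive ratio $\alpha$ as two integral relations obtained from \Cref{lemma:cvar_cost_integral_general}: for $y \le 1-\delta$,
\begin{equation*}
(1-\delta-y)\phi(y) + \int_0^y \bigl(1 + \phi(t)\bigr)\,\der t \;\le\; \alpha(1-\delta)\phi(y),
\end{equation*}
and for $y \in (1-\delta, 1]$,
\begin{equation*}
\int_{y-(1-\delta)}^{y} \bigl(1 + \phi(t)\bigr)\,\der t \;\le\; \alpha(1-\delta)\phi(y).
\end{equation*}

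The crucial tightness step is to show that at the optimal $\alpha = \alpha_\delta^*$, both of these inequalities must hold with equality for all $y$ in their respective domains. The standard argument (e.g., as used in \cite{karlinCompetitiveRandomizedAlgorithms1994} for expected cost) proceeds by showing that if the inequality were strict on a set of positive measure, the strategy could be perturbed to shift mass and strictly decrease the worst-case ratio, contradicting optimality; some care is required here because $\dcvar$ does not decompose pointwise in $\phi$ the way expectation does, so the perturbation must be designed so that it does not inadvertently worsen the ratio at a different adversary decision. Once tightness is established, differentiating the first equation with respect to $y$ yields $\phi'(y) = \tfrac{1}{(\alpha-1)(1-\delta) + y}$, which integrates using $\phi(0) = 0$ to give the logarithmic initial condition, and differentiating the second yields exactly the claimed delay differential equation.

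It remains to determine $\alpha$ and establish uniqueness. The value of $\alpha$ is pinned down by the boundary condition $\phi(1) = 1$: the DDE with the log initial condition admits a unique continuous solution on $[1-\delta, 1]$ by the method of steps, and this solution depends continuously and monotonically on $\alpha$, so there is a unique $\alpha$ for which $\phi(1) = 1$. To then show that $\alpha$ coincides with $\alpha_\delta^*$, I would argue that any strategy $\mu'$ with $\dcr \le \alpha$ must satisfy both integral inequalities above with $\phi' = F_{\mu'}^{-1}$; the initial-condition ODE on $[0, 1-\delta]$ forces $\phi'$ to agree with $\phi$ there (any slack in the inequality produces a $\phi'$ that grows too slowly, ultimately forcing $\phi'(1) < 1$ which contradicts $\phi'$ being a valid inverse CDF on $[0,1]$), and then an inductive argument along the intervals $[k(1-\delta), (k+1)(1-\delta)] \cap [0,1]$ propagates equality via the DDE, yielding $\phi' \equiv \phi$ and hence uniqueness.

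The main obstacle I expect is the tightness/perturbation argument: unlike the expected-cost setting where the cost functional is linear in the density and any slack can be locally removed by straightforward mass-shifting, $\cvar_\delta$ depends on the inverse CDF in a nonlocal way (the integration limits in \Cref{lemma:cvar_cost_integral_general} shift with $F_X(s)$), so I need a careful perturbation that simultaneously preserves the $\dcr$ bound at all other adversary choices. A related subtle point is justifying that the worst-case $s$ effectively ranges over a dense set of tight constraints, which is what allows the pointwise differentiation step; this likely follows from the continuity and strict monotonicity of $F_X$ combined with upper-semicontinuity of $\alpha_\delta^{\mu}(\cdot)$, but the details will require the structural lemmas cited for the appendix.
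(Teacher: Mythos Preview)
Your high-level route matches the paper's: use \Cref{lemma:cvar_cost_integral_general} to parametrize the $\dcr$ constraints by $\phi = F_X^{-1}$, establish that the constraints bind with equality at the optimum, and differentiate to obtain the logarithmic initial condition on $[0,1-\delta]$ and the DDE on $[1-\delta,1]$. The differentiation you wrote out is correct.

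Two dependency points where your ordering diverges from the paper and could cause trouble. First, you plan to establish continuity and strict monotonicity of $F_X$ \emph{before} tightness, but in the paper these are \emph{consequences} of tightness (Lemmas~\ref{lemma:cont_time_optimal_strictly_increasing} and \ref{lemma:cont_time_optimal_continuous}): an atom at $x$ forces slack just to its left (Lemma~\ref{lemma:atom_implies_slack}), and a flat interval of $F_X$ makes $\alpha_\delta^\mu(\cdot)$ strictly decrease (inequality \eqref{ineq:dcr_bound_x_plus_eps}), each contradicting $\alpha_\delta^{\mu^*}(s)\equiv \alpha_\delta^*$. So the tightness perturbation must be carried out in the $s$-variable first, before you know $F_X$ is a bijection and can reparametrize by $y=F_X(s)$. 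Second, the paper's tightness proof (Lemmas~\ref{lemma:no_large_point_masses}--\ref{lemma:any_s_tight}) requires the a priori bound $\alpha_\delta^* < 2$ from Theorem~\ref{theorem:cvar_skirental_firstalg}; this is what caps any atom below $(1-\delta)(\alpha_\delta^*-1) < 1-\delta$ and guarantees that the mass-shifting perturbations strictly move the $\dcr$ at the targeted $s$---without it the perturbation can be vacuous (e.g., at the deterministic buy-at-$1$ solution). The tightness argument itself is two-staged: first show $s=1$ is tight (Lemma~\ref{lemma:s_1_tight}), then propagate any slack at $s<1$ forward in $\calO\bigl(1/(1-\delta)\bigr)$ mass-shifts until slack appears at $s=1$, contradicting the first stage (Lemma~\ref{lemma:any_s_tight}). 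Once tightness is in hand, your Gronwall/inductive uniqueness argument is unnecessary; uniqueness for the DDE with the given initial data suffices.
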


We prove this theorem in Appendix~\ref{appendix:theorem:optimal_cont_time_ski_rental}. The crux of the proof is a pair of structural lemmas (Lemmas~\ref{lemma:s_1_tight} and \ref{lemma:any_s_tight}) which establish that, for any $\delta \in [0, 1)$, the optimal algorithm $\mu^*$ is indifferent to the adversary's decision, i.e., $\alpha_\delta^{\mu^*}(s) = \alpha_\delta^*$ for all $s \in (0, 1]$. This is analogous to the trick of ``setting the inequalities to equalities'' in the classical version of ski rental \citep{karlinCompetitiveRandomizedAlgorithms1994}, but requires a great deal more care in the continuous-time $\dcvar$ setting to make rigorous. In addition, this result depends on the fact that $\alpha_\delta^* < 2$, which we showed in Theorem~\ref{theorem:cvar_skirental_firstalg}. With this property established, we can apply Lemma~\ref{lemma:cvar_cost_integral_general} to pose a family of integral equations constraining the optimal inverse CDF, which can be transformed to obtain the delay differential equation in Theorem~\ref{theorem:optimal_cont_time_ski_rental}.

Note, however, that the delay differential equation yielding the optimal inverse CDF depends on the optimal $\dcr$ $\alpha_\delta^*$, which we have no analytic form for. Fortunately, the solution $\phi$ to the delay differential equation in Theorem~\ref{theorem:optimal_cont_time_ski_rental} has the property that $\phi(t)$ is strictly decreasing in $\alpha$ for each $t \in (0, 1]$ (see Appendix~\ref{appendix:opt_soln_delay_diffeq_strictly_decreasing_alpha}); since the optimal inverse CDF must have $\phi(1) = 1$ (see Lemma~\ref{lemma:cont_time_optimal_continuous} in the appendix), $\alpha_\delta^*$ is equivalently defined as the unique choice of $\alpha$ for which the solution to the above delay differential equation satisfies $\phi(1) = 1$. We may thus determine $\alpha_\delta^*$ via binary search: given some $\alpha$, we solve the delay differential equation numerically and evaluate $\phi(1)$; if $\phi(1) > 1$, then we decrease $\alpha$, and if $\phi(1) < 1$, we increase $\alpha$. We plot the optimal $\dcr$ obtained via this binary search methodology (with delay differential equations solved numerically in Mathematica) alongside the upper bound from Theorem~\ref{theorem:cvar_skirental_firstalg} in Figure~\ref{fig:cont_time_crs}.

While Theorem~\ref{theorem:optimal_cont_time_ski_rental} gives us a method for computing the optimal strategy and $\dcr$ for continuous-time ski rental, it does not give an analytic form of this solution or metric. An analytic form of $\phi(t)$ can be obtained when $\delta \leq \frac{1}{2}$, though its form is complicated and does not facilitate analysis of the optimal $\dcr$ in this regime (see Appendix~\ref{appendix:cont_time_analytic_solution_small_delta}). We thus conclude this section by providing a lower bound on $\alpha_\delta^*$, which we prove in Appendix~\ref{appendix:theorem:cont_time_ski_lower_bound}.

\begin{theorem} \label{theorem:cont_time_ski_lower_bound}
    For any $\delta \in [0, 1)$, the optimal $\dcr$ $\alpha_\delta^*$ has the lower bound $$\alpha_\delta^* \geq \max\left\{\frac{e}{e-1}, 2-\frac{1}{2^{\left\lfloor\frac{1}{1-\delta}\right\rfloor - 1}}\right\}.$$
\end{theorem}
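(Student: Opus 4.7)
The lower bound decomposes into two pieces. The first, $\alpha_\delta^* \geq \frac{e}{e-1}$, is immediate from the monotonicity of $\dcvar$ in $\delta$: the integral form in \eqref{eq:cvar_definition} yields $\dcvar[X] \geq \cvar_0[X] = \E[X]$ for any random variable $X$, so any strategy's $\dcr$ is at least its classical expected-cost competitive ratio, which is at least $\alpha_0^* = \frac{e}{e-1}$. The interesting piece is the bound $2 - 2^{-(k-1)}$ with $k = \lfloor \frac{1}{1-\delta} \rfloor$, which I will prove by an adversarial argument that probes an arbitrary strategy at $k$ carefully chosen season lengths.

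Given any strategy $\mu$ on $[0, 1]$ with CDF $F$, define $s_i = F^{-1}(i(1-\delta))$ for $i = 0, 1, \ldots, k$; all lie in $[0, 1]$ since $k(1-\delta) \leq 1$, and I may assume $s_i > 0$ for $i \geq 1$ (else $\alpha_\delta^\mu = \infty$ trivially, by examining small adversary decisions). For each $i \in [k]$, since $F(s_i) \geq i(1-\delta) \geq 1-\delta$, Lemma~\ref{lemma:cvar_cost_integral_general} gives
$$\dcvar[\mathrm{cost}(s_i)] = \frac{1}{1-\delta}\int_{F(s_i) - (1-\delta)}^{F(s_i)} (1 + F^{-1}(t))\,\der t.$$
The crucial observation is that $F(s_i) - (1-\delta) \geq (i-1)(1-\delta)$ (regardless of any jumps of $F$ at $s_i$), so $F^{-1}(t) \geq s_{i-1}$ throughout the integration range, which has length exactly $1-\delta$. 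This yields $\dcvar[\mathrm{cost}(s_i)] \geq 1 + s_{i-1}$, giving the recursive inequality $\alpha s_i \geq 1 + s_{i-1}$ where $\alpha \coloneqq \alpha_\delta^\mu$.

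Iterating this recursion with $s_0 \geq 0$ yields $s_k \geq \sum_{j=1}^k \alpha^{-j} = \frac{1 - \alpha^{-k}}{\alpha - 1}$; combined with $s_k \leq 1$, this produces the self-referential inequality $\alpha \geq 2 - \alpha^{-k}$. To extract the claimed constant, I will argue via a short calculus check that within the relevant regime $\alpha \in [\frac{e}{e-1}, 2]$, the function $g(\alpha) \coloneqq \alpha - 2 + \alpha^{-k}$ is strictly increasing (since $g'(\alpha) = 1 - k\alpha^{-k-1} > 0$ whenever $\alpha > k^{1/(k+1)}$, and one checks $k^{1/(k+1)} < \frac{e}{e-1}$ for all $k \geq 1$), so it suffices to verify $g(2 - 2^{-(k-1)}) \leq 0$; this rearranges to $(1 - 2^{-k})^k \geq \frac{1}{2}$, which follows from Bernoulli's inequality together with the elementary fact $k \leq 2^{k-1}$ for all $k \geq 1$.

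The main subtlety I anticipate is invoking Lemma~\ref{lemma:cvar_cost_integral_general} uniformly over all CDFs $F$ (including those with jumps or flats at the chosen $s_i$, which would make $F(s_i)$ strictly exceed $i(1-\delta)$); the argument above sidesteps this by relying only on the length of the integration range and the monotonicity of $F^{-1}$, rather than on any exact identity $F(s_i) = i(1-\delta)$.
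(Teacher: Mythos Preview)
Your proof is correct, but it takes a genuinely different route from the paper's. The paper argues by \emph{pigeonhole on a fixed geometric partition}: it slices $[0,1]$ into the $\lfloor \frac{1}{1-\delta}\rfloor$ intervals with endpoints $0, \frac12, \frac34, \ldots, 1-2^{-n}, 1$, observes that some interval must carry $\mu$-mass at least $1-\delta$, and then directly lower-bounds the $\dcvar$ at that interval's right endpoint using only the ``worst $(1-\delta)$-subpopulation'' interpretation. No appeal to Lemma~\ref{lemma:cvar_cost_integral_general}, no recursion, and no inequality in $\alpha$ to solve---the dyadic endpoints are chosen precisely so that the ratio $(1+i_{k-1})/i_k$ equals $2$ for all interior intervals and $2-2^{-n}$ for the last one.

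Your quantile approach is more analytic: you let the strategy dictate the probe points $s_i = F^{-1}(i(1-\delta))$, invoke Lemma~\ref{lemma:cvar_cost_integral_general} to extract the one-step inequality $\alpha s_i \geq 1+s_{i-1}$, telescope to $\alpha \geq 2 - \alpha^{-k}$, and then close with a short calculus argument. This buys you a cleaner connection to the integral machinery already built for the upper bound, and the recursive structure may port more readily to related problems; the cost is the extra bookkeeping at the end (monotonicity of $g$, Bernoulli) that the paper's fixed-partition choice sidesteps entirely. One small point worth making explicit in your write-up: for $k\in\{1,2\}$ the target $2-2^{-(k-1)}$ lies below $\frac{e}{e-1}$, so your monotonicity argument on $[\frac{e}{e-1},2]$ does not literally apply there---but of course those cases are already covered by the first piece of the $\max$.
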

We plot this lower bound in Figure~\ref{fig:cont_time_crs} alongside the upper bound from Theorem~\ref{theorem:cvar_skirental_firstalg} and the optimal competitive ratio. While this lower bound is vacuous for $\delta < \frac{2}{3}$, in which case it is exactly the expected cost lower bound of $\frac{e}{e-1}$, it has the same asymptotic form as the upper bound in Theorem~\ref{theorem:cvar_skirental_firstalg} as $\delta$ approaches $1$. Thus, Theorems~\ref{theorem:cvar_skirental_firstalg} and \ref{theorem:cont_time_ski_lower_bound} together give us that $\alpha_\delta^* = 2 - \frac{1}{2^{\Theta\left(\frac{1}{1-\delta}\right)}}$, as $\delta \toup 1$.

\begin{figure}
    \centering
    \includegraphics{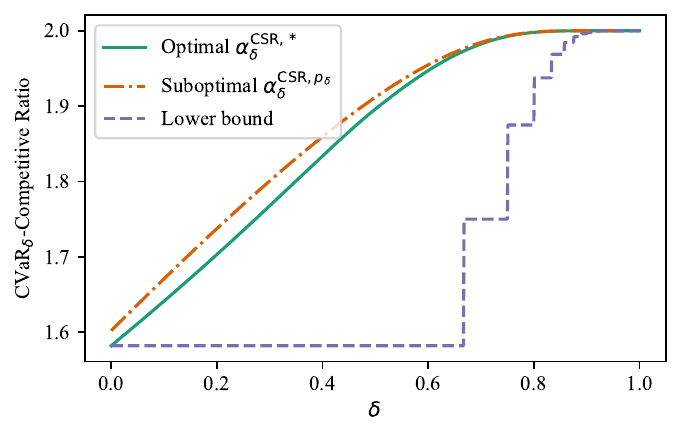}
    \caption{$\dcvar$-competitive ratios from Theorems~\ref{theorem:cvar_skirental_firstalg} (Suboptimal) and \ref{theorem:optimal_cont_time_ski_rental} (Optimal)  and lower bound from Theorem~\ref{theorem:cont_time_ski_lower_bound} for continuous-time ski rental.}
    \label{fig:cont_time_crs}
\end{figure}

\section{\texorpdfstring{$\dcvarbold$}{$\dcvar$}-Competitive Discrete-Time Ski Rental: \\ Phase Transition and Analytic Optimal Algorithm \label{section:discrete_ski_rental}}

Having characterized the optimal algorithm and $\dcr$ for continuous-time ski rental in the previous section, we now turn to the discrete-time version of the problem, and ask: are there any qualitative differences between the optimal algorithm or $\dcr$ for the discrete problem and the continuous-time problem? And are there any regimes of $\delta$ for which we can analytically characterize the optimal algorithm? It turns out that the answer to both of these questions is yes; we begin by showing, in the following theorem, that the optimal $\dcr$ for the discrete-time ski rental problem exhibits a phase transition at $\delta = 1-\Theta(\frac{1}{\log B})$ such that, for $\delta$ beyond this transition, the optimal algorithm is exactly the deterministic algorithm that buys at time $B$.

\begin{theorem} \label{theorem:phase_transition}
    Let $\alpha_\delta^{B, *}$ be the optimal $\dcr$ for discrete-time ski rental with buying cost $B \in \N$. Then $\alpha_\delta^{B, *}$ exhibits a \textbf{phase transition} at $\delta = 1-\Theta(\frac{1}{\log B})$, whereby before this transition, $\alpha_\delta^{B, *}$ strictly improves on the deterministic optimal $\dcr$ of $2 - \frac{1}{B}$, whereas after this transition, $\alpha_\delta^{B, *} = 2-\frac{1}{B}$. Specifically: \vspace{0.5em}
    \begin{enumerate}[itemsep=0.5em,label=(\roman*)]
        \item For all $\delta < 1-\frac{c}{\log(B+1)}$, the optimal $\dcr$ is strictly bounded above by the deterministic optimal $\comprat$: $\alpha_\delta^{B, *} < 2 - \frac{1}{B}$ (where $c = -\frac{1 + 2 W_{-1}\left(\nicefrac{-1}{2\sqrt{e}}\right)}{2} \approx 1.25643$ as in Theorem~\ref{theorem:cvar_skirental_firstalg}).

        \item For all $\delta \geq 1 - \frac{1}{2\lfloor \log_2 B \rfloor + 1}$, the optimal $\dcr$ is exactly the deterministic optimal $\comprat$: ${\alpha_\delta^{B, *} = 2 - \frac{1}{B}}$. Thus, the optimal algorithm for this regime purchases deterministically at time $B$.
    \end{enumerate}
\end{theorem}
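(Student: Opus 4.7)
The theorem has two parts with qualitatively different proofs. Part (i) follows quickly from previously established material: combining Theorem~\ref{theorem:cvar_skirental_firstalg} with the DSR-into-CSR embedding noted at the end of Section~\ref{section:problems_studied} yields $\alpha_\delta^{\text{DSR}(B),*} \leq \alpha_\delta^{\text{CSR},*} \leq 2 - 1/(e^{c/(1-\delta)} - 1)$. The right-hand side is strictly less than $2 - 1/B$ if and only if $e^{c/(1-\delta)} > B + 1$, i.e., $\delta < 1 - c/\log(B+1)$, giving part (i) after a direct algebraic rearrangement.

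Part (ii) is the main technical work. My plan is to argue by contradiction: assume some randomized strategy $\mu$ supported on $[B]$ attains $\alpha_\delta^{B,\mu} < 2 - 1/B$, so $\cvar_\delta[\text{cost}(s)] < (2 - 1/B)\,s$ for every $s \in [B]$. The first step is to translate each such inequality into a constraint on the CDF $F_X$ of $X \sim \mu$, using a discrete analogue of Lemma~\ref{lemma:cvar_cost_integral_general}. When $F_X(s) \leq 1 - \delta$, the $(1-\delta)$-worst cost mass combines all of $\{X \leq s\}$ (each with cost $B + X - 1 \geq B$) together with $(1-\delta) - F_X(s)$ additional mass at cost $s$; lower-bounding the $B + X - 1$ summand by $B \cdot F_X(s)$ and substituting into $\cvar_\delta$ gives $F_X(s) < \frac{(1-\delta)\,s\,(B-1)}{B(B-s)}$. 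When $F_X(s) > 1 - \delta$, the $(1-\delta)$-worst cost mass is drawn entirely from the largest $X$ in $[1,s]$, and lower-bounding $\cvar_\delta$ by the smallest such cost yields $\tau(s,\delta) < 2s - s/B - B + 1$, where $\tau(s,\delta) := \min\{x : \Pr(X \in [x,s]) \geq 1 - \delta\}$. In particular, the first bound applied at $s = B/2$ already forces $F_X(s) < 1 - \delta$ for all $s \leq B/2$, and the second applied at $s = B$ gives $\Pr(X = B) < 1 - \delta$.

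The second step is to chain these per-$s$ constraints across a carefully chosen family of roughly $2 \lfloor \log_2 B \rfloor + 1$ adversary decisions. The most promising schedule I see uses Case 1 at dyadic values $s_j := B - 2^j$ for $j = 0, 1, \ldots, \lfloor \log_2(B/2) \rfloor$ in the upper range (yielding the essentially disjoint containment $\Pr(X \in [B - 2^{j+1}, B - 2^j]) \geq 1 - \delta$ whenever Case 1 is active at $s_j$), paired with the Case 2 bounds at $s = B/2^k$ in the lower range (which bound $F_X(B/2^k)$ by a geometrically decreasing function of $k$). Summing the disjoint Case 1 interval constraints and adding the Case 2 bound on $F_X(B/2)$ should force $F_X(B) < 1$ precisely when $1 - \delta \leq 1/(2 \lfloor \log_2 B \rfloor + 1)$, contradicting $F_X(B) = 1$ and establishing $\alpha_\delta^{B,*} \geq 2 - 1/B$. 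The deterministic strategy that purchases on day $B$ achieves this ratio, so the lower bound is matched and equality holds.

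The main obstacle will be pinning down the exact adversary schedule that yields the sharp constant $2 \lfloor \log_2 B \rfloor + 1$. The individual per-$s$ constraints are each fairly loose -- Case 2 is only meaningful for $s$ well below $B/2$, while Case 1's bound on $\tau$ degenerates as $s$ drops toward $B/2$ -- and it is not immediate that Case 1 even applies at every $s_j = B - 2^j$, since it requires $F_X(s_j) > 1 - \delta$, which is automatic for $j$ small but must be argued as $j$ grows. Reconciling the case transitions at each dyadic scale, simultaneously tracking both endpoints of each interval (which likely contributes the factor of $2$ in the bound), and handling floor/ceiling corrections when $B$ is not a power of $2$ together form the delicate combinatorial heart of the argument.
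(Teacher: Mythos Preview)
Your Part~(i) is exactly the paper's argument, modulo a sign slip: $2 - 1/(e^{c/(1-\delta)}-1) < 2 - 1/B$ is equivalent to $e^{c/(1-\delta)} < B+1$, not $>$; the final condition $\delta < 1 - c/\log(B+1)$ is nonetheless correct.

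For Part~(ii), your approach has a real gap and is considerably more involved than the paper's. The implication you claim --- that $\tau(s_j,\delta) < 2s_j - s_j/B - B + 1$ yields $\Pr(X \in [B-2^{j+1}, B-2^j]) \geq 1-\delta$ --- does not hold: the $\tau$ bound only says that \emph{some} interval of mass $\geq 1-\delta$ ending at $s_j$ has its left endpoint at most $B-2^{j+1}$, so the intervals you extract may extend arbitrarily far left and overlap, and you cannot sum their masses to exceed~$1$. The case split on whether $F_X(s_j) > 1-\delta$ further entangles the argument with the unknown~$\mu$, and your second dyadic schedule $s = B/2^k$ in the lower range is a red herring: all the action happens near~$B$.

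The paper sidesteps all of this by building the cover \emph{before} looking at~$\mu$. With $n = \log_2 B$ and $i_k = (2^k-1)B/2^k$, it covers $[B]$ by the $\lfloor n\rfloor$ intervals $I_k = \{\lceil i_{k-1}\rceil+1,\ldots,\lfloor i_k\rfloor\}$ together with the $\lfloor n\rfloor + 1$ singletons $J_k = \{\lceil i_k\rceil\}$ and $J_{\lfloor n\rfloor+1} = \{B\}$, for $2\lfloor n\rfloor + 1$ sets in total. For each $I_k$ the adversary choice $s = \lfloor i_k\rfloor$ gives every action in $I_k$ deterministic ratio $(B+\lceil i_{k-1}\rceil)/\lfloor i_k\rfloor \geq 2$; for each singleton $\{x\}$ the choice $s = x$ gives ratio $(B+x-1)/x \geq 2 - 1/B$. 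When $1/(1-\delta) \geq 2\lfloor n\rfloor + 1$, pigeonhole forces one set in the cover to carry mass $\geq 1-\delta$, and the matching adversary choice makes the $\dcvar$ of the cost at least $(2-1/B)s$ immediately --- no CDF chaining, no case analysis on~$\mu$. The factor of~$2$ and the~$+1$ you were hunting for come precisely from the singleton sets~$J_k$ needed to patch integer boundaries when $i_k \notin \N$; the paper remarks that for $B$ a power of~$2$ these singletons are redundant and the threshold sharpens to $1 - 1/(\log_2 B + 1)$.
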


We prove this result in Appendix~\ref{appendix:theorem:phase_transition}; the proof of part (i) of follows essentially immediately from our analytic upper bound for the continuous setting (Theorem~\ref{theorem:cvar_skirental_firstalg}), and the proof of part (ii) is adapted from that for the continuous-time lower bound (Theorem~\ref{theorem:cont_time_ski_lower_bound}) in order to handle the discrete nature of the problem. Note that this phase transition behavior is in sharp contrast to the behavior of the optimal $\dcr$ and algorithm in the continuous time setting: whereas in continuous time, $\alpha_\delta^{\mathrm{CSR}, *}$ strictly improves on the deterministic optimal for all $\delta < 1$, in discrete time, $\alpha_\delta^{\mathrm{DSR}(B), *}$ is equal to the deterministic optimal for a non-degenerate interval of $\delta$, implying a limit to the benefit of randomization in the risk-sensitive setting.
In addition, this phase transition result gives an analytic solution for the algorithm with optimal $\dcr$ when $\delta$ is sufficiently large; a natural, complementary question is whether it is possible to obtain an analytic solution for the optimal algorithm with smaller $\delta$. We prove in the next theorem that such a solution can be obtained when $\delta = \calO(\frac{1}{B})$.

\begin{theorem} \label{theorem:discrete_analytic_result}
    Suppose $\delta \leq \left(\frac{B-1}{B}\right)^{B-1}\frac{(1-(1-1/B)^B)^{-1}}{B} = \calO(\frac{1}{B})$. Then the optimal $\dcr$ $\alpha_\delta^{B, *}$ and strategy $\pvec^{B, \delta, *}$ for discrete-time ski rental with buying cost $B$ are
    $$\alpha_\delta^{B, *} = \frac{C-\delta}{1-\delta} \qquad\text{and}\qquad p_i^{B, \delta, *} = \frac{C}{B}\left(1-\frac{1}{B}\right)^{B-i} \quad\text{for all $i \in [B]$},$$
    where $C = \frac{1}{1-(1-1/B)^B}$ is the optimal competitive ratio for the $\delta = 0$ case. In particular, $\pvec^{B, \delta, *}$ is constant as a function of $\delta$, and is identical to the optimal algorithm for the expected cost setting.
\end{theorem}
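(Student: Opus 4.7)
The proof will have two sides: the upper bound (exhibiting $\pvec^{B,\delta,*}$ achieves $\dcr = (C-\delta)/(1-\delta)$), and the matching lower bound.

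For the upper bound, the plan is to invoke the key identity that for a discrete random variable $X$ whose minimum value $m$ satisfies $\Pr[X = m] \ge \delta$, the inverse-CDF formula collapses to $\cvard[X] = (\E[X] - \delta m)/(1-\delta)$, since $F_X^{-1}(p) = m$ on $[0,\delta]$. I would then verify that $\cost_{\pvec^*}(s)$ satisfies this probability condition for every $s \in [B]$ under the hypothesis $\delta \le p_1^* = (C/B)(1-1/B)^{B-1}$: for $s < B$, the minimum cost $s$ occurs exactly when the random buy-day $X > s$, with probability $Q_s^* = \sum_{i>s} p_i^*$, which is minimized at $s = B-1$ where $Q_{B-1}^* = p_B^* = C/B > p_1^* \ge \delta$; for $s = B$, the minimum cost $B$ occurs when $X = 1$ with probability $p_1^* \ge \delta$. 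Combining this with the classical equalization $\E[\cost_{\pvec^*}(s)] = Cs$ for every $s \in [B]$ yields $\cvard[\cost_{\pvec^*}(s)]/OPT(s) = (Cs - \delta s)/((1-\delta)s) = (C-\delta)/(1-\delta)$ uniformly, establishing $\dcr(\pvec^*) = (C-\delta)/(1-\delta)$.

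For the lower bound, the plan is to show that every $\qvec \in \Delta_B$ satisfies $\max_s \dcr(\qvec,s) \ge (C-\delta)/(1-\delta)$. A natural first attempt via LP/Yao duality is doomed: because $\cvard[\cost_\qvec(s)]$ is \emph{concave} in the mixing distribution $\qvec$ (it is the infimum over $t$ of a family of functions linear in $\qvec$), there is in general a strict duality gap. Indeed, swapping $\min_\qvec$ and $\max_a$ via the variational form of $\cvard$ and minimizing over deterministic $\qvec = \evec_i$ collapses the inner problem to the classical $\delta = 0$ Yao game with value $C < (C-\delta)/(1-\delta)$. Instead, my plan is to argue structurally. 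Define $\rho_s(\qvec) = \Pr[\cost_\qvec(s) = OPT(s)]$ ($= Q_s$ for $s < B$ and $= q_1$ for $s = B$). When $\rho_{s^\circ} \ge \delta$ at the index $s^\circ = \arg\max_s \E[\cost_\qvec(s)]/OPT(s)$, the same identity above yields $\dcr(\qvec, s^\circ) = (\E/OPT - \delta)/(1-\delta)$, and Yao's principle ($\max_s \E/OPT \ge C$) completes the argument. The remaining case $\rho_{s^\circ} < \delta$ forces $\qvec$ into a severely imbalanced structure (either $\sum_{i > s^\circ} q_i < \delta$ or $q_1 < \delta$), and one exploits a different adversary choice. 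In particular, at $s = 1$ the cost distribution is only two-valued ($1$ with probability $1-q_1$, $B$ with probability $q_1$), and a direct calculation using the identity (applicable since $1-q_1 > 1-\delta \ge \delta$ for small $\delta$) gives the clean equivalence $\dcr(\qvec, 1) \ge (C-\delta)/(1-\delta) \iff q_1 \ge p_1^* = (C-1)/(B-1)$, where the last equality is a direct manipulation of the closed forms of $C$ and $p_1^*$.

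The main obstacle is completing the lower bound in the ``imbalanced'' case via a clean, unified argument — when both $q_1 < p_1^*$ and $\rho_{s^\circ} < \delta$ hold simultaneously, one must carefully bound $\cvard[\cost_\qvec(s)]$ from below via the inverse-CDF integral at a carefully chosen $s$ (for example $s = B$, where translation invariance of CVaR gives $\cvard[\cost_\qvec(B)] = B - 1 + \cvard[X]$ and reduces the problem to bounding the CVaR of the buy-day distribution itself), and leverage the tight hypothesis $\delta \le p_1^* = \calO(1/B)$ to combine these partial bounds with Yao's principle. This combinatorial case analysis — rather than a slick LP duality — is where the bulk of the technical work lies.
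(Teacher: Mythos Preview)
Your upper bound argument is correct and matches the paper's: under the hypothesis $\delta\le p_1^*$, the minimum cost value at every adversary choice $s$ carries probability at least $\delta$, so $\cvard$ reduces to $(\E-\delta\cdot s)/(1-\delta)$, and the classical equalization $\E[\cost_{\pvec^*}(s)]=Cs$ finishes it.

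The lower bound, however, has a genuine gap that your own calculation exposes. Your Case~1 is fine: when $\rho_{s^\circ}\ge\delta$, the identity $\cvard=(\E-\delta m)/(1-\delta)$ holds with $m=\opt(s^\circ)$, and Yao gives the bound. But in Case~2 the identity becomes only an \emph{upper} bound on $\cvard$ (since $\int_0^\delta F^{-1}(p)\,\der p>\delta\cdot\opt(s)$ when $\rho_s<\delta$), so it gives you nothing. Your proposed fallback to $s=1$ then goes the wrong direction: you correctly derive $\dcr(\qvec,1)\ge(C-\delta)/(1-\delta)\iff q_1\ge p_1^*$, but Case~2 with $s^\circ=B$ is precisely $q_1<\delta\le p_1^*$, so the $s=1$ bound \emph{fails} there. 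The $s=B$ route via $\cvard[X]$ does not close either without further structural input, and ``combine partial bounds with Yao'' is not a plan---Yao only controls $\E$, not $\cvard$, and you have already noted the concavity obstruction to lifting it.

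The paper avoids this case analysis entirely. Instead it proves a structural ``equality principle'' (Lemma~\ref{lemma:equality_principle_discrete}): whenever $\alpha_\delta^{B,*}<2-\tfrac{1}{B}$, \emph{any} optimal $\pvec^\delta$ must satisfy $\alpha_\delta^{B,\pvec^\delta}(i)=\alpha_\delta^{B,*}$ for all $i\in[B]$, established by a mass-shifting argument analogous to the continuous-time Lemmas~\ref{lemma:s_1_tight}--\ref{lemma:any_s_tight}. Continuity of the optimizer $\pvec^\delta$ in $\delta$ then guarantees that for $\delta$ small enough the worst $(1-\delta)$-subpopulation in the maximization form of $\cvard$ is obtained by shaving $\delta$ off a coordinate with unit cost coefficient, yielding the linear system $\mat{M}\pvec^\delta=\big((1-\delta)\alpha_\delta^{B,*}+\delta\big)\bv{1}$. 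This is the \emph{same} system as the classical $\delta=0$ problem, so invertibility of $\mat{M}$ and normalization force $\pvec^\delta=\pvec^0$ and $(1-\delta)\alpha_\delta^{B,*}+\delta=C$. Because $\pvec^\delta$ stays constant, the ``$\delta$ small enough'' condition is exactly $\delta\le p_1^0$. This characterize-the-optimizer route sidesteps the need to lower-bound $\cvard$ for arbitrary $\qvec$ altogether.
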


We prove this result in Appendix~\ref{appendix:theorem:discrete_analytic_result}; the proof follows a similar strategy to the proof characterizing the optimal strategy in the continuous-time setting, and in particular involves the proof of several technical lemmas that, similar to Lemmas~\ref{lemma:s_1_tight} and \ref{lemma:any_s_tight} in the continuous-time setting, characterize the optimal algorithm $\pvec^{B, \delta, *}$ via the adversary's indifference to its chosen ski season duration. As a consequence of this theorem, we can analytically obtain the optimal algorithm for discrete-time ski rental with the $\dcr$ objective whenever $\delta = \calO(\frac{1}{B})$, and the corresponding $\dcr$ is a rational function of $\delta$. We anticipate that extensions of this result may be possible for larger $\delta$, but in general the optimal $\dcr$ will be a piecewise function of $\delta$ whose pieces, including the number of pieces and the intervals they are defined on, will depend on $B$, so we leave the problem of characterizing the $\dcr$ for all $\delta$ and general $B$ to future work. However, if computational results suffice, an adapted form of the binary search approach employed in \cite[Appendix E]{dinitzControllingTailRisk2024} can be used in tandem with a linear programming formulation of the $\cvar$ in order to approximate the optimal solution for any $\delta$ with $\alpha_\delta^{B, *} < 2-\frac{1}{B}$.

\section{\texorpdfstring{$\dcvarbold$}{$\dcvar$}-Competitive One-Max Search: \\Asymptotically Optimal Algorithm and Phase Transition \label{section:one_max_search}}

We now turn our focus to the one-max search problem. As noted in Section~\ref{section:background}, existing results for this problem in the deterministic and randomized settings have established that the optimal deterministic competitive ratio is $\alpha_1^{\theta, *} = \sqrt{\theta}$ and the optimal randomized competitive ratio is $\alpha_0^{\theta, *} = 1 + W_0\left(\frac{\theta - 1}{e}\right) = \Theta(\log \theta)$, where $\theta = \frac{U}{L}$ is the fluctuation ratio. We seek to obtain an upper bound on the $\dcr$ for more general $\delta$; to this end, we prove a lemma that, in an analogous fashion to Lemma~\ref{lemma:cvar_cost_integral_general} for the continuous-time ski rental problem, leverages the integral form of the conditional value-at risk to let us express the $\dcvar$-reward of a particular randomized threshold algorithm $X$ in terms of the inverse CDF of $X$.

\begin{lemma} \label{lemma:one_max_search_cost_integral_representation}
    Let $X$ be a random variable supported in $[L, U]$, and fix an adversary choice of the maximal price $v \in [L, U]$. Then the $\dcvar$ of the profit earned by the algorithm playing the random threshold $X$ is
    \begin{align*}
        \cvar_\delta\left[L \cdot \indic_{X > v} + X \cdot \indic_{X \leq v}\right] &= \begin{cases}L & \text{if $F_X(v) \leq \delta$} \\ \frac{1}{1-\delta}\left[(1 - F_X(v))L + \int_0^{F_X(v) - \delta} F_X^{-1}(t)\,\der t\right] &\text{otherwise.} \end{cases}
    \end{align*}
\end{lemma}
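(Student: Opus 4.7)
The plan is to directly compute the $\dcvar$ of the profit by writing down the CDF of the reward, inverting it, and then applying the integral form of $\dcvar$ from \eqref{eq:integral_form_cvar_reward}. Let $Y \coloneqq L \cdot \indic_{X > v} + X \cdot \indic_{X \leq v}$ denote the profit; since $X \in [L, U]$, we have $Y \in [L, U]$ as well. From \eqref{eq:integral_form_cvar_reward},
$$\cvar_\delta[Y] = \frac{1}{1-\delta}\int_0^{1-\delta} F_Y^{-1}(p)\,\der p,$$
so the entire computation reduces to identifying $F_Y^{-1}$ on the interval $[0, 1-\delta]$.

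First I would determine $F_Y$ by a simple case split on $y$. Note that $\{Y \leq y\}$ is the union of the disjoint events $\{X > v\}$ (on which $Y = L$) and $\{X \leq v,\, X \leq y\}$. This gives $F_Y(y) = 0$ for $y < L$, $F_Y(y) = 1$ for $y \geq v$, and for $y \in [L, v)$,
$$F_Y(y) = (1 - F_X(v)) + F_X(y).$$
In particular $F_Y$ has a point mass of size at least $1 - F_X(v)$ at $L$. Inverting via $F_Y^{-1}(p) = \inf\{y \in [L, U] : F_Y(y) \geq p\}$, and using the convention from the Notation section to handle $p = 0$, yields
$$F_Y^{-1}(p) = \begin{cases} L & \text{if } p \in [0, 1 - F_X(v)], \\ F_X^{-1}\bigl(p - (1 - F_X(v))\bigr) & \text{if } p \in (1 - F_X(v), 1]. \end{cases}$$

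With this identification, the claim follows from a case split on whether $1-\delta \leq 1 - F_X(v)$, i.e., $F_X(v) \leq \delta$. In that case, the entire range $[0, 1-\delta]$ lies in the flat part of $F_Y^{-1}$, so $\cvar_\delta[Y] = L$. Otherwise, I would split the integral at $p = 1 - F_X(v)$ and apply the substitution $q = p - (1 - F_X(v))$ to the second piece, obtaining
$$\cvar_\delta[Y] = \frac{1}{1-\delta}\left[(1 - F_X(v))L + \int_0^{F_X(v) - \delta} F_X^{-1}(q)\,\der q\right],$$
which is the stated expression.

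I do not anticipate a serious obstacle: the only subtlety is correctly tracking the atom of $Y$ at $L$ (which may be strictly larger than $1 - F_X(v)$ if $X$ itself puts mass at $L$), but because the integral representation of $\dcvar$ depends only on $F_Y^{-1}$ on $(0, 1-\delta]$ and this representation is unchanged by such atoms beyond $L$ appearing as a larger flat region, the final formula is unaffected. Once $F_Y^{-1}$ is written down, the rest is a routine change of variables.
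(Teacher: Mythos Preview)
Your proposal is correct and follows essentially the same route as the paper's proof: compute the CDF of the profit, invert it to obtain $F_Y^{-1}$ in terms of $F_X^{-1}$, and then plug into the integral representation \eqref{eq:integral_form_cvar_reward} of $\dcvar$, splitting into the two cases $F_X(v)\le\delta$ and $F_X(v)>\delta$. The paper's derivation is line-for-line the same, and your remark about the possible extra atom at $L$ is a correct observation that does not alter the formula.
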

We prove this lemma in Appendix~\ref{appendix:lemma:one_max_search_cost_integral_representation}. While the representation of the $\dcvar$ of profit in this case differs substantially from the cost representation for ski rental in Lemma~\ref{lemma:cvar_cost_integral_general}, it nonetheless also has a relatively simple parametrization in terms of the inverse CDF of the decision $X$, which will facilitate algorithm design. This is due, in part, to the piecewise linear structure exhibited by the cost/profit in these problems, and we anticipate that extending our results to online problems with more general classes of piecewise linear costs and rewards may be a fruitful avenue for future work.

While the representation of the $\dcvar$-reward in Lemma~\ref{lemma:one_max_search_cost_integral_representation} depends on both the CDF and the inverse CDF of $X$, we can eliminate the CDF so long as the maximal price $v \in F_X^{-1}([0, 1])$. Using this fact, we prove the following theorem, proposing an algorithm and establishing an upper bound on the $\dcr$ for all $\delta \in [0, 1]$. We prove the result in Appendix~\ref{appendix:theorem:one_max_search_upper_bound}.

\begin{theorem} \label{theorem:one_max_search_upper_bound}
Let $\delta \in [0, 1]$, and let $\phi : [0, 1] \to [L, U]$ be the solution to the following delay differential equation:
\begin{equation} \label{eq:one_max_search_delay_diffeq}
    \phi'(t) = \frac{\alpha_\delta^{\theta}}{1-\delta}\left[\phi(t-\delta) - L\right] \qquad\text{for $t \in [\delta, 1]$},
\end{equation}
with initial condition $\phi(t) = \alpha_\delta^{\theta} L$ on $t \in [0, \delta]$, where $\alpha_\delta^{\theta}$ is chosen such that $\phi(1) = U$ when $\delta < 1$, and $\alpha_\delta^\theta \coloneqq \sqrt{\theta}$ when $\delta = 1$. Then $\phi$ is the inverse CDF of a random threshold algorithm for one-max search with $\dcr$ $\alpha_\delta^\theta$.
Moreover, $\alpha_\delta^\theta$ is bounded above by the unique positive solution $\overline{r}(\delta)$ to the equation
\begin{equation}\label{theorem:one_max_dcr_ub_root}
    (\overline{r}(\delta) - 1)\left(1 + \frac{\overline{r}(\delta)}{\overline{n}(\delta)}\right)^{\overline{n}(\delta)} = \theta - 1,
\end{equation}
where $\overline{n}(\delta) = \max\left\{1, \left\lfloor \left(\lfloor \delta^{-1}\rfloor - 1\right)/2 \right\rfloor\right\}$, with the $\delta = 0$ case defined by taking $\delta \todown 0$. In particular,
\begin{equation} \label{theorem:one_max_dcr_ub}
    \alpha_\delta^{\theta} \leq \begin{cases} 1 + W_0\left(\frac{\theta - 1}{e}\right) + \calO(\delta) & \text{as $\delta \todown 0$} \\ \sqrt{\theta} &\text{when $\delta > \frac{1}{5}$,} \end{cases}
\end{equation}
with the equality $\alpha_\delta^{\theta} = \sqrt{\theta}$ when $\delta \geq \frac{1}{2}$, where the asymptotic notation omits dependence on $\theta$.
\end{theorem}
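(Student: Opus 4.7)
The plan is to derive the delay differential equation \eqref{eq:one_max_search_delay_diffeq} as the equation governing the inverse CDF of an algorithm whose $\dcr$-defining inequality $\cvar_\delta[\cdot] \geq v/\alpha$ is tight for every adversary choice $v \in [L, U]$. Applying Lemma~\ref{lemma:one_max_search_cost_integral_representation}, when $F_X(v) \leq \delta$ the inequality reduces to $v \leq \alpha L$, which is enforced by the initial condition $\phi(t) = \alpha L$ on $[0,\delta]$ (equivalently, by placing an atom of mass $\delta$ at $\alpha L$). When $F_X(v) > \delta$, substituting $v = \phi(y)$ with $y = F_X(v)$ and setting the inequality to equality yields
\begin{equation*}
    (1-y)L + \int_0^{y-\delta}\phi(t)\,\der t \;=\; \frac{(1-\delta)\phi(y)}{\alpha},
\end{equation*}
whose $y$-derivative is exactly the DDE, and whose value at $y = \delta$ recovers $\phi(\delta) = \alpha L$, consistent with the initial condition.

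\textbf{Existence of $\alpha_\delta^\theta$ and $\dcr$ calculation.} Next I would show via the method of steps that for each $\alpha > 1$ the DDE admits a unique, continuous, strictly increasing solution $\phi_\alpha$ on $[\delta, 1]$, and that $\phi_\alpha(1)$ is continuous and strictly increasing in $\alpha$, with $\phi_1(1) = L$ and $\phi_\alpha(1) \to \infty$ as $\alpha \to \infty$. The intermediate value theorem then yields a unique $\alpha_\delta^\theta$ with $\phi_{\alpha_\delta^\theta}(1) = U$; since the DDE was derived by setting the $\dcr$-defining inequality to equality, the resulting algorithm has $\cvar_\delta[\cdot] = v/\alpha_\delta^\theta$ on $v \in [\alpha_\delta^\theta L, U]$ and $\cvar_\delta[\cdot] = L \geq v/\alpha_\delta^\theta$ on $v \in [L, \alpha_\delta^\theta L]$, so its $\dcr$ equals $\alpha_\delta^\theta$ exactly.

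\textbf{Upper bound $\alpha_\delta^\theta \leq \overline{r}(\delta)$.} The main technical step is proving the lower bound $\phi_\alpha(1)/L - 1 \geq (\alpha-1)(1+\alpha/\overline{n}(\delta))^{\overline{n}(\delta)}$ for every $\alpha \geq 1$. Since $(\overline{r}-1)(1+\overline{r}/\overline{n})^{\overline{n}} = \theta - 1$ defines $\overline{r}(\delta)$ and $\phi_\alpha(1)$ is monotone in $\alpha$, this lower bound combined with $\phi_{\alpha_\delta^\theta}(1) = U = \theta L$ yields $\alpha_\delta^\theta \leq \overline{r}(\delta)$. After the normalization $F(y) = (\phi_\alpha(y) - L)/((\alpha-1)L)$, the DDE becomes $F'(y) = cF(y-\delta)$ with $c = \alpha/(1-\delta)$ and $F \equiv 1$ on $[0,\delta]$, and the target reduces to $F(1) \geq (1 + \alpha/\overline{n}(\delta))^{\overline{n}(\delta)}$. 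I would prove this by iterating the integral form $F(y) = 1 + c\int_0^{y-\delta}F(s)\,\der s$ on successive intervals of length $\delta$ via the method of steps until reaching $y = 1$. The hard part will be obtaining the correct multiplicative factor $(1+\alpha/\overline{n})$ per effective step: a naive Gronwall-type comparison with $F' = cF$ yields only the upper bound $F \leq e^{c(y-\delta)}$, which goes the wrong way, while iterating the integral representation na\"ively produces polynomial lower bounds with unwieldy coefficients. I expect the factor of two in $\overline{n}(\delta)$ to arise from pairing consecutive integration steps so that their contributions telescope into a clean binomial-type product.

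\textbf{Asymptotic consequences.} The two bounds in \eqref{theorem:one_max_dcr_ub} then follow from analyzing $\overline{r}(\delta)$. As $\delta \todown 0$, $\overline{n}(\delta) \to \infty$ and $(1+r/\overline{n})^{\overline{n}} = e^{r}(1 + \calO(1/\overline{n}))$ uniformly on bounded sets, so the defining equation simplifies to $(\overline{r}-1)e^{\overline{r}-1} = (\theta-1)/e + \calO(\delta)$; by the definition of the Lambert $W$ function and the implicit function theorem, $\overline{r}(\delta) = 1 + W_0((\theta-1)/e) + \calO(\delta)$. For $\delta > \tfrac{1}{5}$, direct arithmetic shows $\overline{n}(\delta) = 1$, so the defining equation collapses to $\overline{r}^2 = \theta$ and $\overline{r}(\delta) = \sqrt{\theta}$. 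Finally, for $\delta \geq \tfrac{1}{2}$ the DDE itself degenerates: for all $t \in [\delta, 1]$ we have $t - \delta \in [0, 1-\delta] \subseteq [0, \delta]$, so $\phi(t-\delta) = \alpha L$ is constant and $\phi$ is affine on $[\delta,1]$ with $\phi(1) = \alpha^2 L$; solving $\phi(1) = U$ then yields $\alpha_\delta^\theta = \sqrt{\theta}$ exactly.
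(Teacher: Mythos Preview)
Your overall architecture matches the paper's: derive the DDE from the tightness condition via Lemma~\ref{lemma:one_max_search_cost_integral_representation}, verify that the resulting $\phi$ yields $\dcr$ exactly $\alpha_\delta^\theta$ by splitting on $v_{\max} \lessgtr \alpha_\delta^\theta L$, then lower-bound $\phi(1)$ to extract the implicit bound \eqref{theorem:one_max_dcr_ub_root}, and finally read off the asymptotics. Your treatment of the $\delta \geq \tfrac12$ equality, the $\delta > \tfrac15$ case, and the $\delta \todown 0$ asymptotics of $\overline{r}(\delta)$ is essentially the paper's.

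The genuine gap is in your third paragraph. You correctly isolate the target $F(1) \geq (1+\alpha/\overline{n})^{\overline{n}}$ but do not have a mechanism for proving it, and your guess that the factor of two comes from ``pairing consecutive integration steps so that their contributions telescope'' is not how the argument works. The paper first solves the DDE explicitly by the method of steps, obtaining (in your normalization)
\[
F(1) \;=\; \sum_{j=0}^{k} \frac{\alpha^j (1-j\delta)^j}{(1-\delta)^j\, j!}, \qquad k = \lfloor \delta^{-1} \rfloor,
\]
then uses $\delta \leq 1/k$ to replace each factor $(1-j\delta)/(1-\delta)$ by $(k-j)/(k-1)$, and \emph{discards the upper half of the terms}, keeping only $j \leq \lfloor (k-1)/2 \rfloor \eqqcolon \overline{n}$. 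The remaining work is the termwise comparison
\[
\left(\frac{k-j}{k-1}\right)^{j}\frac{1}{j!} \;\geq\; \binom{\overline{n}}{j}\,\overline{n}^{-j}, \qquad 0 \leq j \leq \overline{n},
\]
which the paper proves by showing the ratio $R(j)$ of the right side to the left satisfies $R(j)/R(j-1) \leq 1$ via a Bernoulli-type inequality; the binomial theorem then collapses the truncated sum to $(1+\alpha/\overline{n})^{\overline{n}}$. So the factor of two is the price paid for truncating the series to a range of $j$ on which the coefficient comparison can be pushed through, not a telescoping or step-pairing artifact. Without the explicit series representation and this coefficient inequality, your proposal does not yet contain a proof of the upper bound.
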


We make three brief remarks concerning this result. First, note that the proposed algorithm merely gives an upper bound on the $\dcr$ of one-max search and might not be optimal, although its $\dcr$ matches the optimal randomized and deterministic algorithms in the $\delta = 0$ and $1$ cases. Second, when $\delta \in [0, 1)$, it is possible to analytically solve the delay differential equation \eqref{eq:one_max_search_delay_diffeq} by integrating step-by-step (see Appendix~\ref{appendix:theorem:one_max_search_upper_bound}):
\begin{equation} \label{theorem:one_max_analytic_phi}
    \phi(t) = L + (\alpha_\delta^\theta - 1)L\sum_{j=0}^\infty \frac{(\alpha_\delta^\theta)^j ([t-j\delta]^+)^j}{(1-\delta)^j j!}.
\end{equation}
When $\delta = 0$, \eqref{theorem:one_max_analytic_phi} simplifies to $\phi(t) = L + (\alpha_\delta^\theta - 1)L e^{\alpha_\delta^\theta t}$, the optimal randomized algorithm \citep{sunCompetitiveAlgorithmsOnline2020}. On the other hand, when $\delta \in (0, 1)$, all terms with $j \geq \lceil \delta^{-1}\rceil$ disappear for $t \in [0, 1]$, so $\phi(t)$ is a continuous, piecewise polynomial function. In either case, $\phi(1)$ is strictly increasing in $\alpha_\delta^\theta > 0$, so the $\dcr$ $\alpha_\delta^\theta$ can be obtained numerically by solving $\phi(1) = U$ via standard root-finding methods.

Finally, when $\delta \geq \frac{1}{2}$, Theorem~\ref{theorem:one_max_search_upper_bound} asserts that $\alpha_\delta^\theta = \sqrt{\theta}$, which is identical to the optimal deterministic competitive ratio. This raises the question: can \emph{any} algorithm improve upon the deterministic bound when $\delta \geq \frac{1}{2}$, or is this behavior reflective of a phase transition at $\delta = \frac{1}{2}$ such that randomness cannot improve performance when $\delta$ is greater than this level? In the following result, which we prove in Appendix~\ref{appendix:theorem:oms_lower_bound} by leveraging connections with the $k$-max search problem \citep{lorenzOptimalAlgorithmsKSearch2009}, we provide a lower bound establishing that the latter case is true, and that moreover, the algorithm in Theorem~\ref{theorem:one_max_search_upper_bound} is asymptotically optimal for small $\delta$.

\begin{theorem} \label{theorem:oms_lower_bound}
    Fix $\delta \in [0, 1]$, let $\alpha_\delta^{\theta, *}$ be the optimal $\dcr$ for one-max search, and define $\underline{r}(\delta)$ to be the unique positive solution to the equation
    \begin{equation} \label{eq:one_max_dcr_root_lb}
        (\underline{r}(\delta)-1)\left(1+\frac{\underline{r}(\delta)}{\underline{n}(\delta)}\right)^{\underline{n}(\delta)} = \theta - 1,
    \end{equation}
    where $\underline{n}(\delta) = \max\left\{1, \left\lceil \delta^{-1} \right\rceil - 1\right\}$, with the $\delta = 0$ case defined by taking $\delta \todown 0$. Then $\alpha_\delta^{\theta, *} \geq \underline{r}(\delta)$; in particular,
    \begin{equation} \label{eq:one_max_dcr_lb}
        \alpha_\delta^{\theta, *} \geq \begin{cases} 1 + W_0\left(\frac{\theta - 1}{e}\right) + \Omega(\delta) & \text{as $\delta \todown 0$} \\ \sqrt{\theta} & \text{when $\delta \geq \frac{1}{2}$,} \end{cases}
    \end{equation}
    where the asymptotic notation omits dependence on $\theta$.
\end{theorem}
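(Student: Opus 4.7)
My plan is to convert the $\dcr$ constraint into an integral inequality on the inverse CDF $\phi = F_X^{-1}$ using Lemma~\ref{lemma:one_max_search_cost_integral_representation}, discretize $\phi$ on an evenly-spaced grid with spacing $\delta$, and solve the resulting linear recurrence to recover exactly the $k$-max search lower bound formula $(r-1)(1+r/k)^k = \theta - 1$ with $k = \underline{n}(\delta)$. Intuitively, the $\cvar_\delta$ of a random-threshold algorithm's reward is dominated by an average over the smallest $\underline{n}(\delta)$ quantiles of $X$; quantile discretization makes the algorithm behave like a deterministic $k$-max strategy, whose lower bound from Lorenz et al.~\cite{lorenzOptimalAlgorithmsKSearch2009} produces $\underline{r}(\delta)$.

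Concretely, Lemma~\ref{lemma:one_max_search_cost_integral_representation} applied to $\cvar_\delta[\text{profit}(v)] \geq v/\alpha$ yields, after parametrizing by $t = F_X(v)$ with right-continuous $\phi$, the integral inequality $\phi(t) \leq \tfrac{\alpha}{1-\delta}\bigl[(1-t)L + \int_0^{t-\delta}\phi(s)\,\der s\bigr]$ for $t > \delta$, along with $\phi(t) \leq \alpha L$ for $t \leq \delta$ and, from the $v = U$ constraint, $\int_0^{1-\delta}\phi(s)\,\der s \geq U(1-\delta)/\alpha$. For $\delta = 1/(n+1)$ (so $\underline{n}(\delta) = n$), I would evaluate at grid points $t_j = j\delta$ for $j = 1, \ldots, n+1$, let $a_j = \phi(t_j)$, and use the right-endpoint Riemann upper bound $\int_0^{t_{j-1}}\phi(s)\,\der s \leq \delta\sum_{i=1}^{j-1} a_i$ (valid since $\phi$ is nondecreasing). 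This reduces the integral inequality to the recurrence $a_j \leq \tfrac{\alpha}{n}\bigl[(n+1-j)L + \sum_{i=1}^{j-1} a_i\bigr]$ for $j \geq 2$ with $a_1 \leq \alpha L$, whose closed-form solution by standard particular-plus-homogeneous techniques (with ratio $\rho = 1 + \alpha/n$) is $a_j \leq L[(\alpha-1)\rho^{j-1} + 1]$.

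I would then combine the Riemann upper bound $\int_0^{1-\delta}\phi(s)\,\der s \leq \delta\sum_{j=1}^n a_j$ with the $v = U$ lower bound to obtain $\sum_{j=1}^n a_j \geq Un/\alpha$; plugging in the closed-form upper bound on $a_j$ and simplifying via the geometric series produces the key inequality $(\alpha-1)(1+\alpha/n)^n \geq \theta - 1$, which forces $\alpha \geq \underline{r}(\delta)$ since the left-hand side is strictly increasing in $\alpha$. An analogous discretization at $t_j = j/n$ handles $\delta = 1/n$ (where $\underline{n} = n-1$). To extend to all $\delta \in [0,1]$, I would invoke the monotonicity of $\cvar_\delta$ of a reward (nonincreasing in $\delta$), which makes $\alpha_\delta^{\theta,*}$ nondecreasing in $\delta$; combined with the piecewise-constant structure of $\underline{r}(\delta)$ on each interval $[1/(n+1), 1/n)$ and $[1/n, 1/(n-1))$, this extends the discrete-$\delta$ bounds throughout. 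The asymptotic claims then follow by specialization: $\delta \geq 1/2$ gives $\underline{n} = 1$ and hence $r = \sqrt{\theta}$, while as $\delta \todown 0$, the expansion $(1+r/n)^n = e^r(1 - r^2/(2n) + O(1/n^2))$ combined with the implicit function theorem around $r_0 = 1 + W_0((\theta-1)/e)$ yields $\underline{r}(\delta) - r_0 = \Omega(\delta)$.

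The main technical obstacle will be handling the discretization rigorously when $\phi$ is discontinuous (i.e., when $F_X$ has plateaus from gaps in the support). The right interpretation is to treat $\phi$ as right-continuous so that the integral inequality at each grid point captures the adversary's most damaging $v$ in the corresponding CDF range, keeping the Riemann upper bound valid. A secondary subtlety is that, rather than assuming $\phi(1) = U$ without loss of generality, the argument couples the Riemann upper bound on the full integral with the $v = U$ constraint directly, which sidesteps any normalization issues on the essential supremum of the threshold distribution.
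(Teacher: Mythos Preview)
Your approach is correct and takes a different, though closely related, route from the paper's. The paper fixes the $k$-max thresholds $p_i = L\bigl[1 + (r-1)(1+r/k)^{\,i-1}\bigr]$ in advance, compares the algorithm's $(k+1)$-quantiles $q_i = \phi(i/(k+1))$ to the $p_i$, and performs a case split: if $q_{i^*} > p_{i^*}$ for some first index $i^*$, the adversary plays $v = p_{i^*}$; if all $q_i \leq p_i$, the adversary plays $v = \theta$; in either case the $\dcr$ is shown to be at least $r$ by directly upper-bounding the $\dcvar$ via the quantile ordering. You instead pull the integral inequality $\phi(t) \leq \frac{\alpha}{1-\delta}\bigl[(1-t)L + \int_0^{t-\delta}\phi(s)\,\der s\bigr]$ out of Lemma~\ref{lemma:one_max_search_cost_integral_representation}, discretize on the grid $t_j = j\delta$, solve the linear recurrence to obtain $a_j \leq L\bigl[1 + (\alpha-1)(1+\alpha/n)^{\,j-1}\bigr]$---which are exactly the $p_j$---and close with the $v = U$ constraint. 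Your route avoids the case analysis and exposes a clean duality with Theorem~\ref{theorem:one_max_search_upper_bound}, where the same integral relation holds with equality; the paper's route is more constructive, exhibiting the adversarial price explicitly. The monotonicity extension to non-reciprocal $\delta$ and the asymptotic claims are handled the same way in both (the paper packages the $\delta\todown 0$ expansion as a separate lemma).

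On the discontinuity concern you flag: it is not actually an obstacle, and your proposed fix can be made precise as follows. For any $t > \delta$ and any $v < \phi(t)$ one has $F_X(v) < t$. If $F_X(v) > \delta$, the competitiveness constraint at $v$ plus monotonicity of $u \mapsto (1-u)L + \int_0^{u-\delta}\phi(s)\,\der s$ (its derivative is $\phi(u-\delta)-L \geq 0$) lets you replace $F_X(v)$ by $t$ and then send $v\uparrow\phi(t)$. If instead $F_X(v) \leq \delta$ for all $v < \phi(t)$, then $\phi(t) \leq \alpha L$, and the right-hand side is always at least $\frac{\alpha}{1-\delta}\bigl[(1-t)L + (t-\delta)L\bigr] = \alpha L$, so the inequality holds anyway. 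Hence your integral inequality is valid for every $t > \delta$ regardless of atoms or plateaus in $F_X$, and the rest of your argument goes through.
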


Thus, in contrast to the continuous-time ski rental problem, which exhibited no phase transition in its competitive ratio, and the discrete-time ski rental problem, which had a phase transition that shrank as $B \to \infty$, Theorem~\ref{theorem:oms_lower_bound} establishes that the one-max search problem has a phase transition at $\delta = \frac{1}{2}$ that remains present even as $\theta \to \infty$. As such, there is a significant limit to the power of randomization in risk-sensitive one-max search. In addition, note that the form of the implicit lower bound \eqref{eq:one_max_dcr_root_lb} matches that of the upper bound \eqref{theorem:one_max_dcr_ub_root}, aside from the definitions of the functions $\underline{n}(\delta)$ and $\overline{n}(\delta)$. This suggests that our upper and lower bounds are tight up to the choice of the function $n(\delta) = \Theta(\delta^{-1})$. In particular, this tightness is made clear in the analytic bounds \eqref{theorem:one_max_dcr_ub} and \eqref{eq:one_max_dcr_lb} in the $\delta \todown 0$ limit, which indicate that our algorithm is asymptotically optimal when $\delta$ is small. We plot the numerically obtained $\dcr$ $\alpha_\delta^\theta$ together with the upper and lower bounds \eqref{theorem:one_max_dcr_ub_root} and \eqref{eq:one_max_dcr_root_lb} in Figure~\ref{fig:cont_time_crs_one_max} when $L = 1$ and $U = 100$, which confirms the near-tightness of the bounds and the phase transition at $\delta = \frac{1}{2}$.

\begin{figure}
    \centering
    \includegraphics{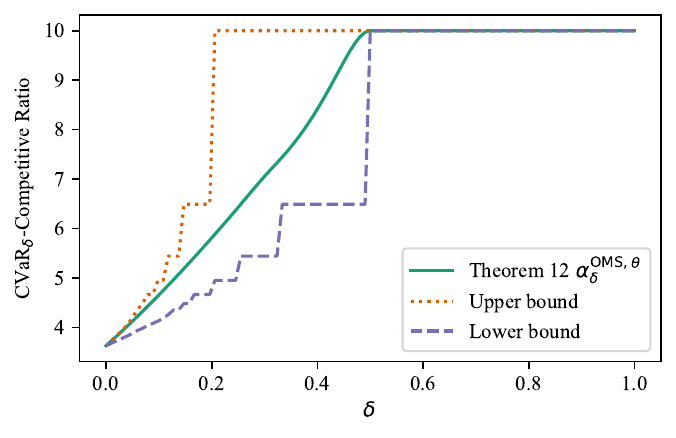}
    \caption{$\dcvar$-competitive ratio of the algorithm in Theorem~\ref{theorem:one_max_search_upper_bound} along with the upper bound \eqref{theorem:one_max_dcr_ub_root} and lower bound \eqref{eq:one_max_dcr_root_lb} for one-max search.}
    \label{fig:cont_time_crs_one_max}
\end{figure}

\section{Conclusion}
In this work, we considered the problem of designing \emph{risk-sensitive} online algorithms, with performance evaluated via a competitive ratio metric -- the $\dcr$ -- defined using the conditional value-at-risk of an algorithm's cost. We considered the continuous- and discrete-time ski rental problems as well as the one-max search problem, obtaining optimal (and suboptimal) algorithms, lower bounds, and analytic characterizations of phase transitions in the optimal $\dcr$ for discrete-time ski rental and one-max search. Our work motivates many interesting new directions, including (a) obtaining an exact or asymptotic analytic form of the optimal $\dcr$ for discrete-time ski rental and one-max search across all $\delta$, (b) the design and analysis of risk-sensitive algorithms for online problems with more general classes of cost and reward functions, or more complex problems such as metrical task systems, (c) exploring the use of alternative risk measures in place of the conditional value-at-risk, and (d) exploring potential connections between risk-sensitive online algorithms and robustness to distribution shift in learning-augmented online algorithms, drawing motivation from the framing of $\dcvar$ in terms of distribution shift.

\section{Acknowledgments}

The authors acknowledge support from an NSF Graduate Research Fellowship (DGE-2139433), NSF Grants CNS-2146814, CPS-2136197, CNS-2106403, and NGSDI-2105648, the Resnick Sustainability Institute, and a Caltech S2I Grant.

\bibliographystyle{plainnat}
\bibliography{main_arxiv}

\newpage

\appendix

\section{Additional Details for Section~\ref{section:background}}
\subsection{Proof of Lemma~\ref{lemma:cont_ski_rental_01_support} \label{appendix:lemma:cont_ski_rental_01_support}}
\begin{proof}[noname]
    Let $X_1 \sim \mu_1$, and define another random variable $X_2 \sim \mu_2$ with support on $[0, 1]$ as
    $$X_2 = \begin{cases}
        X_1 & \text{if $X_1 \leq 1$} \\
        1 & \text{otherwise.}
    \end{cases}$$
    Suppose $s \in [0, 1)$. Clearly $\indic_{X_1 > s} = \indic_{X_2 > s}$ and $\indic_{X_1 \leq s} = \indic_{X_2 \leq s}$, and since $X_2 = X_1$ when $X_1 \leq 1$, $X_1 \cdot \indic_{X_1 \leq s} = X_2 \cdot \indic_{X_2 \leq s}$ for $s < 1$. Thus
    \begin{align*}
        \alpha_\delta^{\mu_1}(s) &= \frac{\cvard[s \cdot \indic_{X_1 > s} + (X_1 + 1) \cdot \indic_{X_1 \leq s}]}{\min\{s, 1\}} \\
        &= \frac{\cvard[s \cdot \indic_{X_2 > s} + (X_2 + 1) \cdot \indic_{X_2 \leq s}]}{\min\{s, 1\}} \\
        &= \alpha_\delta^{\mu_2}(s). \tageq\label{eq:support_01_case1}
    \end{align*}

    Now, consider the case $s \in [1, +\infty]$. By construction, $X_2 \leq 1$, so $\indic_{X_2 > s} = 0$ and $\indic_{X_2 \leq s} = 1$, and thus
    \begin{align*}
        \alpha_\delta^{\mu_2}(s) &= \frac{\cvard[s \cdot \indic_{X_2 > s} + (X_2 + 1) \cdot \indic_{X_2 \leq s}]}{\min\{s, 1\}} \tageq\label{eq:support_01_translation_invariance} \\
        &= 1 + \cvard[X_2] \tageq\label{eq:support_01_case2_1}
    \end{align*}
    where \eqref{eq:support_01_translation_invariance} uses translation invariance of $\cvar$. On the other hand, we have
    \begin{align*}
        \alpha_\delta^{\mu_1}(s) &= \frac{\cvard[s \cdot \indic_{X_1 > s} + (X_1 + 1) \cdot \indic_{X_1 \leq s}]}{\min\{s, 1\}} \\
        &\geq \cvard[1 \cdot \indic_{X_1 > s} + (X_1 + 1) \cdot \indic_{X_1 \leq s}] \tageq\label{eq:support_01_monotonicity_1} \\
        &= 1 + \cvard[X_1 \cdot \indic_{X_1 \leq s}] \tageq\label{eq:support_01_translation_invariance_2} \\
        &\geq 1 + \cvard[X_2] \tageq\label{eq:support_01_monotonicity_2},
    \end{align*}
    where \eqref{eq:support_01_monotonicity_1} follows by monotonicity of $\cvar$, \eqref{eq:support_01_translation_invariance_2} follows by translation invariance, and \eqref{eq:support_01_monotonicity_2} follows by monotonicity when $s = +\infty$ (in which case $\indic_{X_1 \leq s} = 1$). Combining \eqref{eq:support_01_case1}, \eqref{eq:support_01_case2_1}, and \eqref{eq:support_01_monotonicity_2}, we obtain $\alpha_\delta^{\mu_2} \leq \alpha_\delta^{\mu_1}$, as claimed.
\end{proof}

\subsection{On the Restriction to Random Threshold Policies \label{appendix:random_threshold_restriction}}

In this section, we briefly justify the claim that the restriction to random threshold policies for one-max search is made without loss of generality. Let $\alg$ be an arbitrary randomized algorithm for one-max search. Following the argument in the proof of \cite[Theorem 1]{el-yanivOptimalSearchOneWay2001}, the lack of memory restrictions in this problem implies, by Kuhn's Theorem, that $\alg$ is, without loss of generality, a mixed strategy, or a probability distribution over deterministic algorithms \citep{aumannMixedBehaviorStrategies1964}. For some $k \in \N$, let $\epsilon = \frac{U - L}{k}$, and define a restricted set of adversary price sequences $\calI_\epsilon$ as
\begin{align*}
    \calI_\epsilon = \left\{\vvec : \vvec = (L, L+\epsilon, \ldots, L+n\epsilon), n \in \{0, \ldots, k\}\right\},
\end{align*}
i.e., $\calI_\epsilon$ is the set of all price sequences that begin at $L$ and increase by $\epsilon$ at each time. If the adversary is restricted to choosing price sequences in $\calI_\epsilon$, then any deterministic algorithm is equivalent in behavior to some deterministic threshold algorithm. To see why, note that $\calI_\epsilon$ comprises $k+1$ price sequences, each of a unique length in $[k+1]$; we will call $\vvec_n$ the sequence of length $n \in [k+1]$. Moreover, $\vvec_n$ constitutes the first $n$ entries of $\vvec_{n+1}$. As such, the behavior of a deterministic algorithm $\overline{\alg}$ on $\vvec_n$ will be identical to its behavior on the first $n$ prices revealed in $\vvec_{n+1}$; in particular, if $\overline{\alg}$ sells at time $j < n$ in $\vvec_n$, it will do the same in $\vvec_{n+1}$ and earn the same profit $L + (j-1)\epsilon$. As a result, $\overline{\alg}$'s behavior on $\calI_\epsilon$ is wholly determined by the price at which it chooses to sell, which will be consistent across price sequences in this set; in other words, $\overline{\alg}$ is equivalent to a deterministic threshold algorithm, with threshold chosen amongst the $k+1$ choices $\{v : v = L + n\epsilon, n \in \{0, \ldots, k\}\}$.\footnote{If $\overline{\alg}$ never sells on any of the sequences in $\calI_\epsilon$, we can choose a corresponding deterministic threshold of $U$, which obtains performance at least as good.} Thus, on $\calI_\epsilon$, the mixed strategy $\alg$ is equivalent to a distribution over such threshold algorithms, i.e., a random threshold algorithm $X \sim \mu$ with support on $\{v : v = L + n\epsilon, n \in \{0, \ldots, k\}\}$. Formally, we have
\begin{align*}
    \alpha_\delta^{\alg} &= \sup_{\substack{\vvec \in [L, U]^T, \\ T \in \N}} \frac{v_{\max}}{\dcvar[\alg(\vvec)]} \\
    &\geq \max_{\vvec \in \calI_\epsilon} \frac{v_{\max}}{\dcvar[\alg(\vvec)]} \\
    &\geq \max_{\vvec \in \calI_\epsilon} \frac{v_{\max}}{\dcvar[L \cdot \indic_{X > v_{\max}} + X \cdot \indic_{X \leq v_{\max}}]} \\
    &= \max_{\substack{v = L + n\epsilon, \\ n \in \{0, \ldots, k\}}} \frac{v}{\dcvar[L \cdot \indic_{X > v} + X \cdot \indic_{X \leq v}]} \tageq\label{eq:oms_cvar_max_Ieps}
\end{align*}
where $\alg(\vvec)$ denotes the (random) profit of $\alg$ on the price sequence $\vvec$, $v_{\max} \coloneqq \max_j v_j$, and \eqref{eq:oms_cvar_max_Ieps} holds by the construction of $\calI_\epsilon$. Then since $X \sim \mu$ is a random threshold algorithm, its $\dcr$ (with unrestricted adversary) is defined as in \eqref{eq:dcr_oms}:

\begin{align*}
    \alpha_\delta^\mu &= \sup_{v \in [L, U]} \frac{v}{\dcvar[L \cdot \indic_{X > v} + X \cdot \indic_{X \leq v}]} \\
    &\leq \max_{n \in \{0, \ldots, k\}} \sup_{v \in \left[L+n\epsilon,\, L+(n+1)\epsilon\right)} \frac{v}{\dcvar[L \cdot \indic_{X > v} + X \cdot \indic_{X \leq v}]} \\
    &\leq \max_{n \in \{0, \ldots, k\}} \frac{L + (n+1)\epsilon}{\dcvar[L \cdot \indic_{X > L+n\epsilon} + X \cdot \indic_{X \leq L+n\epsilon}]} \tageq\label{ineq:oms_cvar_discrete_support_bound} \\
    &\leq \max_{\substack{v = L + n\epsilon, \\ n \in \{0, \ldots, k\}}} \frac{v}{\dcvar[L \cdot \indic_{X > v} + X \cdot \indic_{X \leq v}]} + \frac{\epsilon}{L} \tageq\label{ineq:oms_cvar_eps_ub} \\
    &\leq \alpha_\delta^{\alg} + \frac{\epsilon}{L} \tageq\label{ineq:oms_cvar_threshold_final_ub}
\end{align*}
where the inequality \eqref{ineq:oms_cvar_discrete_support_bound} holds due to $X$ having support restricted to $\{v : v = L + n\epsilon, n \in \{0, \ldots, k\}\}$, which implies that $\dcvar[L \cdot \indic_{X > v} + X \cdot \indic_{X \leq v}]$ is equal to the $v = L + n\epsilon$ case for all $v \in \left[L+n\epsilon,\, L+(n+1)\epsilon\right)$, \eqref{ineq:oms_cvar_eps_ub} follows by the fact that the algorithm's profit is lower bounded by $L$, and \eqref{ineq:oms_cvar_threshold_final_ub} follows by the inequality in \eqref{eq:oms_cvar_max_Ieps}. Thus, by selecting $k$ arbitrarily large (i.e., $\epsilon$ arbitrarily small), the random threshold algorithm $X$ can be made to have $\dcr$ arbitrarily close to the original randomized algorithm $\alg$.

\section{Proofs and Additional Results for Section~\ref{section:cont_ski_rental}}

\subsection{Proof of Lemma~\ref{lemma:cvar_cost_integral_general} \label{appendix:lemma:cvar_cost_integral_general}}

Before proving the result, we first prove a general lemma that allows for writing an algorithm's $\cvar_\delta$-cost given a particular adversary's decision $s \in [0, 1]$ in terms of the inverse CDF of the algorithm's decision.

\begin{lemma} \label{lemma:cvar_cost_inv_cdf_decision}
    Let $X$ be a random variable supported in $[0, 1]$, and fix an adversary's decision $s \in [0, 1]$. Then the inverse CDF of the ski rental cost given by the random variable $C(X, s) = s \cdot \indic_{X > s} + (X + 1) \cdot \indic_{X \leq s}$ is
    $$F_{C(X, s)}^{-1}(p) = \begin{cases} s & \text{if $p \leq 1 - F_X(s)$} \\ 1 + F_X^{-1}(p + F_X(s) - 1) & \text{otherwise,}\end{cases}$$
    for $p \in [0, 1]$.
\end{lemma}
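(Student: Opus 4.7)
The plan is to compute the CDF of $C(X,s)$ directly and then invert it, using the standard definition of the bounded-support inverse CDF. First, I would analyze the structure of the cost random variable. Write $g(x) \coloneqq s \cdot \indic_{x > s} + (x + 1) \cdot \indic_{x \leq s}$. On the event $\{X > s\}$, which has probability $1 - F_X(s)$, we have $C = g(X) = s$; on the complementary event $\{X \leq s\}$, which has probability $F_X(s)$, we have $C = X + 1$, which lies in $[1, s+1]$. Because $s \leq 1$, the atom at $s$ sits (weakly) below the values taken on $\{X \leq s\}$, so the two pieces of the support are disjoint (up to the harmless boundary coincidence at $s = 1$). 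This lets me read off the piecewise CDF:
\begin{equation*}
    F_{C(X,s)}(c) = \begin{cases} 0 & c < s, \\ 1 - F_X(s) & s \leq c < 1, \\ 1 - F_X(s) + F_X(c-1) & 1 \leq c \leq s+1, \\ 1 & c > s+1,\end{cases}
\end{equation*}
where the third line uses $c - 1 \leq s$ so that $\{X + 1 \leq c, X \leq s\} = \{X \leq c - 1\}$.

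Next, I would invert $F_{C(X,s)}$ by applying the infimum definition $F_{C(X,s)}^{-1}(p) = \inf\{c \in [s, s+1] : F_{C(X,s)}(c) \geq p\}$, which is well-defined because the support of $C$ is contained in $[s, s+1]$. For $p \leq 1 - F_X(s)$, since $F_{C(X,s)}(s) = 1 - F_X(s) \geq p$ and $F_{C(X,s)}(c) = 0$ for $c < s$, the infimum is attained at $c = s$. For $p > 1 - F_X(s)$, the condition $F_{C(X,s)}(c) \geq p$ forces $c \geq 1$, and within $c \in [1, s+1]$ it reduces to $F_X(c-1) \geq p + F_X(s) - 1$. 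By the definition of $F_X^{-1}$, the smallest such $c$ is $1 + F_X^{-1}(p + F_X(s) - 1)$, yielding the claimed formula.

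The main technical subtlety I expect to navigate is the bookkeeping around edge cases: ensuring that the formula is consistent at jumps of $F_X$, at boundary values like $s = 1$ (where the atom at $s$ abuts the continuous piece at $1$), and at the endpoint $p = 0$ where the bounded-support convention $F_X^{-1}(0) = \essinf X$ from the notation section must be used rather than the standard $-\infty$. In each of these cases the inf-based definition of $F^{-1}$ and the disjointness observation above make the computation go through unchanged, but verifying them is where the careful case analysis lives. Apart from this bookkeeping, the proof is a direct unwinding of definitions and requires no additional ingredients.
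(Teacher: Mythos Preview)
Your proposal is correct and follows essentially the same approach as the paper: compute the piecewise CDF of $C(X,s)$ from the observation that the atom at $s$ lies below the values $X+1\in[1,s+1]$, then invert via the bounded-support infimum definition. The paper's proof is slightly terser about the edge cases you flag (it separates out $p=0$ explicitly but otherwise absorbs the $s=1$ and jump issues into the infimum definition), but the structure and all key steps are the same.
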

\begin{proof}
    Observe that the cost $C(X, s)$ takes value $s$ when $X > s$, and is equal to $X + 1$ otherwise, which is always at least $s$; as such, we can easily compute its CDF:
    $$F_{C(X, s)}(x) = \begin{cases}0 & \text{if $x < s$} \\ 1 - F_X(s) & \text{if $x \in [s, 1)$} \\ 1 - F_X(s) + F_X(x - 1) & \text{if $x \in [1, 1 + s]$} \end{cases}$$
    Note that $C(X, s)$ is supported in $[s, 1+s]$; thus, we define its inverse CDF as
    \begin{align*}
        F_{C(X, s)}^{-1}(p) &= \inf\{x \in [s, 1+s] : F_{C(X, s)}(x) \geq p\} \\
        &= \begin{cases} s & \text{if $p = 0$} \\ s & \text{if $p \in (0, 1 - F_X(s)]$} \\ \inf\{x \in [1, 1+s] : 1 - F_X(s) + F_X(x-1) \geq p\} & \text{otherwise} \end{cases} \\
        &= \begin{cases} s & \text{if $p \leq 1 - F_X(s)$} \\ 1 + \inf\{x \in [0, s] : F_X(x) \geq p + F_X(s) - 1\} &\text{otherwise}\end{cases} \\
        &= \begin{cases} s & \text{if $p \leq 1 - F_X(s)$} \\ 1 + F_X^{-1}(p + F_X(s) - 1) &\text{otherwise,}\end{cases}
    \end{align*}
    just as claimed.
\end{proof}

Lemma~\ref{lemma:cvar_cost_integral_general} now follows as a near-immediate consequence of the preceding lemma.

\begin{proof}[Proof of Lemma~\ref{lemma:cvar_cost_integral_general}]
    Define $C(X, s) = s \cdot \indic_{X > s} + (X + 1) \cdot \indic_{X \leq s}$ as the algorithm's cost given a strategy $X$ and adversary's decision $s$, just as in Lemma~\ref{lemma:cvar_cost_inv_cdf_decision}. By the second definition of $\cvar_\delta$ in \eqref{eq:cvar_definition} expressing it as an integral of the inverse CDF, we may write $\cvar_\delta[C(X, s)]$ as:
    \begin{align*}
        \cvar_\delta[C(X, s)] &= \frac{1}{1-\delta}\int_\delta^1 F_{C(X, s)}^{-1}(t)\,\der t
    \end{align*}
    We break into two cases. If $\delta > 1 - F_X(s)$, then by Lemma~\ref{lemma:cvar_cost_inv_cdf_decision}, $F_{C(X, s)}^{-1}(t) = 1 + F_X^{-1}(t + F_X(s) - 1)$ on the entire domain of integration, so we have
    \begin{align*}
        \cvar_\delta[C(X, s)] &= \frac{1}{1-\delta}\int_\delta^1 1 + F_X^{-1}(t + F_X(s) - 1)\,\der t \\
        &= \frac{1}{1-\delta}\int_{F_X(s) - (1 - \delta)}^{F_X(s)} 1 + F_X^{-1}(t )\,\der t.
    \end{align*}
    On the other hand, if $\delta \leq 1 - F_X(s)$, then by Lemma~\ref{lemma:cvar_cost_inv_cdf_decision}, $F_{C(X, s)}^{-1}(t) = s$ on $[\delta, 1 - F_X(s)]$ and $F_{C(X, s)}^{-1}(t) = 1 + F_X^{-1}(t + F_X(s) - 1)$ on $[1-F_X(s), 1]$. Thus,
    \begin{align*}
        \cvar_\delta[C(X, s)] &= \frac{1}{1-\delta}\left(\int_\delta^{1 - F_X(s)}s\,\der t + \int_{1-F_X(s)}^{1} 1 + F_X^{-1}(t + F_X(s) - 1)\,\der t\right) \\
        &= \frac{1}{1-\delta}\left[(1 - \delta - F_X(s))s + \int_0^{F_X(s)}1 + F_X^{-1}(t)\,\der t\right].
    \end{align*}
\end{proof}

\subsection{Proof of Theorem~\ref{theorem:cvar_skirental_firstalg} \label{appendix:theorem:cvar_skirental_firstalg}}
\begin{proof}[noname]
    First, note that the CDF of the strategy $X$ on $[0, 1]$ is
    $$F_X(x) = \int_0^x p_\delta(y) \,\der y = -\frac{1-\delta}{c}\log\left[1 + \left(e^{-\frac{c}{1-\delta}} - 1\right)x\right],$$
    which is strictly increasing (and hence one-to-one) on $[0, 1]$, with $F_X(0) = 0$ and $F_X(1) = 1$. The corresponding inverse CDF is
    $$F_X^{-1}(y) = \frac{1 - e^{-\frac{cy}{1-\delta}}}{1 - e^{-\frac{c}{1-\delta}}},$$
    for $y \in [0, 1]$. This is strictly increasing in $y$, so $F_X^{-1}$ is one-to-one, and any adversary decision $s \in [0, 1]$ corresponds to some $y \in [0, 1]$ such that $s = F_X^{-1}(y)$.

    Now suppose the adversary's decision is $s = F_X^{-1}(y)$ for $y \leq 1-\delta$. Then by Lemma~\ref{lemma:cvar_cost_integral_general}, the the $\dcr$ of the algorithm's cost in this case is
    \begin{align*}
        \frac{\cvard[s \cdot \indic_{X > s} + (X + 1) \cdot \indic_{X \leq s}]}{s}
    &= \frac{1}{F_X^{-1}(y)}\dfrac{1}{1-\delta}\left[(1-\delta - y)F_X^{-1}(y) + \int_0^y 1 + F_X^{-1}(t)\,\der t\right] \\
    &= \dfrac{1}{1-\delta}\left[(1-\delta - y) + \frac{1 - e^{-\frac{c}{1-\delta}}}{1 - e^{-\frac{cy}{1-\delta}}}\left(y + \int_0^y \frac{1 - e^{-\frac{ct}{1-\delta}}}{1 - e^{-\frac{c}{1-\delta}}}\,\der t\right)\right] \\
    &= \dfrac{1}{1-\delta}\left[(1-\delta - y) + \frac{1 - e^{-\frac{c}{1-\delta}}}{1 - e^{-\frac{cy}{1-\delta}}}\left(y + \frac{1-cy - e^{-\frac{cy}{1-\delta}}(1-\delta) - \delta}{c(e^{-\frac{c}{1-\delta}}-1)}\right)\right] \\
    &= \dfrac{1}{1-\delta}\left[(1-\delta - y) + \frac{2 - e^{-\frac{c}{1-\delta}}}{1 - e^{-\frac{cy}{1-\delta}}}y - \frac{1-\delta}{c}\right] \\
    &= 1 + \frac{y}{1-\delta}\left(\frac{2 - e^{-\frac{c}{1-\delta}}}{1 - e^{-\frac{cy}{1-\delta}}} - 1\right) - \frac{1}{c} \tageq\label{eq:dcr_y_leq_1_min_del}\\
    &\leq \frac{2 - e^{-\frac{c}{1-\delta}}}{1 - e^{-c}} - \frac{1}{c} \tageq\label{eq:dcr_y_leq_1_min_del_final_ineq}
    \end{align*}
    where the final inequality \eqref{eq:dcr_y_leq_1_min_del_final_ineq} follows from the straightforward observation that \eqref{eq:dcr_y_leq_1_min_del} is increasing in $y$, so is maximized in this case at $y = 1-\delta$ (recall we have assumed $y \leq 1-\delta$).

    Now, consider the alternative case that $y > 1-\delta$. By Lemma~\ref{lemma:cvar_cost_integral_general}, the the $\dcr$ of the algorithm's cost in this case is
    \begin{align*}
        \frac{\cvard[s \cdot \indic_{X > s} + (X + 1) \cdot \indic_{X \leq s}]}{s}
    &= \frac{1}{F_X^{-1}(y)}\dfrac{1}{1-\delta}\int_{y - (1-\delta)}^y 1 + F_X^{-1}(t)\,\der t \\
    &= \frac{1 - e^{-\frac{c}{1-\delta}}}{1 - e^{-\frac{cy}{1-\delta}}}\left(1 + \frac{e^{-\frac{cy}{1-\delta}}(1 - e^c) + c}{c(1 - e^{-\frac{c}{1-\delta}})}\right) \\
    &= \frac{2 - e^{-\frac{c}{1-\delta}}}{1 - e^{-\frac{cy}{1-\delta}}} + \frac{e^{-\frac{cy}{1-\delta}}(1 - e^c)}{c(1 - e^{-\frac{cy}{1-\delta}})} \tageq\label{eq:dcr_y_geq_1_min_del} \\
    &\leq \frac{2 - e^{-\frac{c}{1-\delta}}}{1 - e^{-\frac{c}{1-\delta}}} + \frac{e^{-\frac{c}{1-\delta}}(1 - e^c)}{c(1 - e^{-\frac{c}{1-\delta}})} \tageq\label{eq:dcr_y_geq_1_min_del_final_ineq}
    \end{align*}
    where the final inequality \eqref{eq:dcr_y_geq_1_min_del_final_ineq} follows from the fact that \eqref{eq:dcr_y_geq_1_min_del} is increasing in $y$, and thus is maximized for $y = 1$ (recall that $y \in (1-\delta, 1]$ in this case). To see that this is the case, observe that
    \begin{align*}
        \frac{\der}{\der y} \left(\frac{2 - e^{-\frac{c}{1-\delta}}}{1 - e^{-\frac{cy}{1-\delta}}} + \frac{e^{-\frac{cy}{1-\delta}}(1 - e^c)}{c(1 - e^{-\frac{cy}{1-\delta}})}\right) &= -\frac{e^{\frac{c(y-1)}{1-\delta}}(-c + e^{\frac{c}{1-\delta}}(1+2c-e^c))}{(1-e^{\frac{cy}{1-\delta}})^2(1-\delta)} \\
        &= \frac{e^{\frac{c(y-1)}{1-\delta}}c}{(1-e^{\frac{cy}{1-\delta}})^2(1-\delta)} \tageq\label{eq:dcr_y_geq_1_min_del_derivative_lambertw}\\
        &> 0 \quad\text{for all $y \in (1-\delta, 1]$,}
    \end{align*}
    where \eqref{eq:dcr_y_geq_1_min_del_derivative_lambertw} follows by the assumption in the theorem statement that $c = -\frac{1 + 2 W_{-1}\left(\nicefrac{-1}{2\sqrt{e}}\right)}{2}$, since if we substitute this definition of $c$ into $1 + 2c-e^c$, we obtain
    \begin{align*}
        1 + 2c-e^c &= -2 W_{-1}\left(\nicefrac{-1}{2\sqrt{e}}\right) - e^{-\frac{1 + 2 W_{-1}\left(\nicefrac{-1}{2\sqrt{e}}\right)}{2}} \\
        &= -e^{-\frac{1}{2}}\left(2\sqrt{e} \cdot W_{-1}\left(\nicefrac{-1}{2\sqrt{e}}\right) + e^{-W_{-1}\left(\nicefrac{-1}{2\sqrt{e}}\right)}\right) \\
        &= 0, \tageq\label{eq:applying_lambert_eq}
    \end{align*}
    since the Lambert $W$ function is defined to satisfy $W_k(z) \cdot e^{W_k(z)} = z$.

    Combining the two cases \eqref{eq:dcr_y_leq_1_min_del_final_ineq} and \eqref{eq:dcr_y_geq_1_min_del_final_ineq}, we have that the $\dcr$ of the algorithm that buys on a random day with density $p_\delta$ is
    \begin{equation} \label{eq:ansatz_cr_double_bound}
        \alpha_\delta^{p_\delta} = \max\left\{\frac{2 - e^{-\frac{c}{1-\delta}}}{1 - e^{-c}} - \frac{1}{c}, \frac{2 - e^{-\frac{c}{1-\delta}}}{1 - e^{-\frac{c}{1-\delta}}} + \frac{e^{-\frac{c}{1-\delta}}(1 - e^c)}{c(1 - e^{-\frac{c}{1-\delta}})}\right\}.
    \end{equation}
    We will now show that for our chosen constant $c$, the latter entry in the maximum is larger for all $\delta \in [0, 1)$. Define a function $f$ as the difference of \eqref{eq:dcr_y_geq_1_min_del_final_ineq} and \eqref{eq:dcr_y_leq_1_min_del_final_ineq}:
    $$f(\delta; c) = \frac{2 - e^{-\frac{c}{1-\delta}}}{1 - e^{-\frac{c}{1-\delta}}} + \frac{e^{-\frac{c}{1-\delta}}(1 - e^c)}{c(1 - e^{-\frac{c}{1-\delta}})}\ - \left(\frac{2 - e^{-\frac{c}{1-\delta}}}{1 - e^{-c}} - \frac{1}{c}\right).$$
    Our goal is to show that $f(\delta; c) \geq 0$ for all $\delta \in [0, 1)$. First, observe that $f(0) = 0$. Moreover, since $\lim_{\delta \toup 1} e^{-\frac{c}{1-\delta}} = 0$, we have
    \begin{align*}
        \lim_{\delta \toup 1} f(\delta;c) &= 2 - \frac{2}{1-e^{-c}} + \frac{1}{c} \\
        &= -\frac{2e^{-c}}{1-e^{-c}} + \frac{1}{c} \\
        &= -\frac{2}{e^c - 1} + \frac{1}{c} \\
        &= 2\left(\frac{1}{1 - e^c} + \frac{1}{2c}\right) \\
        &= 0,
    \end{align*}
    where the final equality follows from rearranging the equality $1 + 2c - e^c = 0$ shown in \eqref{eq:applying_lambert_eq}, which follows from our choice of $c$. Thus the function $f(\delta; c)$ is zero at the endpoints of the interval $[0, 1)$. Since $f'$ is continuously differentiable, if we can show that $f'(0; c) > 0$ and that $f'(\delta; c) = 0$ exactly once on the interval $[0, 1)$, these together will imply the desired property that $f(\delta; c) \geq 0$ for all $\delta \in [0, 1)$.\footnote{To see that this is the case, suppose instead that $f(\delta'; c) < 0$ for some $\delta' \in (0, 1)$, and note that strict positivity of the initial derivative $f'(0; c) > 0$, continuity of $f'$, and the limit $\lim_{\delta \toup 1} f(\delta;c) = 0$ imply that $f'(\delta; c)$ must be zero at least twice on the interval, contradicting the supposition that $f'(\delta; c) = 0$ exactly once.}

    Computing the derivative of $f$ at $\delta = 0$, we find
    \begin{align*}
        f'(\delta; c)\Big|_{\delta = 0} &= \frac{e^{-\frac{c}{1-\delta}}\left(\left(1 - e^c\right)^2 - c\left(-1 + 2e^c - 2e^{-\frac{c\delta}{1-\delta}} + e^{-\frac{c(1+\delta)}{1-\delta}}\right)\right)}{(e^c - 1)(1 - e^{-\frac{c}{1-\delta}})^2(1-\delta)^2}\Bigg|_{\delta = 0} \tageq\label{eq:f_prime_defn}\\
        &= \frac{c + e^c(e^c - 1 - 2c)}{(e^c - 1)^2} \\
        &= \frac{c}{(e^c - 1)^2} \tageq\label{eq:f_prime_lambert}\\
        &> 0,
    \end{align*}
    where \eqref{eq:f_prime_lambert} follows from \eqref{eq:applying_lambert_eq}. Moreover, inspecting the form of \eqref{eq:f_prime_defn}, it is clear that $e^{-\frac{c}{1-\delta}}$ and the denominator $(e^c - 1)(1 - e^{-\frac{c}{1-\delta}})^2(1-\delta)^2$ are both strictly positive (recall, in particular, that $c > 1$). As such, the sign of $f'(\delta; c)$ is exactly the sign of $\left(1 - e^c\right)^2 - c\left(-1 + 2e^c - 2e^{-\frac{c\delta}{1-\delta}} + e^{-\frac{c(1+\delta)}{1-\delta}}\right)$, so to determine the zeros of $f'(\delta; c)$, we may instead determine the zeros of the expression
    $$\left(1 - e^c\right)^2 - c\left(-1 + 2e^c - 2e^{-\frac{c\delta}{1-\delta}} + e^{-\frac{c(1+\delta)}{1-\delta}}\right).$$ To this end, we compute another derivative:
    \begin{align*}
        \frac{\der}{\der \delta}\left[\left(1 - e^c\right)^2 - c\left(-1 + 2e^c - 2e^{-\frac{c\delta}{1-\delta}} + e^{-\frac{c(1+\delta)}{1-\delta}}\right)\right] &= \frac{2c^2e^{-\frac{c\delta}{1-\delta}}(e^{-\frac{c}{1-\delta}}-1)}{(1-\delta)^2} \\
        &< 0
    \end{align*}
    for all $\delta \in [0, 1)$, since $e^{-\frac{c}{1-\delta}} < 1$. As
    \begin{align*}
        \left(1 - e^c\right)^2 - c\left(-1 + 2e^c - 2e^{-\frac{c\delta}{1-\delta}} + e^{-\frac{c(1+\delta)}{1-\delta}}\right)\Bigg|_{\delta = 0} &= 4c^2 - c(-3 + 2e^c + e^{-c}) \\
        &= 4c^2 - c(-1 + 4c + e^{-c})\\
        &= -\frac{\left(1 + 2\cdot W_{-1}\left(\nicefrac{-1}{2\sqrt{e}}\right)\right)^{2}}{4\cdot W_{-1}\left(\nicefrac{-1}{2\sqrt{e}}\right)} \approx 0.899 > 0
    \end{align*}
    and
    \begin{align*}
        \lim_{\delta \toup 1} \left(1 - e^c\right)^2 - c\left(-1 + 2e^c - 2e^{-\frac{c\delta}{1-\delta}} + e^{-\frac{c(1+\delta)}{1-\delta}}\right) &= c - 2ce^c + (e^c - 1)^2 \\
        &= \frac{1}{2} + W_{-1}\left(\nicefrac{-1}{2\sqrt{e}}\right) \approx -1.256 < 0,
    \end{align*}
    it follows that $\left(1 - e^c\right)^2 - c\left(-1 + 2e^c - 2e^{-\frac{c\delta}{1-\delta}} + e^{-\frac{c(1+\delta)}{1-\delta}}\right)$, and thus $f'(\delta; c)$, has exactly one zero on $[0, 1)$. As argued previously, this implies that $f(\delta; c) \geq 0$ for all $\delta \in [0, 1)$, and hence the second entry on the right-hand side of \eqref{eq:ansatz_cr_double_bound} is always larger:
    $$\alpha_\delta^{p_\delta} = \frac{2 - e^{-\frac{c}{1-\delta}}}{1 - e^{-\frac{c}{1-\delta}}} + \frac{e^{-\frac{c}{1-\delta}}(1 - e^c)}{c(1 - e^{-\frac{c}{1-\delta}})}.$$

    Simplifying this formula via \eqref{eq:applying_lambert_eq}, we have
    $$\alpha_\delta^{p_\delta} = 2 - \frac{1}{e^{\frac{c}{1-\delta}}-1},$$
    from which it is readily observed that $\alpha_\delta^{p_\delta} < 2$ for all $\delta \in [0, 1)$; moreover, $\lim_{\delta \toup 1} 2 - \frac{1}{e^{\frac{c}{1-\delta}}-1} = 2$, so the above expression for $\alpha_\delta^{p_\delta}$ is valid, and indeed optimal, in the case of $\delta = 1$ (in this case, we interpret the algorithm as placing full probability mass on purchasing at time 1). On the other hand, $\alpha_0^{p_0} = 2 - \frac{1}{e^c - 1} \approx 1.60 > \alpha_0^* \approx 1.58$, so this algorithm is not optimal for all $\delta$, though it provides a very close approximation of the optimal competitive ratio in the case of $\delta = 0$.

\end{proof}

\subsection{Proof of Theorem~\ref{theorem:optimal_cont_time_ski_rental} \label{appendix:theorem:optimal_cont_time_ski_rental}}

It is known that the optimal ski-rental algorithm $\mu^*$ is \emph{indifferent} to the adversary's decision $s \in (0, 1]$ when $\delta = 0$ (in the expected cost case), i.e., $\alpha_0^{\mu^*}(s) = \frac{e}{e-1}$ for all $s \in (0, 1]$ \cite{karlinCompetitiveRandomizedAlgorithms1994}. A similar tightness property was proved in \cite{dinitzControllingTailRisk2024} in the setting of discrete-time ski rental with $\var$ constraints. In the following, we show that this tightness property also holds for any $\delta \in (0, 1)$ for continuous-time ski rental: if $\mu_\delta^*$ is optimal for the $\dcr$, then $\alpha_\delta^{\mu^*}(s) = \alpha_\delta^{\mu^*}$ for all $s \in (0, 1]$. Following the high-level strategy of \cite{dinitzControllingTailRisk2024}, we prove this result in two steps: first, we prove that $\alpha_\delta^{\mu^*}(1) = \alpha_\delta^{\mu^*}$. Then, we prove that for any algorithm $\mu$, if $\alpha_\delta^{\mu}(s) < \alpha_\delta^{\mu}$ for some $s \in (0, 1]$, we can construct an algorithm $\hat{\mu}$ with a competitive ratio that is no worse than $\mu$, yet which has $\alpha_\delta^{\hat{\mu}}(1) < \alpha_\delta^{\hat{\mu}}$, thus implying $\mu$ is not optimal. We begin with a lemma establishing that any optimal algorithm cannot have a probability mass more than $(1-\delta)(\alpha_\delta^* - 1)$ on any single point.

\begin{lemma} \label{lemma:no_large_point_masses}
    Let $\delta \in [0, 1)$, and let $\mu^*$ be an algorithm with optimal $\dcr$ for continuous-time ski rental. Then $\mu^*$ cannot assign any point a probability mass greater than $(1-\delta)(\alpha_\delta^* - 1)$.
\end{lemma}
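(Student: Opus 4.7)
The plan is to argue by contradiction: suppose $\mu^*$ assigns mass $m > (1-\delta)(\alpha_\delta^* - 1)$ to some point $x_0 \in [0,1]$, and derive a violation of $\alpha_\delta^*$-competitiveness by letting the adversary play $s = x_0$. First I would handle the edge case $x_0 = 0$: if $\mu^*$ had positive mass at $0$, then the algorithm incurs the constant cost $1$ on that event, so as the adversary takes $s \todown 0$ the ratio $\cvar_\delta[C(X, s)]/s$ blows up, contradicting the bound $\alpha_\delta^* < 2$ established in Theorem~\ref{theorem:cvar_skirental_firstalg}. Hence $x_0 > 0$, and Lemma~\ref{lemma:cvar_cost_integral_general} applies cleanly with $s = x_0 \in (0, 1]$.

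The key structural observation is that because $F_X$ jumps by $m$ at $x_0$, the inverse CDF $F_X^{-1}$ is flat at value $x_0$ on the interval $(F_X^-, F_X^- + m]$, where $F_X^- \coloneqq \lim_{x \uparrow x_0} F_X(x)$ and $F_X(x_0) = F_X^- + m$. Substituting into Lemma~\ref{lemma:cvar_cost_integral_general} and splitting the relevant integral at $F_X^-$, this flat segment contributes exactly $m(1 + x_0)$ to the unnormalized integral, while the remaining integrand can be lower-bounded using the trivial inequality $F_X^{-1}(t) \geq 0$.

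From here I would split into two cases based on whether $F_X(x_0) \leq 1-\delta$. In the first case, a short calculation yields $\cvar_\delta[C(X, x_0)] \geq x_0 + \frac{m}{1-\delta}$; the $\alpha_\delta^*$-competitiveness constraint $\cvar_\delta[C(X, x_0)] \leq \alpha_\delta^* x_0$ then gives $m \leq (1-\delta)(\alpha_\delta^* - 1) x_0 \leq (1-\delta)(\alpha_\delta^* - 1)$, contradicting the supposition. In the second case, the integration window $[F_X(x_0) - (1-\delta), F_X(x_0)]$ has length $1-\delta$; if $m \geq 1-\delta$ then this window is contained in the flat region, forcing $\cvar_\delta = 1 + x_0$ and a ratio of at least $2$, again contradicting $\alpha_\delta^* < 2$. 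Otherwise, splitting the integral as before yields $\cvar_\delta[C(X, x_0)] \geq 1 + \frac{m x_0}{1-\delta}$, and the competitiveness bound gives $m \leq (1-\delta)(\alpha_\delta^* - 1/x_0) \leq (1-\delta)(\alpha_\delta^* - 1)$ using $x_0 \leq 1$.

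I expect the main conceptual step to be recognizing how the flat region of $F_X^{-1}$ induced by the point mass interacts with the $\frac{1}{1-\delta}$ amplification inside $\cvar_\delta$, which is precisely what generates the $(1-\delta)^{-1}$ factor in the bound. The case analysis is a technical nuisance rather than a deep obstacle; the only slightly subtle subcase is the one where the entire tail of the cost distribution is absorbed into the jump, which is ruled out by invoking the strict inequality $\alpha_\delta^* < 2$ from Theorem~\ref{theorem:cvar_skirental_firstalg}.
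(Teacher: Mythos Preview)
Your argument is correct. The route differs slightly from the paper's, though the underlying idea is the same.

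The paper avoids invoking Lemma~\ref{lemma:cvar_cost_integral_general} and the attendant case split by working directly with the supremum (subpopulation) characterization of $\dcvar$ in \eqref{eq:cvar_definition}. It simply exhibits one $(1-\delta)$-sized subpopulation of the loss: take $(1-\delta)(\alpha_\delta^*-1)$ mass from the atom at $x$ (contributing cost $1+x$) and the remaining $(1-\delta)(2-\alpha_\delta^*)$ mass from anywhere else (contributing cost at least $x$, since the ski-rental cost is always $\geq s$). This yields, in a single line,
\[
\alpha_\delta^{\mu^*}(x) \;>\; \frac{(1-\delta)(\alpha_\delta^*-1)(1+x) + (1-\delta)(2-\alpha_\delta^*)\,x}{(1-\delta)\,x} \;=\; \frac{\alpha_\delta^*-1}{x}+1 \;\geq\; \alpha_\delta^*,
\]
with the strict inequality coming from $\mu^*(x) > (1-\delta)(\alpha_\delta^*-1)$. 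No separation into $F_X(x_0)\le 1-\delta$ versus $F_X(x_0)>1-\delta$ is needed, and the $x_0=0$ edge case is absorbed since $\frac{\alpha_\delta^*-1}{x}+1 \to \infty$.

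Your approach via Lemma~\ref{lemma:cvar_cost_integral_general} is more explicit about how the flat segment of $F_X^{-1}$ drives the bound, and your treatment of $x_0=0$ is cleaner than the paper's (which glosses over it). The cost is the two-case, two-subcase analysis, which the paper's subpopulation shortcut sidesteps entirely. Both arguments ultimately rest on $\alpha_\delta^* < 2$ from Theorem~\ref{theorem:cvar_skirental_firstalg}.
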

\begin{proof}
    By Lemma~\ref{lemma:cont_ski_rental_01_support} we can assume that $\mu^*$ has support in $[0, 1]$; now suppose for the sake of contradiction that $\mu^*(x) > (1-\delta)(\alpha_\delta^* - 1)$ for some $x \in [0, 1]$. We can easily construct a lower bound on the $\dcr$ as follows:
    \begin{align*}
        \alpha_\delta^{\mu^*}(x) &> \frac{(1-\delta)(\alpha_\delta^* - 1)(1+x) + (1-\delta-(1-\delta)(\alpha_\delta^* - 1))x}{(1-\delta)x} \\
        &= \frac{(1-\delta)(\alpha_\delta^* - 1) + (1-\delta)x}{(1-\delta)x} \\
        &= \frac{\alpha_\delta^* - 1}{x} + 1 \\
        &\geq \alpha_\delta^*
    \end{align*}
    where the final inequality follows since $\frac{\alpha_\delta^* - 1}{x}$ is decreasing in $x$ and $\frac{\alpha_\delta^* - 1}{x} + 1\Big|_{x = 1} = \alpha_\delta^*$. Since $\mu^*$ was assumed optimal, this strict inequality clearly yields a contradiction.

    We conclude by briefly noting that $(1-\delta)(\alpha_\delta^* - 1) < 1-\delta$, since by Theorem~\ref{theorem:cvar_skirental_firstalg}, $\alpha_\delta^* < 2$ for any $\delta \in [0, 1)$.
\end{proof}

Now, we prove that the competitive ratio must be tight when $s = 1$.

\begin{lemma} \label{lemma:s_1_tight}
    Let $\delta \in [0, 1)$, and let $\mu$ be an algorithm with optimal $\dcr$ for continuous-time ski rental, so $\alpha_\delta^{\mu} = \alpha_\delta^*$. Then $\alpha_\delta^{\mu}(1) = \alpha_\delta^{\mu}$.
\end{lemma}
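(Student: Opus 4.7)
The plan is to argue by contradiction. Suppose $\mu$ is optimal yet $\alpha_\delta^{\mu}(1) < \alpha_\delta^{\mu} = \alpha_\delta^*$, so the slack $\gamma \coloneqq \alpha_\delta^* - \alpha_\delta^{\mu}(1)$ is strictly positive. I would exhibit a perturbation of $\mu$ whose $\dcr$ is strictly smaller than $\alpha_\delta^*$, contradicting optimality. My proposed perturbation is the mixture $\hat\mu_\beta \coloneqq (1-\beta)\mu + \beta\,\nu_1$, where $\nu_1$ is the Dirac measure at $1$, for a small parameter $\beta > 0$; this reassigns a $\beta$-fraction of $\mu$'s mass onto the deterministic ``buy at day $1$'' strategy.

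For any $s \in (0, 1)$, the added atom at $X = 1 > s$ contributes the minimum cost value $s$ (no purchase is ever triggered), so the cost random variable under $\hat\mu_\beta$ is first-order stochastically dominated by the cost under $\mu$. Monotonicity of $\cvar_\delta$ immediately gives $\alpha_\delta^{\hat\mu_\beta}(s) \leq \alpha_\delta^\mu(s)$, and Lemma~\ref{lemma:no_large_point_masses} (whose atom bound crucially uses $\alpha_\delta^* < 2$ from Theorem~\ref{theorem:cvar_skirental_firstalg}) rules out $\mu$'s $\cvar_\delta$-tail at $s$ being concentrated on a single atom, promoting the inequality to a strict one. Using the inverse-CDF representation of Lemma~\ref{lemma:cvar_cost_integral_general}, this can be refined to a first-order-in-$\beta$ estimate $\alpha_\delta^{\hat\mu_\beta}(s) = \alpha_\delta^\mu(s) - c(s)\beta + o(\beta)$ with nonnegative coefficient $c(s)$ strictly positive wherever $\mu$'s tail is non-degenerate. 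For $s = 1$, the direction flips: the added atom now contributes the maximum cost value $2$, so $\alpha_\delta^{\hat\mu_\beta}(1) \geq \alpha_\delta^\mu(1)$; however, continuity of $\cvar_\delta$ in the mixing parameter bounds the increase by $O(\beta)$. Choosing $\beta$ small enough, the $O(\beta)$ increase at $s = 1$ is absorbed by the slack $\gamma$, giving $\alpha_\delta^{\hat\mu_\beta}(1) < \alpha_\delta^*$.

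It then remains to show that the full supremum $\sup_{s\in(0,1]}\alpha_\delta^{\hat\mu_\beta}(s)$ is strictly less than $\alpha_\delta^*$. Upper semi-continuity of $\alpha_\delta^\mu(\cdot)$ at $s = 1$ extends the slack $\gamma$ to a neighborhood $[1-\eta, 1]$ on which $\alpha_\delta^\mu(s) \leq \alpha_\delta^* - \gamma/2$, which combined with the $O(\beta)$ bound above keeps $\alpha_\delta^{\hat\mu_\beta}(s) < \alpha_\delta^*$ on this neighborhood for small $\beta$. On the remaining compact domain $s \in (0, 1-\eta]$, the maximizers of $\alpha_\delta^\mu$ all lie in a closed subset $A$; at any $s \in A$ the pointwise strict decrease from the previous paragraph applies, and the atom bound from Lemma~\ref{lemma:no_large_point_masses} forces the $\cvar_\delta$-tail of $\mu$ to have nontrivial spread uniformly over $A$, which in turn should give a uniform lower bound on the decrease coefficient $c(s)$. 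A compactness/upper-semi-continuity argument then yields $\sup_{s \in (0, 1-\eta]} \alpha_\delta^{\hat\mu_\beta}(s) < \alpha_\delta^*$ for all sufficiently small $\beta$, completing the contradiction.

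The hardest part of the proof will be this last uniformity step. Pointwise strict decrease at each individual $s < 1$ is a clean consequence of stochastic dominance, but promoting it to strict decrease of the supremum over an uncountable family of $s$-values requires carefully joint control of $\alpha_\delta^{\hat\mu_\beta}(s)$ in both arguments $(\beta, s)$, and in particular a uniform lower bound on $c(s)$ across the near-maximizer set—a coefficient that could in principle vanish as $s$ approaches either endpoint or as $\mu$'s tail becomes nearly atomic. The atom bound from Lemma~\ref{lemma:no_large_point_masses}, together with the strict inequality $\alpha_\delta^* < 2$ from Theorem~\ref{theorem:cvar_skirental_firstalg}, is the crucial ingredient that prevents this degeneration and enables the uniform decrease.
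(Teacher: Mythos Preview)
Your approach is essentially identical to the paper's: both use the mixture $\hat\mu = (1-\beta)\mu + \beta\delta_1$, argue that for $s<1$ the $\dcr$ weakly decreases (the paper via an explicit case analysis on whether $\mu[0,s]$ is zero, below $1-\delta$, or at least $1-\delta$; you via stochastic dominance) while for $s=1$ it increases by $O(\beta)$ which is absorbed by the slack, and both invoke the atom bound of Lemma~\ref{lemma:no_large_point_masses} (hence $\alpha_\delta^*<2$) to force strictness. The uniformity-of-the-supremum step you flag as the hardest part is in fact treated more casually in the paper than in your proposal---the paper concludes $\sup_s \alpha_\delta^{\hat\mu}(s) < \alpha_\delta^\mu$ directly from pointwise strict inequality, and its case (b)(ii) gives no explicit uniform lower bound on the decrease---so your attention to this point is well placed and, if anything, more rigorous than the paper's own treatment.
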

\begin{proof}
    Suppose otherwise, and let $\alpha_\delta^{\mu}(1) = \alpha_\delta^{\mu} - \epsilon$ for some $\epsilon > 0$. Note that by Lemma~\ref{lemma:no_large_point_masses}, $\mu$ cannot place a probability mass greater than $(1-\delta)(\alpha_\delta^* - 1)$ on any single point, which is strictly less than $(1-\delta)$ by the fact that $\alpha_\delta^* < 2$ for $\delta \in [0, 1)$ (Theorem~\ref{theorem:cvar_skirental_firstalg}). In particular, this implies that $\mu$ does not assign all its probability to the decision $x = 1$, and decreasing the probability mass assigned to a particular decision $x$ and moving it to an earlier decision will strictly decrease the $\dcr$ at that decision.

    Now, consider another algorithm $\hat{\mu}$ with measure defined as:
    $$\hat{\mu} = (1-\gamma)\mu + \gamma\delta_1,$$
    with $\gamma > 0$ a small constant and where $\delta_1$ is a unit point mass at $x = 1$. For any $s \in (0, 1)$, one of the following two cases must hold:
    \begin{enumerate}[(a)]
        \item Suppose $\alpha_\delta^{\mu}(s) = s^{-1} \cdot \cvar_\delta[s \cdot \indic_{X > s} + (X + 1) \cdot \indic_{X \leq s}] = 1$. This means that the worst $(1-\delta)$-sized subpopulation of the loss distribution (i.e., the distribution of the random variable $\indic_{X > s} + s^{-1}(X + 1) \cdot \indic_{X \leq s}$) is contained in the event $X > s$, so it must be that $X$ takes values at most $s$ with probability zero, i.e., $\mu[0, s] = 0$. Likewise, we must have $\hat{\mu}[0, s] = 0$, so $\alpha_\delta^{\hat{\mu}}(s) = 1$ as well.

        \item Alternatively, let $\alpha_\delta^{\mu}(s) > 1$. This means that $\mu[0, s] = c > 0$. We break into two subcases:

        \noindent(i) If $c < 1-\delta$, then the worst $(1-\delta)$-sized subpopulation of the loss distribution must yield a loss of $1$ with probability $1-\delta - c > 0$; hence we may write
        \begin{align*}
            \alpha_\delta^{\mu}(s) &= \E[s^{-1}(X + 1) \cdot \indic_{X \leq s}] + 1-\delta - c \\
            &= c \cdot \E[s^{-1}(X + 1) \cdot \indic_{X \leq s}|X \leq S] + 1-\delta - c.
        \end{align*}
        Let $B \sim \mathrm{Bernoulli}(\gamma)$ be independent of $X$, and define $\hat{X}$ as a random variable that is equal to $X$ when $B = 0$ and is 1 otherwise; clearly $\hat{X}$ has distribution $\hat{\mu}$. Then since $\hat{\mu}[0, s] = (1-\gamma)\mu[0, s] = (1-\gamma)c$, we similarly obtain
        \begin{align*}
            \alpha_\delta^{\hat{\mu}}(s) &= \E[s^{-1}(\hat{X} + 1) \cdot \indic_{\hat{X} \leq s}] + 1-\delta - (1-\gamma)c \\
            &= \E[s^{-1}(X + 1)|\hat{X} \leq s]\Prob(\hat{X} \leq s) + 1-\delta - (1-\gamma)c \tageq\label{eq:cvar_lemma_X}\\
            &= \E[s^{-1}(X + 1)|X \leq s]\Prob(\hat{X} \leq s) + 1-\delta - (1-\gamma)c \tageq\label{eq:cvar_lemma_X_2}\\
            &= (1-\gamma)c \cdot \E[s^{-1}(X + 1)|X \leq s] + 1-\delta - (1-\gamma)c \\
            &< \alpha_\delta^{\mu}(s), \tageq\label{eq:cvar_lemma_strict_ineq}
        \end{align*}
        where \eqref{eq:cvar_lemma_X} holds since, by construction, $\hat{X} \leq s < 1$ implies $\hat{X} = X$; \eqref{eq:cvar_lemma_X_2} follows from the fact that the event $\hat{X} \leq s$ is exactly the joint event $\{X \leq S, B = 0\}$ and $X$ is independent of $B$; and \eqref{eq:cvar_lemma_strict_ineq} is a consequence of $\gamma > 0$ and $\E[s^{-1}(X + 1)|X \leq s] > 1$ for $s < 1$.

        \noindent(ii) If $c \geq 1-\delta$, then the worst $(1-\delta)$-sized subpopulation of the loss distribution is wholly induced by outcomes of $X$ lying in the interval $[0, s]$; calling $\nu$ this subpopulation distribution of $X$, we have $\alpha_\delta^{\mu}(s) = \E_{X \sim \nu}[s^{-1}(X + 1)]$. This subpopulation shrinks to size $(1-\delta)(1-\gamma)$ in the construction of $\hat{\mu}$, so in order to construct the worst-case $(1-\delta)$-sized loss subpopulation of $\hat{\mu}$, we must augment $\nu$ with an additional loss subpopulation (call it $\hat{\nu}$) with size $(1-\delta)\gamma$. It must hold that some nontrivial portion of the losses included in $\hat{\nu}$ are strictly less than $s \cdot \alpha_\delta^{\mu}(s)$, for if this were not the case, it would imply that $\mu$ contains a probability atom of size $\frac{1-\delta}{1-\gamma}$, violating optimality. Thus, we may calculate:
        \begin{align*}
            \alpha_\delta^{\hat{\mu}}(s) &= (1-\gamma) \cdot \E_{X \sim \nu}[s^{-1}(X + 1)] + \gamma \cdot \E_{X \sim \hat{\nu}}[\indic_{X > s} + s^{-1}(X + 1) \cdot \indic_{X \leq s}] \\
            &< (1-\gamma) \cdot \E_{X \sim \nu}[s^{-1}(X + 1)] + \gamma \cdot \alpha_\delta^{\mu}(s) \tageq\label{ineq:cvar_lemma_ii}\\
            &= \alpha_\delta^{\mu}(s),
        \end{align*}
        where \eqref{ineq:cvar_lemma_ii} follows from a nontrivial portion of the losses in $\hat{\nu}$ being strictly less than $\alpha_\delta^{\mu}(s)$.
    \end{enumerate}

    Finally, consider the case of $s = 1$. In this case, the loss is always $X + 1$, since $X \leq 1$ without loss of generality (Lemma~\ref{lemma:cont_ski_rental_01_support}); since this is strictly increasing in the outcome of $X$, the worst $(1-\delta)$-sized subpopulation of the loss distribution is exactly the $(1-\delta)$ tail of $X$, which we call $\nu$. This tail shrinks to size $(1-\delta)(1-\gamma)$ in the construction of $\hat{\mu}$, but an additional probability mass of weight $\gamma$ is added to the outcome $X = 1$, and as this outcome maximizes the loss, the worst $(1-\delta)$-sized subpopulation of the loss under $\hat{\mu}$ is easily seen to be $(1-\gamma)\nu + \gamma\delta_1$. Thus, if we choose $\gamma = \frac{\epsilon}{2}$,
    \begin{align*}
        \alpha_\delta^{\hat{\mu}}(1) &= (1-\gamma)\cdot \E_{X \sim \nu}[X + 1] + \gamma \cdot \E_{X \sim \delta_1}[X + 1] \\
        &= (1-\gamma)\alpha_\delta^{\mu}(1) + 2\gamma \\
        &= (1-\gamma)(\alpha_\delta^{\mu} - \epsilon) + 2\gamma \\
        &< (\alpha_\delta^{\mu} - \epsilon) + 2\gamma \\
        &= \alpha_\delta^{\mu}.
    \end{align*}

    It follows from the above cases that $\alpha_\delta^{\hat{\mu}} = \sup_{s \in (0, 1]} \alpha_\delta^{\hat{\mu}} < \alpha_\delta^{\mu}$: for $s \in (0, 1)$ satisfying case (a), $\alpha_\delta^{\hat{\mu}}(s) = 1 < \alpha_\delta^{\mu}$, and in case (b) and the case of $s = 1$, we have shown $\alpha_\delta^{\hat{\mu}}(s) < \alpha_\delta^{\mu}$. However, this implies that $\hat{\mu}$ has strictly better $\dcr$ than $\mu$, contradicting the optimality of $\mu$. As a result, we must have $\alpha_\delta^{\mu}(1) = \alpha_\delta^{\mu}$.
\end{proof}

We will now begin to prove the second structural result: that if $\alpha_\delta^{\mu}(s) < \alpha_\delta^\mu$ for some $s \in (0, 1)$, we can construct an algorithm $\hat{\mu}$ with $\dcr$ that is no worse than $\mu$ and for which $\alpha_\delta^{\hat{\mu}}(1) < \alpha_\delta^{\hat{\mu}}$, which by the previous lemma implies that $\mu$ is not optimal. We will first prove a series of technical lemmas that support this proof: the first tells us that, if there is ``slack'' in the $\dcr$ for a given adversary decision $x \in (0, 1]$, then this implies that there is slack of a comparable magnitude in a small interval $[x, x+\epsilon]$.

\begin{lemma} \label{lemma:dcr_slack_interval}
    Let $\delta \in [0, 1)$, and let $\mu$ be an algorithm for continuous-time ski rental with support in $[0, 1]$. Suppose there exists an $x \in (0, 1)$ for which $\alpha_\delta^\mu(x) < \alpha_\delta^\mu$, and define $2\gamma \coloneqq \alpha_\delta^\mu - \alpha_\delta^\mu(x) > 0$. Then there exists some $\epsilon > 0$ such that for any $y \in [x, x+\epsilon]$, $\alpha_\delta^\mu(y) \leq \alpha_\delta^\mu - \gamma$.
\end{lemma}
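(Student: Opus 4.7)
The plan is to prove the lemma by establishing that the function $y \mapsto \alpha_\delta^\mu(y)$ is right-continuous at $x$. Given that $\alpha_\delta^\mu(x) = \alpha_\delta^\mu - 2\gamma$, right-continuity immediately yields some $\epsilon > 0$ such that $\alpha_\delta^\mu(y) \leq \alpha_\delta^\mu(x) + \gamma = \alpha_\delta^\mu - \gamma$ for every $y \in [x, x+\epsilon]$, which is exactly the conclusion.

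To obtain right-continuity, I would invoke Lemma~\ref{lemma:cvar_cost_integral_general}, which expresses $(1-\delta)\,y\cdot \alpha_\delta^\mu(y)$ as a piecewise function of $y$ and $F_X(y)$: it equals $(1-\delta - F_X(y))\,y + \int_0^{F_X(y)}(1+F_X^{-1}(t))\,\mathrm{d}t$ when $F_X(y) \leq 1-\delta$, and $\int_{F_X(y)-(1-\delta)}^{F_X(y)}(1+F_X^{-1}(t))\,\mathrm{d}t$ otherwise. A direct substitution shows that the two pieces agree at the boundary $F_X(y) = 1-\delta$ (both evaluating to $\int_0^{1-\delta}(1+F_X^{-1}(t))\,\mathrm{d}t$), so the expression is a globally well-defined, continuous function of the pair $(y, F_X(y))$. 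Since $F_X^{-1}$ is bounded on $[0,1]$ (by $0$ and $1$), both integral maps $u \mapsto \int_0^u(1+F_X^{-1}(t))\,\mathrm{d}t$ and $u \mapsto \int_{u-(1-\delta)}^u(1+F_X^{-1}(t))\,\mathrm{d}t$ are actually Lipschitz in $u$, which simplifies the continuity check.

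The key ingredient that transfers this continuity in $(y, F_X(y))$ into right-continuity in $y$ alone is the standard right-continuity of the cumulative distribution function $F_X$. Thus, as $y \todown x$, $F_X(y) \to F_X(x)$, and composing with the jointly continuous integral expression and dividing by the strictly positive continuous denominator $(1-\delta)\,y$ delivers right-continuity of $\alpha_\delta^\mu$ at $x$. Applying the definition of right-continuity with tolerance $\gamma$ then produces the required $\epsilon$, shrinking it if necessary to ensure $x + \epsilon \leq 1$.

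The main thing to watch is the case boundary $F_X(y) = 1-\delta$, but the two branches of Lemma~\ref{lemma:cvar_cost_integral_general} fit together continuously there, so no extra case analysis is needed. Beyond that, the argument reduces to standard facts: right-continuity of CDFs and continuity of integrals of bounded integrands with respect to their limits. No use of optimality of $\mu$ or structural properties of the problem beyond Lemma~\ref{lemma:cvar_cost_integral_general} is needed.
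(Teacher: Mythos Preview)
Your argument is correct and rests on the same core idea as the paper's: the integral representation from Lemma~\ref{lemma:cvar_cost_integral_general} expresses $(1-\delta)\,y\,\alpha_\delta^\mu(y)$ as a function of $(y, F_X(y))$ that is continuous in both arguments, so right-continuity of $F_X$ gives right-continuity of $\alpha_\delta^\mu$ at $x$.

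The execution differs somewhat. The paper splits into cases $\delta = 0$ and $\delta \in (0,1)$: for the former it writes out the expected-cost ratio directly in terms of $F_X$, while for the latter it goes back to the inverse-CDF formula for the cost (Lemma~\ref{lemma:cvar_cost_inv_cdf_decision}), shifts the domain of integration, and derives the explicit quantitative bound
\[
\alpha_\delta^\mu(x+\epsilon) < \alpha_\delta^\mu(x) + \tfrac{1}{1-\delta}\bigl(1 + \tfrac{1}{x+\epsilon}\bigr)\bigl(F_X(x+\epsilon) - F_X(x)\bigr).
\]
Your route is more unified and avoids the case split by observing directly that both branches of Lemma~\ref{lemma:cvar_cost_integral_general} are Lipschitz in $F_X(y)$ and match at the seam. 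What the paper's extra work buys is the displayed inequality itself, which is invoked verbatim in the proofs of Lemma~\ref{lemma:any_s_tight} and Lemma~\ref{lemma:cont_time_optimal_strictly_increasing}; your argument proves the present lemma cleanly but does not produce that reusable bound, so if you were writing the whole section you would need to recover it (or an equivalent) separately.
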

\begin{proof}
    When $\delta = 0$, $\cvar_\delta$ is exactly the expectation, so we have:
    \begin{align*}
        \alpha_0^\mu(x) &= \E[\indic_{X > x} + x^{-1}(X + 1)\cdot\indic_{X \leq x}] \\
        &= 1 - F_X(x) + x^{-1}\left(F_X(x) + \E[X\cdot\indic_{X \leq x}]\right) \\
        &= 1 - F_X(x) + x^{-1}\left(F_X(x) + \int_0^x 1 - F_{X\cdot\indic_{X \leq x}}(t)\,\der t\right) \\
        &= 1 - F_X(x) + x^{-1}\left(F_X(x) + \int_0^x F_X(x) - F_X(t)\,\der t\right), \tageq\label{eq:expected_cr_expression_cdf}
    \end{align*}
    which is easily seen to be right-continuous at $x$, by right-continuity of the CDF. Thus there must exist some $\epsilon$ ensuring $|\alpha_0^\mu(x) - \alpha_0^\mu(y)| \leq \gamma$ for all $y \in [x, x+\epsilon]$, which implies the desired property.

    On the other hand, suppose $\delta \in (0, 1)$. We can choose $\epsilon > 0$ sufficiently small such that $\rho \coloneqq F_X(x+\epsilon) - F_X(x) = \Prob(X \in (x, x+\epsilon]) \leq \delta$ (this is always possible due to right-continuity of $F_X$). Defining the algorithm's cost given an adversary decision $s$ as $C(X, s) = s\cdot\indic_{X > s} + (X + 1)\cdot\indic_{X \leq s}$, we have
    \begin{align*}
        \alpha_\delta^\mu(x+\epsilon) &= \frac{1}{(1-\delta)(x+\epsilon)}\int_\delta^1 F_{C(X, x+\epsilon)}^{-1}(t)\,\der t \\
        &\leq \frac{1}{(1-\delta)(x+\epsilon)}\left[\int_{\delta-\rho}^{1-\rho} F_{C(X, x+\epsilon)}^{-1}(t)\,\der t + \int_{1-\rho}^1 F_{C(X, x+\epsilon)}^{-1}(t)\,\der t\right]. \tageq\label{ineq:intermediate_two_integrals}
    \end{align*}
    Now, applying Lemma~\ref{lemma:cvar_cost_inv_cdf_decision}, we can bound each of the two integrals in \eqref{ineq:intermediate_two_integrals}. For the first integral, there are two cases according to the two cases in Lemma~\ref{lemma:cvar_cost_inv_cdf_decision}: if $\delta - \rho > 1 - F_X(x+\epsilon)$, then $F_{C(X, x+\epsilon)}^{-1}(t) = 1 + F_X^{-1}(t + F_X(x+\epsilon)-1)$ on the domain of integration, so we have
    \begin{align*}
        \int_{\delta-\rho}^{1-\rho} F_{C(X, x+\epsilon)}^{-1}(t)\,\der t &= \int_{\delta-\rho}^{1-\rho} 1 + F_X^{-1}(t + F_X(x+\epsilon)-1)\,\der t \\
        &= \int_{F_X(x+\epsilon) - (1-\delta) -\rho}^{F_X(x+\epsilon)-\rho} 1 + F_X^{-1}(t)\,\der t \\
        &= \int_{F_X(x) - (1-\delta)}^{F_X(x)} 1 + F_X^{-1}(t)\,\der t \\
        &= (1-\delta)\cvar_\delta[C(X, x)]
    \end{align*}
    where the final equality follows by Lemma~\ref{lemma:cvar_cost_integral_general}. Dividing both sides by $(1-\delta)(x+\epsilon)$, we obtain
    \begin{equation} \label{eq:equal_lemma_cr_bound_1}
        \frac{1}{(1-\delta)(x+\epsilon)}\int_{\delta-\rho}^{1-\rho} F_{C(X, x+\epsilon)}^{-1}(t)\,\der t = \frac{\cvar_\delta[C(X, x)]}{x+\epsilon} < \alpha_\delta^\mu(x)
    \end{equation}
    since $\epsilon > 0$ implies $\frac{x}{x+\epsilon} < 1$. Similarly, if $\delta - \rho \leq 1 - F_X(x+\epsilon)$, the value of $F_{C(X, x+\epsilon)}^{-1}(t)$ depends on which part of the domain of integration contains $t$:
    \begin{align*}
        \int_{\delta-\rho}^{1-\rho} F_{C(X, x+\epsilon)}^{-1}(t)\,\der t &= \int_{\delta - \rho}^{1 - F_X(x+\epsilon)}x+\epsilon\,\der t + \int_{1 - F_X(x+\epsilon)}^{1 - \rho} 1 + F_X^{-1}(t + F_X(x+\epsilon)-1)\,\der t \\
        &= (1 - \delta - F_X(x))(x + \epsilon) + \int_{0}^{F_X(x)} 1 + F_X^{-1}(t)\,\der t.
    \end{align*}
    Dividing both sides by $(1-\delta)(x+\epsilon)$, we obtain
    \begin{align*}
        \frac{1}{(1-\delta)(x+\epsilon)}\int_{\delta-\rho}^{1-\rho} F_{C(X, x+\epsilon)}^{-1}(t)\,\der t &= \frac{1}{1-\delta}\left[(1 - \delta - F_X(x)) + \frac{1}{x+\epsilon}\int_{0}^{F_X(x)} 1 + F_X^{-1}(t)\,\der t\right] \\
        &< \frac{1}{(1-\delta)x}\left[(1 - \delta - F_X(x))x + \int_{0}^{F_X(x)} 1 + F_X^{-1}(t)\,\der t\right] \\
        &= \alpha_\delta^\mu(x), \tageq\label{eq:equal_lemma_cr_bound_2}
    \end{align*}
    where the final step is a consequence of Lemma~\ref{lemma:cvar_cost_integral_general}; note this exactly matches the bound \eqref{eq:equal_lemma_cr_bound_1} in the first case. For the second integral in \eqref{ineq:intermediate_two_integrals}, since $\rho = F_X(x+\epsilon) - F_X(x)$, we have $1 - \rho = 1 - F_X(x+\epsilon) + F_X(x) \geq 1 - F_X(x+\epsilon)$, so by Lemma~\ref{lemma:cvar_cost_inv_cdf_decision} we may calculate
    \begin{align*}
        \int_{1-\rho}^1 F_{C(X, x+\epsilon)}^{-1}(t)\,\der t &= \int_{1-\rho}^1 1 + F_X^{-1}(t + F_X(x+\epsilon)-1)\,\der t \\
        &= \int_{F_X(x)}^{F_X(x+\epsilon)} 1 + F_X^{-1}(t)\,\der t \\
        &\leq (F_X(x+\epsilon) - F_X(x))(1 + F_X^{-1}(F_X(x+\epsilon))) \tageq\label{ineq:using_monotonicity_of_inverse_cdf} \\
        &\leq (F_X(x+\epsilon) - F_X(x))(1 + x + \epsilon) \tageq\label{ineq:Finv_F_identity}
    \end{align*}
    where the bound \eqref{ineq:using_monotonicity_of_inverse_cdf} follows by monotonicity of the inverse CDF, and \eqref{ineq:Finv_F_identity} is from the well-known bound $F_X^{-1}(F_X(y)) \leq y$ (e.g., \cite[Lemma 1.17f]{wittingMathematischeStatistik1985}).

    Inserting \eqref{eq:equal_lemma_cr_bound_1}, \eqref{eq:equal_lemma_cr_bound_2}, and \eqref{ineq:Finv_F_identity} into \eqref{ineq:intermediate_two_integrals}, we obtain the bound
    \begin{equation} \label{ineq:dcr_bound_x_plus_eps}
        \alpha_\delta^\mu(x+\epsilon) < \alpha_\delta^\mu(x) + \frac{1}{1-\delta}\left(1 + \frac{1}{x+\epsilon}\right)(F_X(x+\epsilon) - F_X(x)).
    \end{equation}
    Right-continuity of the CDF ensures that the right-hand side of \eqref{ineq:dcr_bound_x_plus_eps} can be made at most $\alpha_\delta^\mu(x) + \gamma$ by choosing $\epsilon > 0$ sufficiently small, thus establishing the result.
\end{proof}

The second technical lemma we will need to prove the structural result asserts that if $X$ comes from a distribution with a probability atom at $x \in (0, 1]$, then there is a non-degenerate interval of slack in the $\dcr$. The assumed bound on the size of the probability atom is made without loss of generality due to Lemma~\ref{lemma:no_large_point_masses}, since we solely focus on the optimal algorithm.

\begin{lemma} \label{lemma:atom_implies_slack}
    Fix $\delta \in [0, 1)$, and let $X \sim \mu$ be a random variable supported in $[0, 1]$ with a probability atom of mass $\eta \leq (\alpha_\delta^* - 1)(1-\delta)$ at some $x \in (0, 1]$. Then there is some $\epsilon > 0$ and $\gamma > 0$ such that $\alpha_\delta^\mu(y) \leq \alpha_\delta^\mu(x) - \gamma$ for all $y \in [x-\epsilon, x)$.
\end{lemma}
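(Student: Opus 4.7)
The plan is to establish the lemma through the observation that the function $g(s) := \cvar_\delta[C(X,s)]$, where $C(X,s) = s\cdot\indic_{X>s} + (X+1)\cdot\indic_{X\leq s}$, has a \emph{strictly positive upward jump} at $s = x$ caused by the atom. Since $\alpha_\delta^\mu(s) = g(s)/s$, this jump will translate into a uniform gap between $\alpha_\delta^\mu(x)$ and $\alpha_\delta^\mu(y)$ on a left-neighborhood $[x-\epsilon,x)$. The first step is to verify left-continuity of $g$ below $x$: setting $p := F_X(x^-)$ so that $F_X(x) = p + \eta$, and applying Lemma~\ref{lemma:cvar_cost_integral_general} with $F_X(y) \to p$ as $y \uparrow x$, one obtains $g(y) \to g(x^-)$, where $g(x^-)$ is the formula from that lemma evaluated with $F_X(s)$ replaced by $p$ and $s$ replaced by $x$.

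Next I would prove the key claim $g(x) > g(x^-)$ via a three-case analysis based on the position of $p$ and $p + \eta$ relative to $1-\delta$. In case (i), $p + \eta \leq 1 - \delta$, both values use the first branch of Lemma~\ref{lemma:cvar_cost_integral_general}, and the fact that the atom forces $F_X^{-1}(t) = x$ on $(p, p+\eta]$ yields the clean identity $g(x) - g(x^-) = \eta/(1-\delta)$ by direct computation. In case (ii), $p < 1-\delta < p+\eta$, the two sides use different branches; splitting the integral for $g(x)$ at $p$ and applying the bound $\int_0^{p+\eta-(1-\delta)}(1+F_X^{-1}(t))\,\der t \leq (p+\eta-(1-\delta))(1+x)$ gives, after arithmetic simplification, $g(x) - g(x^-) \geq (1-p-\delta)/(1-\delta) > 0$.

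The most delicate case is (iii), $p \geq 1-\delta$, where both values use the second branch of Lemma~\ref{lemma:cvar_cost_integral_general}. Applying the change of variables $u = t - \eta$ to the integral defining $g(x)$ yields
\begin{equation*}
(1-\delta)\bigl(g(x) - g(x^-)\bigr) = \int_{p-(1-\delta)}^{p}\bigl[F_X^{-1}(u+\eta) - F_X^{-1}(u)\bigr]\,\der u.
\end{equation*}
For $u \in (p-\eta, p)$ we have $u + \eta \in (p, p+\eta)$ and hence $F_X^{-1}(u+\eta) = x$; meanwhile the left-limit relation $F_X(x^-) = p$ implies that for every $u < p$ there is some $s < x$ with $F_X(s) \geq u$, giving $F_X^{-1}(u) < x$ strictly. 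The integrand is therefore strictly positive on the full open interval $(p-\eta, p)$ of length $\eta$, so the integral is strictly positive. The hypothesis $\eta \leq (\alpha_\delta^* - 1)(1-\delta) < 1 - \delta$, which uses $\alpha_\delta^* < 2$ from Theorem~\ref{theorem:cvar_skirental_firstalg}, ensures $p + \eta - (1-\delta) \leq p$, justifying the integral splits in all three cases.

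Finally, to convert this jump in $g$ into slack in $\alpha_\delta^\mu$, I would decompose
\begin{equation*}
\alpha_\delta^\mu(x) - \alpha_\delta^\mu(y) = \frac{g(x) - g(y)}{x} - \frac{g(y)(x-y)}{xy}.
\end{equation*}
As $y \uparrow x$, the first term tends to $(g(x) - g(x^-))/x > 0$ while the second vanishes since $g$ is bounded on $[0, 2]$. Setting $\gamma := (g(x) - g(x^-))/(2x)$ and choosing $\epsilon > 0$ sufficiently small yields $\alpha_\delta^\mu(y) \leq \alpha_\delta^\mu(x) - \gamma$ for all $y \in [x-\epsilon, x)$, as required. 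The main technical obstacle is case (iii), where strict positivity hinges on the qualitative fact that $F_X^{-1}(u) < x$ for \emph{every} $u < p$—a consequence of the left-limit definition of $p$ rather than any quantitative gap—so one must argue via positive Lebesgue measure of a sign-preserving set rather than a pointwise lower bound.
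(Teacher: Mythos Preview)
Your proposal is correct and follows essentially the same route as the paper: both arguments use the integral representation of Lemma~\ref{lemma:cvar_cost_integral_general} to show that the $\cvar_\delta$-cost has a strictly positive upward jump at $s=x$ because $F_X^{-1}(t)=x$ on $(p,p+\eta]$ while $F_X^{-1}(t)<x$ for $t<p$, and then convert this jump into a uniform left-neighborhood gap in $\alpha_\delta^\mu$. The paper writes out only your case~(iii) (under the assumption $F_X(x-\epsilon)>1-\delta$ for some $\epsilon>0$) and declares the remaining cases ``essentially identical,'' whereas you carry out all three cases explicitly; you also separate the analysis of $g(s)=\cvar_\delta[C(X,s)]$ from the normalization by $s$, which makes the final step slightly cleaner than the paper's direct limit computation of $\alpha_\delta^\mu(x)-\alpha_\delta^\mu(x-\epsilon)$.
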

\begin{proof}
    Let us assume that there exists an $\epsilon > 0$ such that $F_X(x - \epsilon) > 1-\delta$; the alternative case follows from an essentially identical argument. Then by Lemma~\ref{lemma:cvar_cost_integral_general},
    \begin{align*}
        \alpha_\delta^\mu(x) - \alpha_\delta^\mu(x - \epsilon) &= \frac{1}{1-\delta}\left[\frac{1}{x}\int_{F_X(x)-(1-\delta)}^{F_X(x)} 1 + F_X^{-1}(t)\,\der t - \frac{1}{x-\epsilon}\int_{F_X(x-\epsilon)-(1-\delta)}^{F_X(x-\epsilon)} 1 + F_X^{-1}(t)\,\der t\right] \\
        &\overset{\text{as $\epsilon \todown 0$}}{=} \frac{1}{(1-\delta)x}\left[\int_{F_X(x)-\eta}^{F_X(x)} 1 + F_X^{-1}(t)\,\der t - \int_{F_X(x) - \eta - (1-\delta)}^{F_X(x) - (1-\delta)} 1 + F_X^{-1}(t)\,\der t\right] \tageq\label{eq:second_integral_limit} \\
        &> 0,
    \end{align*}
    where \eqref{eq:second_integral_limit} follows by taking the limit $\epsilon \todown 0$ and by the assumption $\eta \leq (\alpha_\delta^* - 1)(1-\delta) < 1-\delta$. The final inequality follows from $F_X^{-1}(t) = x$ on $t \in (F_X(x) - \eta, F_X(x)]$, along with the fact that $F_X^{-1}(t) < x$ on $t \leq F_X(x) - \eta$, which implies that the integrand of the second integral in \eqref{eq:second_integral_limit} is strictly less than the first integrand (which is $1+x$), as $\eta < 1-\delta$. This strict inequality holds in the limit, so there exists some $\gamma > 0$ for which $\lim_{\epsilon \todown 0} \alpha_\delta^\mu(x) - \alpha_\delta^\mu(x - \epsilon) = 2\gamma$; that this holds in the limit implies that there is some $\epsilon > 0$ such that $\alpha_\delta^\mu(y) \leq \alpha_\delta^\mu(x) - \gamma$ for all $y \in [x-\epsilon, x)$, as desired.
\end{proof}

We are now prepared to prove the structural result establishing that the optimal algorithm has $\dcr$ that is independent of the adversary's choice of season duration $x \in (0, 1)$.

\begin{lemma} \label{lemma:any_s_tight}
    Let $\delta \in [0, 1)$, and let $\mu$ be an algorithm with optimal $\dcr$ for continuous-time ski rental, so $\alpha_\delta^{\mu} = \alpha_\delta^*$. Then $\alpha_\delta^{\mu}(x) = \alpha_\delta^{\mu}$ for all $x \in (0, 1)$.
\end{lemma}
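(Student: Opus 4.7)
The plan is proof by contradiction, in analogy with Lemma~\ref{lemma:s_1_tight}. Suppose there exists $x_0 \in (0, 1)$ with $\alpha_\delta^\mu(x_0) < \alpha_\delta^\mu$; by Lemma~\ref{lemma:dcr_slack_interval} there exist $\epsilon, \gamma > 0$ with $\alpha_\delta^\mu(y) \leq \alpha_\delta^\mu - \gamma$ for all $y \in [x_0, x_0 + \epsilon]$. A preliminary observation is that $\mu$ must place positive mass on $(x_0 + \epsilon, 1]$, since otherwise the cost distribution for any $s \geq x_0 + \epsilon$ coincides with that at $s = x_0 + \epsilon$, giving $\alpha_\delta^\mu(1) \leq \alpha_\delta^\mu(x_0+\epsilon) \leq \alpha_\delta^\mu - \gamma$ and contradicting Lemma~\ref{lemma:s_1_tight}. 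My goal is to construct an algorithm $\hat{\mu}$ with $\alpha_\delta^{\hat{\mu}} < \alpha_\delta^\mu$ by transferring a small mass from the upper tail of $\mu$ into the slack interval, thereby deriving a contradiction.

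Concretely, choose a point $y_1$ in the support of $\mu$ with $y_1 > x_0 + \epsilon$ and $F_X(y_1) > \delta$ — so that $y_1$ lies in the upper $(1-\delta)$-tail of $X \sim \mu$; such a $y_1$ exists since the upper tail has total mass $1-\delta > 0$ — and choose $y_2 \in (x_0, x_0 + \epsilon)$. Define $\hat{\mu}$ by removing a small mass $\tau > 0$ from a neighborhood of $y_1$ in $\mu$ and reintroducing it as a point mass at $y_2$. I then analyze $\alpha_\delta^{\hat{\mu}}(s)$ by cases: (a) for $s \in (0, y_2)$, both the removed and added mass lie above $s$ and contribute identical cost $s$, so the cost distribution — and hence $\cvar_\delta$ — is unchanged, giving $\alpha_\delta^{\hat{\mu}}(s) = \alpha_\delta^\mu(s) \leq \alpha_\delta^\mu$; (b) for $s \in [y_2, y_1)$, the transferred mass switches its cost contribution from $s$ to $y_2 + 1 > s$, producing an $O(\tau)$ increase in $\cvar_\delta$ absorbed by the slack $\gamma$ provided $\tau$ is sufficiently small and $[y_2, y_1)$ lies in a slack interval; (c) for $s \in [y_1, 1]$, the transferred mass switches from $y_1 + 1$ to $y_2 + 1 < y_1 + 1$, making the cost distribution first-order stochastically smaller and weakly reducing $\cvar_\delta$, with a strict reduction at $s = 1$ because $y_1$ lies in the upper $(1-\delta)$-tail of $X$. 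Combined with Lemma~\ref{lemma:s_1_tight}, this gives $\alpha_\delta^{\hat{\mu}}(1) < \alpha_\delta^\mu(1) = \alpha_\delta^\mu$, and the three cases together yield $\alpha_\delta^{\hat{\mu}} < \alpha_\delta^\mu$, contradicting the optimality of $\mu$.

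The main technical obstacle is case (b): when $F_X^{-1}(\delta)$ lies well above $x_0 + \epsilon$, the admissible choices of $y_1$ may be far to the right, so $[y_2, y_1)$ may extend past the original slack interval $[x_0, x_0+\epsilon]$. Handling this requires iterating Lemma~\ref{lemma:dcr_slack_interval} at interior points of the slack region to propagate slack rightward, extending the slack interval (with a possibly smaller slack parameter) all the way up to $y_1$, together with a quantitative Lipschitz-type bound showing that the mass-$\tau$ transfer induces an $O(\tau)$ change in $\cvar_\delta$ uniformly over $s$. Additional care is needed to handle atoms of $\mu$: Lemma~\ref{lemma:no_large_point_masses} bounds atom sizes so the removal near $y_1$ is well-defined for sufficiently small $\tau$, and the newly-created atom at $y_2$ interacts cleanly with Lemma~\ref{lemma:atom_implies_slack} without reintroducing unwanted structural issues.
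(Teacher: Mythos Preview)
Your overall strategy—contradict optimality by transferring mass and then invoke Lemma~\ref{lemma:s_1_tight}—matches the paper's. Cases (a) and (c) are essentially right. The gap is precisely the one you flag in case (b), and your proposed fix does not close it.

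The problem is that iterating Lemma~\ref{lemma:dcr_slack_interval} on the \emph{fixed} measure $\mu$ cannot manufacture slack where $\mu$ has none. That lemma says only that if $\alpha_\delta^\mu(x) < \alpha_\delta^\mu$ then the same strict inequality persists on some small right-neighborhood $[x,x+\epsilon]$; iterating from interior points merely explores the set $S=\{y:\alpha_\delta^\mu(y)<\alpha_\delta^\mu\}$, it cannot enlarge it. If $\bar x \coloneqq \sup S$ satisfies $\bar x < F_X^{-1}(\delta)$, then every admissible $y_1$ (one with $F_X(y_1)>\delta$, which you need for the strict drop at $s=1$) lies strictly above $\bar x$, and on $(\bar x, y_1)$ you have $\alpha_\delta^\mu(s)=\alpha_\delta^\mu$ exactly. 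Your transfer then \emph{increases} the cost for those $s$ (mass moves from the event $X>s$, contributing $s$, to the event $X\le s$, contributing $y_2+1>s$), so $\alpha_\delta^{\hat\mu}(s)>\alpha_\delta^\mu$ for arbitrarily small $\tau>0$, and the contradiction collapses.

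What is missing is the mechanism the paper uses: slack is propagated not by iterating a lemma on the unchanged $\mu$, but by \emph{modifying the measure itself} in stages. Moving a small amount of mass leftward \emph{inside} the current slack region (so case (b) is harmless there) strictly lowers $\alpha_\delta^{\tilde\mu}(z)$ for every $z$ with $F_X(z)-F_X(y)<1-\delta$, because the shifted mass now sits in the worst-$(1-\delta)$ subpopulation for those $z$ with a smaller cost contribution. This \emph{creates new slack} up to $(1-\delta)$ away in CDF-distance, past the old boundary $\bar x$. Repeating $O\bigl(\tfrac{1}{1-\delta}\bigr)$ times pushes slack all the way to $s=1$, at which point Lemma~\ref{lemma:s_1_tight} gives the contradiction (note the paper concludes via ``$\hat\mu$ is optimal yet has slack at $1$'' rather than ``$\alpha_\delta^{\hat\mu}<\alpha_\delta^\mu$''). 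Your single-shot transfer tries to jump this entire distance at once and cannot, because the intermediate tight points block it.
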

\begin{proof}
    Suppose for the sake of contradiction that $\alpha_\delta^\mu(x) < \alpha_\delta^\mu$ for some $x \in (0, 1)$. We will construct another algorithm $\hat{\mu}$ with $\alpha_\delta^{\hat{\mu}} \leq \alpha_\delta^\mu$ and $\alpha_\delta^{\hat{\mu}}(1) < \alpha_\delta^{\hat{\mu}}$, which by Lemma~\ref{lemma:s_1_tight} immediately implies that $\hat{\mu}$, and therefore $\mu$, is not optimal. In the following proof, we will say that $\mu$ has ``slack'' at $x$ when $\alpha_\delta^\mu(x) < \alpha_\delta^\mu$.

    Define $x = \sup\{y \in [0, 1] : \alpha_\delta^\mu(y) < \alpha_\delta^\mu\}$. If $x = 1$ and $\alpha_\delta^\mu(x) < \alpha_\delta^\mu$, clearly by Lemma~\ref{lemma:s_1_tight} we're done. On the other hand, if $x < 1$, we must have $\alpha_\delta^\mu(x) = \alpha_\delta^\mu$ by the supremum definition of $x$ and Lemma~\ref{lemma:dcr_slack_interval}. Thus, we may proceed assuming that $x \in (0, 1]$ and $\alpha_\delta^\mu(x) = \alpha_\delta^\mu$. Note that the interval $(x, 1]$, if it is nonempty, cannot contain any probability atoms, by Lemma~\ref{lemma:atom_implies_slack} and the definition of $x$.

    We will now proceed to prove the result in two parts. First, we will show that we can construct a distribution $\tilde{\mu}$ with no worse $\dcr$ than $\mu$ that has slack at $x$, i.e., $\alpha_\delta^{\tilde{\mu}}(x) < \alpha_\delta^{\tilde{\mu}}$. Then, we will show that we can construct new distributions iteratively propagating this slack toward $1$, eventually culminating with the desired $\hat{\mu}$ with $\alpha_\delta^{\hat{\mu}} \leq \alpha_\delta^\mu$ and $\alpha_\delta^{\hat{\mu}}(1) < \alpha_\delta^{\hat{\mu}}$.

    \paragraph{Part (1): Obtaining slack at $x$} We break into two cases depending on whether $\mu$ has a probability atom at $x$.
    \begin{enumerate}[(a)]
        \item Suppose $\mu$ has a probability atom at $x$ of size $\zeta$; by Lemma~\ref{lemma:no_large_point_masses}, $\zeta$ must be bounded as
        $$\zeta \leq (\alpha_\delta^* - 1)(1-\delta) < 1-\delta.$$
        Then by Lemma~\ref{lemma:atom_implies_slack}, there is an $\epsilon > 0$ and $\gamma > 0$ such that $\alpha_\delta^\mu(y) \leq \alpha_\delta^\mu - \gamma$ for all $y \in [x-\epsilon, x)$. Define a measure $\tilde{\mu}$ from $\mu$ by moving a small amount of mass $\eta > 0$ from $x$ to $x - \epsilon$; it can easily be seen that this will not change $\alpha_\delta^\mu(y)$ for $y < x-\epsilon$, it will strictly decrease $\alpha_\delta^\mu(x)$ due to shifting of its mass to an action with a smaller cost, and similarly it will either decrease or not affect $\alpha_\delta^\mu(y)$ for $y > x$. Note that Lemma~\ref{lemma:no_large_point_masses}'s bound on $\zeta$'s size $\zeta \leq (\alpha_\delta^* - 1)(1-\delta) < 1-\delta$ is crucial to obtain that $\alpha_\delta^\mu(x)$ strictly decreases, since if $\zeta$ were larger than $1-\delta$, decreasing its mass by a small amount might not change the $\dcr$ at $x$.

        On the interval $[x-\epsilon, x)$, the $\dcr$ will increase, but at most by $\frac{\eta}{1-\delta}\left(1 + \frac{1}{x-\epsilon}\right)$, which by choosing $\eta$ small can be kept sufficiently small that $\alpha_\delta^{\tilde{\mu}}(y) < \alpha_\delta^\mu$ remains true for all $y \in [x-\epsilon, x)$. To see this, note that this movement of mass increases $F_X$ by $\eta$ on the interval $[x-\epsilon, x)$; as a result, on the interval $(F_X(x-\epsilon), F_X(x-\epsilon)+\eta]$, $F_X^{-1}$ will decrease to $x-\epsilon$. That is, $\tilde{X} \sim \tilde{\mu}$ will have inverse CDF of the form
        \begin{equation} \label{eq:move_atom_inv_cdf}
            F_{\tilde{X}}^{-1}(p) = \begin{cases} F_X^{-1}(p) &\text{if $p \leq F_X(x-\epsilon)$} \\ x-\epsilon &\text{if $p \in (F_X(x-\epsilon), F_X(x-\epsilon) + \eta]$} \\
            F_X^{-1}(p-\eta) &\text{if $p \in (F_X(x-\epsilon) + \eta, F_X(x)]$} \\
            F_X^{-1}(p) &\text{otherwise}
            \end{cases}
        \end{equation}

        Assuming that $F_X(x) > 1-\delta$ (the alternative case proceeds similarly), we may compute, using Lemma~\ref{lemma:cvar_cost_integral_general}:
        \begin{align*}
            (1-\delta)(x-\epsilon) \cdot \alpha_\delta^{\tilde{\mu}}(x-\epsilon) &=
            \int_{F_{\tilde{X}}(x-\epsilon) - (1-\delta)}^{F_{\tilde{X}}(x-\epsilon)} 1 + F_{\tilde{X}}^{-1}(t)\,\der t \\
            &= \int_{F_X(x-\epsilon) + \eta - (1-\delta)}^{F_X(x-\epsilon) + \eta} 1 + F_{\tilde{X}}^{-1}(t)\,\der t \\
            &= \eta (1 + x - \epsilon) + \int_{F_X(x-\epsilon) + \eta - (1-\delta)}^{F_X(x-\epsilon)} 1 + F_X^{-1}(t)\,\der t \\
            &\leq \eta(1 + x - \epsilon) + (1-\delta)(x-\epsilon)\cdot\alpha_\delta^\mu(x-\epsilon).
        \end{align*}
        implying that $\alpha_\delta^{\tilde{\mu}}(x-\epsilon) \leq \alpha_\delta^\mu(x-\epsilon) + \frac{\eta}{1-\delta}\left(1 + \frac{1}{x-\epsilon}\right)$.
        Similarly, for any $y \in (x-\epsilon, x)$ with $F_X(y) - F_X(x) < 1-\delta$ (any others are not impacted by this change), we can obtain an analogous bound:
        \begin{align*}
            (1-\delta)y \cdot \alpha_\delta^{\tilde{\mu}}(y) \leq \eta(1 + x - \epsilon) + (1-\delta)y \cdot \alpha_\delta^{\mu}(y)
        \end{align*}
        implying that $\alpha_\delta^{\tilde{\mu}}(y) \leq \alpha_\delta^\mu(y) + \frac{\eta}{1-\delta}\frac{1 + x - \epsilon}{y} \leq \alpha_\delta^\mu(y) + \frac{\eta}{1-\delta}\left(1 + \frac{1}{x-\epsilon}\right)$. Note that these bounds are coarse, and intuitively capture the idea that if we introduce a new point mass of size $\eta$ within the support of the worst-case loss subpopulation distribution realizing the $\dcvar$, the worst that this added loss can do is increase the $\dcvar$ in proportion to the loss value that it adds, weighted by its probability normalized by $1-\delta$. Thus, it is clear that by selecting $\eta$ sufficiently small, we can guarantee that $\alpha_\delta^{\tilde{\mu}}(y) < \alpha_\delta^\mu(y)$ for all $y \in [x-\epsilon, x)$, while we have still strictly decreased $\alpha_\delta^\mu(x)$ by moving some of its mass to an earlier decision, thus introducing slack at $x$.

        \item Suppose that $\mu$ does not have a point mass at $x$; thus, $F_X$ is continuous at $x$. We break into two further cases:

        \begin{enumerate}[(i)]
            \item Suppose that there is some $y < x$ such that $\mu$ has a point mass at $y$ and
            $$\lim_{h \todown 0} F_X(x)-F_X(y-h) < 1-\delta.$$
            By the argument in part (a), we can construct a measure $\tilde{\mu}$ by moving some small amount $\eta > 0$ of mass from $y$ to $y - \epsilon$ for some $\epsilon > 0$, which will not impact $\alpha_\delta^\mu(z)$ for $z < y-\epsilon$, will increase it in a controlled manner (such that we can maintain slack by choosing $\eta$ sufficiently small) for $z \in [y-\epsilon, y)$, and will strictly decrease it for $z = y$. We can choose $\epsilon$ sufficiently small that $F_X(x) - F_X(y-\epsilon) \leq 1-\delta$, by the strict inequality in the limit assumed above. Moreover, this modification will also strictly decrease $\alpha_\delta^\mu(z)$ for $z \in [y, x]$. In particular, for $x$ we have, using the inverse CDF expression \eqref{eq:move_atom_inv_cdf} and Lemma~\ref{lemma:cvar_cost_integral_general},
            \begin{align*}
                (1-\delta)x\cdot \alpha_\delta^{\tilde{\mu}}(x) &= \int_{F_{\tilde{X}}(x)-(1-\delta)}^{F_{\tilde{X}}(x)}1 + F_{\tilde{X}}^{-1}(t)\,\der t \\
                &= \int_{F_{X}(x)-(1-\delta)}^{F_{X}(x)}1 + F_{\tilde{X}}^{-1}(t)\,\der t \\
                &= \int_{F_X(x) - (1-\delta)}^{F_X(y)}1 + F_{\tilde{X}}^{-1}(t)\,\der t + \int_{F_X(y)}^{F_X(x)}1 + F_{\tilde{X}}^{-1}(t)\,\der t \\
                &< \int_{F_X(x) - (1-\delta)}^{F_X(y)}1 + F_{X}^{-1}(t)\,\der t + \int_{F_X(y)}^{F_X(x)}1 + F_{X}^{-1}(t)\,\der t \\
                &= (1-\delta)x\cdot \alpha_\delta^\mu(x),
            \end{align*}
            where the strict inequality results from $F_X^{-1}(t)$ being strictly decreased on the domain \\$(F_X(y-\epsilon), F_X(y-\epsilon)+\eta]$
            and the choice of $\epsilon$ satisfying $F_X(x) - F_X(y-\epsilon) \leq 1-\delta$. In the above derivation we have assumed that $F_X(x) > 1-\delta$; the alternative case proceeds similarly.

            Thus, if the worst $(1-\delta)$ fraction of loss outcomes when the adversary chooses $x$ contains decisions (of positive probability) from a non-degenerate set $[y-\epsilon, y]$, and $y$ has a probability atom, we have that introducing slack at $y$ in turn introduces slack at $x$.

            \item Suppose that $\mu$ has no point mass at any $y$ satisfying the property that $y < x$ and \\$\lim_{h \todown 0} F_X(x)-F_X(y-h) < 1-\delta$. This implies that there must be a small interval to the left of $x$ on which $F_X$ is continuous, since $F_X$ must increase from $F_X(x) - (1-\delta)$ to $F_X(x)$ without any discontinuities. There thus must exist, by the supremum definition of $x$, some $y < x$ at which $F_X$ is continuous, $\alpha_\delta^\mu(y) < \alpha_\delta^\mu$, and $F_X(x) - F_X(y) < 1-\delta$. Continuity of the $\dcr$ on this subinterval, and the bound \eqref{ineq:dcr_bound_x_plus_eps} in particular, imply that there is some non-degenerate interval $[y, y+\epsilon]$ and $\gamma > 0$ such that $\alpha_\delta^\mu(z) \leq \alpha_\delta^\mu - \gamma$ for all $z \in [y, y+\epsilon]$. By the assumption that $\alpha_\delta^\mu(x) = \alpha_\delta^\mu$ while there are no point masses on the interval $[y, x]$, the bound \eqref{ineq:dcr_bound_x_plus_eps} in particular tells us that we can choose $\epsilon$ such that the half-open interval $(y, y+\epsilon]$ has strictly positive measure. Then suppose we move a small fraction $\eta$ of the probability mass on $(y, y+\epsilon]$ to $y$. By the same basic argument as employed previously, this will not affect the $\dcr$ $\alpha_\delta^\mu(z)$ for $z < y$, and if we choose $\eta$ small enough, it will increase $\alpha_\delta^\mu(z)$ for $z \in [y, y+\epsilon)$ in a controlled fashion so that we can keep $\alpha_\delta^{\tilde{\mu}}(z) < \alpha_\delta^\mu$. Moreover, by the assumption $F_X(x) - F_X(y) < 1-\delta$, $\alpha_\delta^\mu(x)$ will strictly decrease, just as it did in the previous subcase. Thus, we can introduce slack at $x$ while not increasing the $\dcr$.
        \end{enumerate}
    \end{enumerate}

    Having obtained a measure $\tilde{\mu}$ with a $\dcr$ not worse than $\mu$ and with slack at $x$, we now proceed to the second part.

    \paragraph{Part (2): Obtaining slack at $1$} If $x = 1$, we are done; otherwise, recall that by definition $(x, 1]$ cannot have any probability atoms, since this would introduce slack in the interval by Lemma~\ref{lemma:atom_implies_slack}, so $F_{\tilde{X}}$ is continuous on $[x, 1]$ (note that at $x$ itself, it may only be right-continuous). The argument employed to obtain slack at $1$ follows an iterated form of the approach in case (b.ii) from part 1: Because $\tilde{\mu}$ has slack at $x$, the bound \eqref{ineq:dcr_bound_x_plus_eps} implies that it has slack in an interval $[x, x+\epsilon]$ with the property that $\tilde{\mu}(x, x+\epsilon] > 0$. Then we may transfer a fraction $\eta$ of the probability mass on $(x, x+\epsilon]$ to $x$, and as long as $\eta$ is chosen sufficiently small, this will increase $\alpha_\delta^{\tilde{\mu}}(y)$ for $y \in [x, x+\epsilon)$ while maintaining slack, and it will strictly decrease $\alpha_\delta^{\tilde{\mu}}(z)$ for all $z \geq x+\epsilon$ such that $F_{\tilde{X}}(z) - F_{\tilde{X}}(x) \leq 1-\delta$ (note we can choose $\epsilon$ so that this set of $z$ is nonempty). Thus we can ``propagate'' the slack in the $\dcr$ to decisions whose CDF value is up to $1-\delta$ greater than that of the original slack point, $x$, without increasing the $\dcr$ of the algorithm. Iteratively applying this construction at most $\calO(\frac{1}{1-\delta})$ times, we eventually obtain an algorithm $\hat{\mu}$ with $\dcr$ no worse than $\mu$, and with slack at $1$: $\alpha_\delta^{\hat{\mu}}(1) < \alpha_\delta^{\hat{\mu}} \leq \alpha_\delta^\mu$.
\end{proof}

Using the structural characterization of the optimal algorithm in terms of its $\dcr$'s indifference to the adversary's choice of ski season duration, we may now prove that the optimal algorithm has a CDF that is both strictly increasing and continuous (i.e., one-to-one) on the interval $[0, 1]$.

\begin{lemma} \label{lemma:cont_time_optimal_strictly_increasing}
    Fix $\delta \in [0, 1)$, and let $X \sim \mu$ be a random variable yielding the optimal $\dcr$ for continuous-time ski rental, i.e., $\alpha_\delta^\mu = \alpha_\delta^*$. Then $F_X$ is strictly increasing on $[0, 1]$.
\end{lemma}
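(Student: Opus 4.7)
The plan is to proceed by contradiction, leveraging the adversary-indifference characterization from Lemma~\ref{lemma:any_s_tight}. Suppose toward a contradiction that $F_X$ is not strictly increasing on $[0, 1]$. By monotonicity of the CDF, there then exist $0 \leq a < b \leq 1$ such that $F_X(y) = F_X(a)$ for every $y \in [a, b]$, i.e., the optimal algorithm assigns zero probability mass to $(a, b]$. My goal is to show that $\alpha_\delta^\mu(y)$ cannot remain constantly equal to $\alpha_\delta^*$ on the nondegenerate subinterval $(a, b) \cap (0, 1)$, which contradicts the indifference property and so forces $F_X$ to be strictly increasing.

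The key step is to apply Lemma~\ref{lemma:cvar_cost_integral_general} to $\alpha_\delta^\mu(y)$ for $y$ inside the purported gap. Because $F_X(y)$ is constant on $[a, b]$, the limits of integration in Lemma~\ref{lemma:cvar_cost_integral_general} do not depend on $y$, so the integral terms are just constants in $y$. In the case $F_X(a) > 1-\delta$, one finds $\alpha_\delta^\mu(y) = C/[(1-\delta)y]$, where $C = \int_{F_X(a) - (1-\delta)}^{F_X(a)}(1 + F_X^{-1}(t))\,\der t > 0$ (the integrand is at least $1$), which is strictly decreasing in $y$. In the complementary case $F_X(a) \leq 1 - \delta$, the formula gives $\alpha_\delta^\mu(y) = \frac{1-\delta - F_X(a)}{1-\delta} + \frac{C_1}{(1-\delta)y}$ with $C_1 = \int_0^{F_X(a)}(1 + F_X^{-1}(t))\,\der t \geq 0$; if $F_X(a) > 0$ then $C_1 > 0$ and again $\alpha_\delta^\mu(y)$ is strictly decreasing in $y$, while if $F_X(a) = 0$ the expression collapses to the constant $1$, which is strictly less than $\alpha_\delta^* > 1$ (using the lower bound from Theorem~\ref{theorem:cont_time_ski_lower_bound}, or simply the classical bound $\alpha_\delta^* \geq \frac{e}{e-1}$).

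In every sub-case $\alpha_\delta^\mu(y)$ fails to equal $\alpha_\delta^*$ on all of $(a, b) \cap (0, 1)$, contradicting Lemma~\ref{lemma:any_s_tight}. Thus no such gap $[a, b]$ can exist, so $F_X$ is strictly increasing on $[0, 1]$. The main obstacle is essentially bookkeeping: correctly selecting the piecewise branch of Lemma~\ref{lemma:cvar_cost_integral_general} and verifying that the constant in the numerator is strictly positive. No further perturbation or reshuffling of mass is required, since the heavy lifting has already been done by the structural Lemmas~\ref{lemma:s_1_tight} and~\ref{lemma:any_s_tight} which reduced optimality to an identity on the inverse CDF.
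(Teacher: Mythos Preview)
Your argument is correct and follows essentially the same route as the paper: assume a flat interval $[a,b]$, use the indifference property from Lemma~\ref{lemma:any_s_tight}, and derive a contradiction by showing $\alpha_\delta^\mu(y)$ is strictly decreasing (or identically $1$) on the gap. The only cosmetic difference is that the paper cites the derived strict inequality \eqref{ineq:dcr_bound_x_plus_eps} from Lemma~\ref{lemma:dcr_slack_interval} (which becomes $\alpha_\delta^\mu(a+\epsilon) < \alpha_\delta^\mu(a)$ once $F_X(a+\epsilon)=F_X(a)$), whereas you compute the explicit form $\alpha_\delta^\mu(y)=\text{const}+C/[(1-\delta)y]$ directly from Lemma~\ref{lemma:cvar_cost_integral_general}; these are two phrasings of the same calculation.
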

\begin{proof}
    When $\delta \in (0, 1)$, this is an immediate consequence of the strict inequality \eqref{ineq:dcr_bound_x_plus_eps} in the proof of Lemma~\ref{lemma:dcr_slack_interval}: if $F_X$ weren't strictly increasing, there would exist a non-degenerate interval $[a, b] \subseteq [0, 1]$ with $F_X(x) = c$ for all $x \in [a, b]$, which by \eqref{ineq:dcr_bound_x_plus_eps} would imply that $\alpha_\delta^\mu(a + \epsilon) < \alpha_\delta^\mu(a)$ for some small $\epsilon > 0$, contradicting (via Lemma~\ref{lemma:any_s_tight}) the optimality of $\mu$. Likewise, in the $\delta = 0$ case, $F_X$ not strictly increasing means $F_X(a + \epsilon) = F_X(a)$ for all sufficiently small $\epsilon$, so the expression \eqref{eq:expected_cr_expression_cdf} in the proof of Lemma~\ref{lemma:dcr_slack_interval} implies that $\alpha_0^\mu$ will strictly decrease for some small interval starting at $a$, again contradicting the optimality of $\mu$ by Lemma~\ref{lemma:any_s_tight}.
\end{proof}

\begin{lemma} \label{lemma:cont_time_optimal_continuous}
    Fix $\delta \in [0, 1)$, and let $X \sim \mu$ be a random variable yielding the optimal $\dcr$ for continuous-time ski rental, i.e., $\alpha_\delta^\mu = \alpha_\delta^*$. Then $F_X$ is continuous on $[0, 1]$, $F_X(0) = 0$, and $F_X(1) = 1$.
\end{lemma}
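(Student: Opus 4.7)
The plan is to reduce the lemma to (i) absence of atoms for $F_X$ on $(0, 1]$, (ii) absence of an atom at $0$, and (iii) the boundary value $F_X(1)=1$, since the strict monotonicity from Lemma~\ref{lemma:cont_time_optimal_strictly_increasing} together with the absence of atoms and the boundary values gives continuity on all of $[0,1]$.

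For part (i) I would argue by contradiction. Suppose $\mu$ has an atom at some $x \in (0, 1]$. By Lemma~\ref{lemma:no_large_point_masses} the atom has size at most $(\alpha_\delta^* - 1)(1-\delta)$, so Lemma~\ref{lemma:atom_implies_slack} applies and yields an $\epsilon > 0$ and $\gamma > 0$ with $\alpha_\delta^\mu(y) \leq \alpha_\delta^\mu(x) - \gamma$ for all $y \in [x-\epsilon, x)$. But Lemmas~\ref{lemma:s_1_tight} and~\ref{lemma:any_s_tight} together give $\alpha_\delta^\mu(y) = \alpha_\delta^\mu$ for \emph{every} $y \in (0, 1]$, ruling this out. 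Hence $F_X$ is continuous on $(0, 1]$.

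For part (ii) I would directly use the integral form in Lemma~\ref{lemma:cvar_cost_integral_general}. If $\mu(\{0\}) = \eta > 0$, then $F_X^{-1}(t) = 0$ on $(0, \eta]$, so for every sufficiently small $s \in (0, 1]$ the $\cvar_\delta$ of the cost is at least $\min\{\eta, 1-\delta\}/(1-\delta)$: in the case $F_X(s) \leq 1-\delta$ from the $\int_0^\eta 1\,\der t$ piece of the inner integral, and in the case $F_X(s) > 1-\delta$ because the entire window $[F_X(s)-(1-\delta), F_X(s)]$ lies in a region where the integrand is at least $1$. Dividing by $\min\{s,1\}=s$ gives $\alpha_\delta^\mu(s) \to \infty$ as $s \todown 0$, contradicting $\alpha_\delta^\mu \leq \alpha_\delta^{p_\delta} < 2$ from Theorem~\ref{theorem:cvar_skirental_firstalg}. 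So $\mu(\{0\}) = 0$, i.e., $F_X(0) = 0$.

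Part (iii), $F_X(1) = 1$, is immediate because Lemma~\ref{lemma:cont_ski_rental_01_support} allows us to assume $\mu$ is supported in $[0,1]$. I expect part (ii) to be the subtle step, since Lemma~\ref{lemma:atom_implies_slack} crucially moves mass \emph{to the left} of an atom and thus does not cover the left endpoint; however, the direct unboundedness-as-$s\todown 0$ argument via Lemma~\ref{lemma:cvar_cost_integral_general} sidesteps this and is short. The remaining details (handling the two cases of Lemma~\ref{lemma:cvar_cost_integral_general} and verifying $F_X^{-1} \equiv 0$ on $(0,\eta]$ from $F_X(0)=\eta$) are routine.
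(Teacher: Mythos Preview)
Your proposal is correct and follows essentially the same approach as the paper: use Lemma~\ref{lemma:cont_ski_rental_01_support} for $F_X(1)=1$, invoke Lemma~\ref{lemma:atom_implies_slack} (with the size bound from Lemma~\ref{lemma:no_large_point_masses}) together with the tightness Lemmas~\ref{lemma:s_1_tight}--\ref{lemma:any_s_tight} to rule out atoms in $(0,1]$, and handle an atom at $0$ by observing that it forces $\alpha_\delta^\mu(s)\to\infty$ as $s\todown 0$. The paper compresses the last step to the single remark ``if $x=0$, $\mu$ cannot be competitive''; your explicit computation via Lemma~\ref{lemma:cvar_cost_integral_general} is a fine way to unpack that. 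One minor quibble: the mention of strict monotonicity from Lemma~\ref{lemma:cont_time_optimal_strictly_increasing} in your opening plan is unnecessary, since continuity of a CDF is equivalent to absence of atoms and does not require monotonicity.
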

\begin{proof}
    The first two properties amount to proving that the optimal algorithm contains no probability atoms; the third is always satisfied, since we can assume without loss of generality that $X$ is supported on $[0, 1]$ (Lemma~\ref{lemma:cont_ski_rental_01_support}). Suppose for the sake of contradiction that $\mu$ has an atom at $x \in [0, 1]$. If $x = 0$, $\mu$ cannot be competitive, let alone optimal, yielding a contradiction and establishing $F_X(0) = 0$. Otherwise, if $x \in (0, 1]$, Lemma~\ref{lemma:atom_implies_slack} implies that there is slack in the $\dcr$ -- i.e., there is some $\epsilon > 0$ such that $\alpha_\delta^\mu(x-\epsilon) < \alpha_\delta^\mu(x)$, which by Lemmas~\ref{lemma:s_1_tight} and \ref{lemma:any_s_tight} contradicts the optimality of $\mu$. Thus $\mu$ has no atoms and $F_X$ is continuous on $[0, 1]$.
\end{proof}

We are now adequately equipped to prove the main result of this section, Theorem~\ref{theorem:optimal_cont_time_ski_rental}.

\begin{proof}[Proof of Theorem~\ref{theorem:optimal_cont_time_ski_rental}]
    Let $\phi : [0, 1] \to [0, 1]$ be the inverse CDF of the optimal strategy for continuous-time ski rental with the $\dcr$ metric. By Lemmas~\ref{lemma:cont_time_optimal_strictly_increasing} and \ref{lemma:cont_time_optimal_continuous}, $\phi$ is strictly increasing and continuous on $[0, 1]$, and hence one-to-one, with $\phi(0) = 0$ and $\phi(1) = 1$. By Lemma~\ref{lemma:cvar_cost_integral_general}, for any fixed adversary decision $s \in (0, 1]$, we may express the $\dcvar$ of the cost incurred by playing $X \sim \phi$ as
    $$\cvar_\delta[s \cdot \indic_{X > s} + (X + 1) \cdot \indic_{X \leq s}] = \begin{cases} \frac{1}{1-\delta}\left[(1 \!-\! \delta \!-\! \phi^{-1}(s)) s + \!\int_0^{\phi^{-1}(s)} (1 \!+\! \phi(p))\,\der p\right] & \text{if $\phi^{-1}(s) \leq 1 \!-\! \delta$} \\ \frac{1}{1-\delta}\int_{\phi^{-1}(s) - (1 - \delta)}^{\phi^{-1}(s)} (1 + \phi(p))\,\der p &\text{otherwise.}\end{cases}$$

    By Lemmas~\ref{lemma:s_1_tight} and \ref{lemma:any_s_tight}, the optimal algorithm has competitive ratio independent of the adversary's choice $s \in (0, 1]$ of the ski season duration -- that is, $\alpha_\delta^*(s) = \alpha_\delta^*$ for all $s \in (0, 1]$. As such, $\phi$ must satisfy the following equations:
    \begin{align*}
        \frac{1}{1-\delta}\left[(1 - \delta - \phi^{-1}(s)) s + \int_0^{\phi^{-1}(s)} (1 + \phi(p))\,\der p\right] &= \alpha_\delta^*\cdot s &\text{if $\phi^{-1}(s) \leq 1 - \delta$} \\
        \frac{1}{1-\delta}\int_{\phi^{-1}(s) - (1 - \delta)}^{\phi^{-1}(s)} (1 + \phi(p))\,\der p &= \alpha_\delta^* \cdot s & \text{otherwise}
    \end{align*}
    for all $s \in (0, 1]$. Because $\phi$ is one-to-one, the above equations holding for all $s \in (0, 1]$ is equivalent to their holding for all $t \in (0, 1]$ when $s \coloneqq \phi(t)$ (and hence $t = \phi^{-1}(s)$):
    \begin{align*}
        \frac{1}{1-\delta}\left[(1 - \delta - t) \phi(t) + \int_0^{t} (1 + \phi(p))\,\der p\right] &= \alpha_\delta^*\cdot \phi(t) &\text{if $t \leq 1 - \delta$} \tageq\label{eq:delay_diff_eq_ski_1}\\
        \frac{1}{1-\delta}\int_{t - (1 - \delta)}^{t} (1 + \phi(p))\,\der p &= \alpha_\delta^* \cdot \phi(t) & \text{otherwise.} \tageq\label{eq:delay_diff_eq_ski_2}
    \end{align*}
    Differentiating \eqref{eq:delay_diff_eq_ski_1} with respect to $t$, we find
    \begin{align*}
        \phi'(t) + \frac{1}{1-\delta}\left[-\phi(t) - t\phi'(t) + 1 + \phi(t)\right] &= \alpha_\delta^* \cdot \phi'(t) \\
        \implies \phi'(t) = \frac{1}{(\alpha_\delta^* - 1)(1-\delta) + t},
    \end{align*}
    and integrating this and applying the initial condition $\phi(0) = 0$, we obtain $\phi(t) = \log\left(1 + \frac{t}{(\alpha_\delta^*-1)(1-\delta)}\right)$ on $t \in [0, 1-\delta]$. Differentiating the second equation \eqref{eq:delay_diff_eq_ski_2} and rearranging, we obtain
    \begin{equation} \label{eq:delay_diff_eq_statement_proof}
        \phi'(t) = \frac{1}{\alpha_\delta^*(1-\delta)}\left[\phi(t) - \phi(t-(1-\delta))\right].
    \end{equation}
    Thus, as claimed, $\phi$ is the solution to the delay differential equation \eqref{eq:delay_diff_eq_statement_proof} subject to the initial condition $\phi(t) = \log\left(1 + \frac{t}{(\alpha_\delta^*-1)(1-\delta)}\right)$ on $t \in [0, 1-\delta]$. Uniqueness of $\phi$ as the optimal strategy follows from uniqueness results in the theory of delay differential equations, e.g., \cite[Theorem 3.1]{bellmanDifferentialDifferenceEquations1963}.
    
\end{proof}

\subsection{Optimal solution is strictly decreasing in $\alpha$ \label{appendix:opt_soln_delay_diffeq_strictly_decreasing_alpha}}

Here, we will prove that the solution $\phi(t)$ to the delay differential equation posed in Theorem~\ref{theorem:optimal_cont_time_ski_rental} is strictly decreasing in $\alpha$ for $t \in (0, 1]$. Let $\phi_\alpha(t)$ denote the solution at time $t$ for a given choice of $\alpha$. We will employ a form of induction on the continuum in our argument:

\begin{definition}[Induction on the continuum, \citep{hathawayUsingContinuityInduction2011}]
    Let $P(t) : \R \to \{0, 1\}$ be a truth-valued function, and let $[a, b]$ be a closed interval. Suppose the following three properties hold:
    \begin{enumerate}[(1)]
        \item $P(a) = 1$;

        \item For any $x \in [a, b)$, $P(t) = 1$ for all $t \in [a, x]$ implies $P(t) = 1$ in some non-degenerate interval $[x, x+\epsilon)$;
        \item For any $x \in (a, b]$, $P(t) = 1$ for all $t \in [a, x)$ implies $P(x) = 1$.
    \end{enumerate}
    Then $P(t) = 1$ for all $t \in [a, b]$.
\end{definition}

Fix $\alpha < \alpha'$; in our setting, $P(t)$ will be the truth function of the strict inequality $\phi_\alpha(t) > \phi_{\alpha'}(t)$, and the interval of interest will be $t \in [1-\delta, 1]$.

First, note that we clearly have $P(1-\delta) = 1$, since $\phi_\alpha(1-\delta) = \log\left(1+\frac{1}{\alpha-1}\right)$, which is strictly decreasing in $\alpha$; in fact, we have that the initial condition $\phi(t) = \log\left(1+\frac{t}{\alpha-1}\right)$ is strictly decreasing for all $t \in (0, 1-\delta]$. Thus property (1) is satisfied.

Second, note that the solutions $\phi_\alpha, \phi_{\alpha'}$ are both continuous, as they are the solutions to a delay differential equation. Continuity guarantees that if $\phi_\alpha(t) > \phi_{\alpha'}(t)$, then $\phi_\alpha(x) > \phi_{\alpha'}(x)$ for $x$ in some interval $[t, t+\epsilon]$ with $\epsilon > 0$. Thus property (2) holds.

Finally, suppose $\phi_\alpha(t) > \phi_{\alpha'}(t)$ for $t$ in some interval $[1-\delta, x)$. By the integral form \eqref{eq:delay_diff_eq_ski_2} of the delay differential equation, we have

$$\phi_\alpha(x) = \frac{1}{\alpha(1-\delta)} \int_{x-(1-\delta)}^x 1 + \phi_\alpha(t)\,\der t > \frac{1}{\alpha '(1-\delta)} \int_{x-(1-\delta)}^x 1 + \phi_{\alpha'}(t)\,\der t = \phi_{\alpha'}(x),$$
where the strict inequality follows from $\alpha < \alpha'$ and the assumption that $\phi_\alpha(t) > \phi_{\alpha'}(t)$ for $t$ in $[1-\delta, x)$ (this inequality also holds for the initial condition on the region $(0, 1-\delta]$). Thus property (3) holds, and we conclude that $\phi_\alpha(t) > \phi_{\alpha'}(t)$ for all $t \in (0, 1]$.

\subsection{Analytic solution when $\delta \leq \frac{1}{2}$ \label{appendix:cont_time_analytic_solution_small_delta}}

When $\delta \leq \frac{1}{2}$, the delay differential equation on the domain $[1-\delta, 1]$ can be written as an ordinary differential equation by substituting the initial condition in for $\phi(t - (1-\delta))$:
$$\phi'(t) = \frac{1}{\alpha(1-\delta)}\left[\phi(t) - \log\left(1 + \frac{t-(1-\delta)}{(\alpha-1)(1-\delta)}\right)\right]$$
with initial value $\phi(1-\delta) = \log\left(1 + \frac{1}{\alpha-1}\right)$. Solving this initial value problem with Mathematica, we obtain that $\phi$ takes the value
\begin{align*}
    \phi(t) = &\,\,e^{-\frac{2(1-\delta)-t}{\alpha(1-\delta)}}\left[e\cdot\Ei\left(\frac{1}{\alpha} - 1\right) - e\cdot\Ei\left(\frac{(2-\alpha)(1-\delta)-t}{\alpha(1-\delta)}\right) + e^{\frac{1}{\alpha}}\log\left(\frac{\alpha}{\alpha-1}\right)\right]  \\
    &\,\,+ \log\left(1+\frac{t-(1-\delta)}{(\alpha-1)(1-\delta)}\right),
\end{align*}
on the interval $[1-\delta, 1]$, where $\Ei(x) = -\int_{-x}^\infty \frac{e^{-t}}{t}\,\der t$ is the exponential integral.

\subsection{Proof of Theorem~\ref{theorem:cont_time_ski_lower_bound} \label{appendix:theorem:cont_time_ski_lower_bound}}

\begin{proof}[noname]
    This proof is essentially a tightening of the phase transition bound in the discrete setting (Theorem~\ref{theorem:phase_transition}(ii); see Appendix~\ref{appendix:theorem:phase_transition}), enabled by the continuity of the decision space $[0, 1]$. The lower bound of $\frac{e}{e-1}$ holds trivially, since increasing $\delta$ increases the $\dcr$; thus we focus on the second element in the $\max$. Let $\mu$ be an arbitrary distribution supported in $[0, 1]$.

    Let $n = \big\lfloor \frac{1}{1-\delta} \big\rfloor - 1$, and define the indices $i_0 = 0$, $i_k = 1-\frac{1}{2^k}$ for each $k \in [n]$, and $i_{n+1} = 1$. We partition the unit interval into $n+1$ different sets $I_k$, with
    $$I_k = [i_{k-1}, i_k]$$
    for each $k \in [n+1]$. Because $n + 1 = \big\lfloor \frac{1}{1-\delta} \big\rfloor$, by the pigeonhole principle there must be at least one interval $I_k$ with $\mu(I_k) \geq 1-\delta$. Let the adversary's choice of ski season duration be $i_k$, and suppose $k \leq n$. Because $\mu(I_k) \geq 1-\delta$, there is at least a $(1-\delta)$ fraction of outcomes in which the algorithm's decision $x$ lies in $I_k$, and thus we can lower bound the $\dcr$ as
    \begin{align*}
        \alpha_\delta^\mu(i_k) &= \frac{\dcvar[i_k \cdot \indic_{X > i_k} + (X+1)\cdot\indic_{X \leq i_k}]}{i_k} \\
        &\geq \frac{i_{k-1} + 1}{i_k} \\
        &= \frac{2 - \frac{1}{2^{k-1}}}{1-\frac{1}{2^k}} \\
        &= 2.
    \end{align*}
    
    Alternatively, suppose $k = n+1$. Again, there is at least a $(1-\delta)$ fraction of outcomes in which the algorithm's decision $x$ lies in $I_{n+1}$, so we can lower bound the $\dcr$ as
    \begin{align*}
        \alpha_\delta^\mu(i_{n+1}) &= \frac{\dcvar[i_{n+1} \cdot \indic_{X > i_{n+1}} + (X+1)\cdot\indic_{X \leq i_{n+1}}]}{i_{n+1}} \\
        &\geq \frac{i_n + 1}{i_{n+1}} \\
        &= 2-\frac{1}{2^n}
    \end{align*}
    
    Thus, regardless of which set $I_k$ contains the at least $(1-\delta)$ fraction of mass, the $\dcr$ will be at least $2-\frac{1}{2^n}$. Substituting in the definition of $n$, we obtain
    $$\alpha_\delta^\mu \geq 2-\frac{1}{2^{\lfloor \frac{1}{1-\delta} \rfloor - 1}}$$
    for any algorithm $\mu$.
    
\end{proof}

\section{Proofs for Section~\ref{section:discrete_ski_rental}}

\subsection{Proof of Theorem~\ref{theorem:phase_transition} \label{appendix:theorem:phase_transition}}

We will prove parts (i) and (ii) of the theorem separately.

\begin{proof}[Proof of Theorem~\ref{theorem:phase_transition}, part (i)]
    Recall that $\alpha_\delta^{\mathrm{DSR}(B), *} \leq \alpha_\delta^{\mathrm{CSR}, *}$ for all $B \in \N$, since the discrete-time version of the ski rental amounts to restricting the continuous adversary's power. The bound in Theorem~\ref{theorem:cvar_skirental_firstalg} thus implies that $\alpha_\delta^{B, *} \leq 2 - \frac{1}{e^{\frac{c}{1-\delta}}-1}$, so a sufficient condition for $\alpha_\delta^{B, *}$ to strictly improve on the deterministic worst-case competitive ratio of $2 - \frac{1}{B}$ is to have $2 - \frac{1}{e^{\frac{c}{1-\delta}}-1} < 2 - \frac{1}{B}$. Rearranging this equation to isolate $\delta$ yields $\delta < 1 - \frac{c}{\log(B+1)}$, as claimed.
\end{proof}

\begin{proof}[Proof of Theorem~\ref{theorem:phase_transition}, part (ii)]
    Let $n = \log_2 B$; note that $2^{\lfloor n \rfloor} \leq B \leq 2^{\lceil n \rceil}$, and hence $1 \leq \frac{B}{2^{\lfloor n \rfloor}} < 2$ -- the second inequality is strict because if it held with equality, this would imply $B = 2^{\lfloor n \rfloor + 1}$, or $\log_2 B = n = \lfloor n \rfloor + 1$. Define the (not necessarily integer) indices $i_k = \frac{(2^k-1)B}{2^k}$ for every $k \in \{0, \ldots, \lfloor n \rfloor\}$. We construct the following sets: for each $k \in [\lfloor n \rfloor]$, define:
    \begin{align*}
        I_k &= \{\lceil i_{k-1}\rceil + 1, \ldots, \lfloor i_k \rfloor \}, \\
        J_k &= \{\lceil i_k \rceil\},
    \end{align*}
    and define $J_{\lfloor n \rfloor + 1} = \{B\}$. It is straightforward to observe that these sets form a cover of the action set $[B]$:
    $$\left(\bigcup_{k=1}^{\lfloor n \rfloor} I_k\right) \cup \left(\bigcup_{k=1}^{\lfloor n \rfloor + 1} J_k\right) = [B].$$
    To see this, simply note that $\lceil i_0\rceil + 1 = 0 + 1 = 1$ and
    $$\lceil i_{\lfloor n \rfloor} \rceil = \Big\lceil B - \frac{B}{2^{\lfloor n \rfloor}} \Big \rceil = B - 1,$$
    since $\frac{B}{2^{\lfloor n \rfloor}} < 2$.

    Now suppose that $\frac{1}{1-\delta} \geq 2\lfloor n \rfloor + 1$; then for any strategy $\pvec \in \Delta_B$, the pigeonhole principle assures us that $\pvec$ must assign probability at least $(1-\delta)$ to one of the action sets $I_k$ or $J_k$ in the cover. However, for each of these action sets, there is an adversary decision that forces each action in the set to have competitive ratio at least $2 - \frac{1}{B}$. If the set in question is $I_k$ for some $k \in [\lfloor n \rfloor]$, then if the adversary chooses the true number of skiing days to be $\lfloor i_k \rfloor$, the action in $I_k$ with the smallest competitive ratio is $\lceil i_{k-1} \rceil + 1$, which has competitive ratio lower bounded as
    \begin{align*}
        \frac{B + (\lceil i_{k-1} \rceil + 1) - 1}{\lfloor i_k \rfloor} &\geq \frac{B + i_{k-1}}{i_k} \\
        &= \frac{1 + \frac{2^{k-1}-1}{2^{k-1}}}{\frac{2^k-1}{2^k}} \\
        &= 2.
    \end{align*}
    On the other hand, if the set in question is one of the singleton sets $J_k = \{x\}$, then if the adversary chooses the true number of skiing days as $x$, the competitive ratio of this action is lower bounded as
    \begin{align*}
        \frac{B + x - 1}{x} &= 1 + \frac{B-1}{x} \\
        &= 2 - \frac{1}{B} + \left(\frac{B-1}{x} - \frac{B-1}{B}\right) \\
        &\geq 2 - \frac{1}{B}
    \end{align*}
    since $x \in [B]$, and in particular $x \leq B$. Thus, for each set in this cover, there is an adversary decision forcing every action in the set to have competitive ratio at least $2 - \frac{1}{B}$. However, since one of these sets $S$ must have probability at least $1-\delta$ assigned to it by $\pvec$, the ``bad'' adversary decision corresponding to $S$ will yield a competitive ratio of at least $2-\frac{1}{B}$ with probability at least $1-\delta$. It immediately follows that the adversary can force a $\dcr$ of at least $2-\frac{1}{B}$ in this setting, which implies that the optimal strategy is to buy deterministically at time $B$, which has a $\dcr$ of exactly $2 - \frac{1}{B}$.

    Rearranging the condition $\frac{1}{1-\delta} \geq 2\lfloor n \rfloor + 1$ to isolate $\delta$, we obtain $\delta \geq 1 - \frac{1}{2\lfloor n \rfloor + 1} = 1 - \frac{1}{2\lfloor \log_2 B \rfloor + 1}$, as claimed.
\end{proof}

Note that the lower bound on $\delta$ obtained in the above proof can be improved to $1-\frac{1}{\log_2 B + 1}$ when $B$ is a power of $2$, as in this case all of the sets $J_k$ for $k < n + 1$ are redundant, so eliminating these, the resulting cover has only $n+1$ sets. This is essentially equivalent to the argument used in the continuous-time lower bound (cf. Appendix~\ref{appendix:theorem:cont_time_ski_lower_bound}).

\subsection{Proof of Theorem~\ref{theorem:discrete_analytic_result} \label{appendix:theorem:discrete_analytic_result}}

Before proving the theorem, we will first prove an structural lemma analogous to the tightness results Lemmas~\ref{lemma:s_1_tight} and \ref{lemma:any_s_tight} in the continuous-time setting, which will establish that so long as $\delta$ is not too large, the optimal algorithm $\pvec^{B, \delta, *}$ satisfies the property that $\alpha_\delta^{B, \pvec^{B, \delta, *}}(i) = \alpha_\delta^{B, *}$ for all $i \in [B]$. In other words, under the algorithm with optimal $\dcr$ for discrete-time ski rental, the adversary is indifferent to the ski season duration that it chooses. First, we pose an optimization-based formulation of the $\dcr$ of an arbitrary algorithm $\pvec$ that will facilitate the analysis.

\begin{lemma} \label{lemma:maximization_form_cvar_discrete}
    Let $\delta \in [0, 1)$, and let $\pvec \in \Delta_B$ be an algorithm for discrete-time ski rental with buying cost $B$. Then $\alpha_\delta^{B, \pvec}(i)$, the $\dcr$  when the adversary chooses the true number of skiing days as $i \in [B]$, can be expressed as
    \begin{align*}
        \alpha_\delta^{B, \pvec}(i) = \max_{\qvec \in \R^B}&\quad \frac{1}{1-\delta}\left(\sum_{j=1}^i \frac{B + j - 1}{i}q_j + \sum_{j=i+1}^B q_j\right) \tageq\label{eq:max_form_cvar_discrete}\\
        \mathrm{s.t.} &\quad \bv{0} \leq \qvec \leq \pvec \\
        &\quad \bv{1}^\top \qvec = 1 - \delta.
    \end{align*}
\end{lemma}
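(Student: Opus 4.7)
The plan is to recognize the right-hand side as the Legendre--Fenchel (dual) representation of $\cvar_\delta$ applied to the algorithm's cost random variable, specialized to the discrete probability simplex. Recall from the third expression in \eqref{eq:cvar_definition} that for any bounded random variable $Y$ with law $\mu$,
\[
    \cvar_\delta[Y] = \sup_{\nu \in \mathscr{Q}} \E_{\nu}[Y], \qquad \mathscr{Q} = \{\nu : \mu = \beta\nu + (1-\beta)\rho,\ \beta \in [1-\delta, 1]\}.
\]
In the discrete setting, where $\mu$ is given by a probability vector $\pvec \in \Delta_B$, a standard reparametrization (e.g., Shapiro \emph{et al.}) lets us rewrite this as
\[
    \cvar_\delta[Y] = \max_{\lambda \in \R^B} \Big\{ \textstyle\sum_{j=1}^B p_j \lambda_j Y_j : 0 \leq \lambda_j \leq \tfrac{1}{1-\delta},\ \sum_j p_j \lambda_j = 1\Big\},
\]
where $Y_j$ is the value of $Y$ on the atom $j$. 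Making the change of variables $q_j = (1-\delta) p_j \lambda_j$ converts the constraints $0 \leq \lambda_j \leq \tfrac{1}{1-\delta}$ and $\sum_j p_j \lambda_j = 1$ into $0 \leq q_j \leq p_j$ and $\bv{1}^\top \qvec = 1-\delta$, and the objective becomes $\tfrac{1}{1-\delta}\sum_j q_j Y_j$.

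Next I would specialize this to our setting. Without loss of generality $i \in [B]$ (as noted in Section~\ref{section:problems_studied}, so that $\min\{i,B\} = i$), and $X \sim \pvec$ yields cost $C(X,i) = i \cdot \indic_{X > i} + (B + X - 1)\cdot \indic_{X \leq i}$, i.e., $C_j = B + j - 1$ for $j \leq i$ and $C_j = i$ for $j > i$. Substituting these atomic values into the dual LP above gives
\[
    \cvar_\delta[C(X,i)] = \max_{\qvec} \tfrac{1}{1-\delta}\Big( \textstyle\sum_{j=1}^i (B+j-1) q_j + \sum_{j=i+1}^B i\, q_j\Big)
\]
subject to $\bv{0}\leq \qvec \leq \pvec$, $\bv{1}^\top \qvec = 1-\delta$. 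Dividing both sides by $\min\{i,B\} = i$ yields exactly the expression claimed for $\alpha_\delta^{B,\pvec}(i)$.

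The only nontrivial step is verifying the equivalence between the abstract third definition in \eqref{eq:cvar_definition} and the LP formulation in $\qvec$. The cleanest way is to argue directly: given $\nu \in \mathscr{Q}$ with $\mu = \beta \nu + (1-\beta)\rho$, writing $q_j = \beta \nu_j$ gives $\qvec \geq 0$, $q_j \leq p_j$ (since $p_j = \beta\nu_j + (1-\beta)\rho_j \geq \beta\nu_j$), and $\bv{1}^\top\qvec = \beta \in [1-\delta, 1]$; conversely, any feasible $\qvec$ with sum $\beta \geq 1-\delta$ yields $\nu_j = q_j/\beta$ and $\rho_j = (p_j - q_j)/(1-\beta)$ in $\mathscr{Q}$. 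One then checks that for fixed $\qvec/\beta$ profile, the objective $\tfrac{1}{\beta}\sum_j C_j q_j = \E_\nu[C]$ is maximized by pushing $\beta$ down to $1-\delta$ (one may scale $\qvec$ accordingly), so the supremum is attained on the face $\bv{1}^\top\qvec = 1-\delta$. This last monotonicity argument is the only place requiring any care and can be made rigorous by noting that given any feasible $(\qvec,\beta)$ with $\beta > 1-\delta$, one may decrease $\beta$ to $1-\delta$ by scaling $q_j \to \frac{1-\delta}{\beta} q_j$ without violating $q_j \leq p_j$, while leaving $\E_\nu[C]$ unchanged. The maximum is thus attained on the specified polytope, completing the equivalence.
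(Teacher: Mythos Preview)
Your proposal is correct and takes essentially the same approach as the paper: both invoke the dual/supremum representation of $\cvar_\delta$ from \eqref{eq:cvar_definition}, specialize it to the finitely supported law $\pvec$, and substitute the per-atom normalized costs $C_j/i$. The paper's proof is terser---it simply cites the Lagrange dual of the variational form and \cite[Proposition~8]{rockafellarConditionalValueatriskGeneral2002}---while you spell out the change of variables $q_j=(1-\delta)p_j\lambda_j$ and the reduction to the face $\mathbf{1}^\top\qvec=1-\delta$, but the content is the same.
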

\begin{proof}
    This is an immediate consequence of the minimization formulation of $\dcvar$ presented in \eqref{eq:cvar_definition}; in fact, this is the Lagrange dual of that formulation, applied to the definition of $\dcr$ for continuous-time ski rental. This is also the particular case of the supremum form of $\dcvar$ in \eqref{eq:cvar_definition} which computes the expected cost on the worst ``$(1-\delta)$-sized subpopulation of the distribution'' $\pvec$ when the loss random variable takes discrete outcomes; the optimal solution $\qvec$ is obtained by ``filling in'' $\pvec$ starting from the indices with highest cost, i.e., starting with $i$, then $i - 1$, through $1$, and then starting again with the indices $i+1$ through $B$, until the probability budget $1-\delta$ has been depleted. This structure of the optimal solution $\qvec$ can also be obtained from the characterization of $\dcvar$ for discrete random variables provided in \cite[Proposition 8]{rockafellarConditionalValueatriskGeneral2002}.
\end{proof}

We also prove another technical lemma asserting that any if $p_i = 0$ for some index $i$, this must introduce slack in $\alpha_\delta^{B, \pvec}(i)$.

\begin{lemma} \label{lemma:discrete_zero_prob_implies_slack}
    Let $\pvec \in \Delta_B$ be an algorithm for discrete-time ski rental with buying cost $B$ that has $\dcr$ $\alpha_\delta^{B, \pvec}$. If $p_i = 0$, then $\alpha_\delta^{B, \pvec}(i) < \alpha_\delta^{B, \pvec}$.
\end{lemma}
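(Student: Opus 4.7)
The plan is to leverage the linear programming characterization of $\alpha_\delta^{B,\pvec}(i)$ from Lemma~\ref{lemma:maximization_form_cvar_discrete}, and compare its optimal value against $\alpha_\delta^{B,\pvec}(i')$ for a judiciously chosen alternative adversary decision $i'$. Let $\qvec^{*}$ be an optimal solution to the maximization in \eqref{eq:max_form_cvar_discrete} at adversary decision $i$. The constraint $\qvec \leq \pvec$ combined with $p_i = 0$ forces $q_i^{*} = 0$, so the $j = i$ summand drops out of the objective---this is the structural feature I will exploit throughout.

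The main case will be $i \geq 2$ and $\sum_{j=1}^{i-1} q_j^{*} > 0$: here I take $i' = i-1$. Since the feasible region of \eqref{eq:max_form_cvar_discrete} does not depend on the adversary decision, $\qvec^{*}$ is also feasible (though possibly suboptimal) for the maximization defining $\alpha_\delta^{B,\pvec}(i-1)$. Plugging $\qvec^{*}$ into the latter objective, for each $j \leq i-1$ the coefficient of $q_j^{*}$ increases from $(B+j-1)/i$ to $(B+j-1)/(i-1)$, the $j = i$ term vanishes (since $q_i^{*} = 0$), and the coefficients at $j > i$ remain equal to $1$. This yields
\[
    \alpha_\delta^{B,\pvec}(i-1) - \alpha_\delta^{B,\pvec}(i) \;\geq\; \frac{1}{1-\delta}\sum_{j=1}^{i-1} (B+j-1)\Big(\tfrac{1}{i-1} - \tfrac{1}{i}\Big)\, q_j^{*} \;>\; 0,
\]
whence $\alpha_\delta^{B,\pvec} \geq \alpha_\delta^{B,\pvec}(i-1) > \alpha_\delta^{B,\pvec}(i)$, as desired.

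The remaining degenerate case is $i = 1$, or $i \geq 2$ with $\sum_{j=1}^{i-1} q_j^{*} = 0$. Here I will first invoke the greedy structure of the optimizer in \eqref{eq:max_form_cvar_discrete}---the coefficients $(B+j-1)/i$ at indices $j \leq i$ strictly exceed the coefficient $1$ at indices $j > i$ whenever $i < B$, so any $p_j > 0$ with $j \leq i-1$ would contradict optimality of $\qvec^{*}$---to conclude that $\pvec$ must in fact be supported entirely on $\{j : j > i\}$ (the sub-case $i = B$ does not actually arise, since then no mass remains available to satisfy $\mathbf{1}^\top \qvec^{*} = 1-\delta$). Given this support, adversary decision $s = i$ forces the algorithm's action to strictly exceed $i$ almost surely, so the normalized cost is identically $1$ and $\alpha_\delta^{B,\pvec}(i) = 1$; conversely, adversary decision $s = B$ yields cost ratio at least $(B+1)/B > 1$ almost surely (since $X \geq 2$ a.s.), giving $\alpha_\delta^{B,\pvec} \geq \alpha_\delta^{B,\pvec}(B) \geq (B+1)/B > \alpha_\delta^{B,\pvec}(i)$.

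The main obstacle I anticipate is cleanly verifying the support characterization in the degenerate case, which requires exploiting the greedy structure of the LP rather than just its optimal value, and consistently dispatching the boundary values $i = 1$ and $i = B$ within the same framework. Once this is in hand, the LP comparison in the main case is routine.
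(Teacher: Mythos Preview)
Your proposal is correct and follows essentially the same approach as the paper: both split into a main case (some mass at indices $j<i$) handled by comparing the LP objective at $i$ against that at $i-1$, and a degenerate case (no such mass) where $\alpha_\delta^{B,\pvec}(i)=1$ is shown to be strictly below $\alpha_\delta^{B,\pvec}$. The only minor differences are that you phrase the case split in terms of $\qvec^{*}$ rather than $\pvec$ (and correctly argue their equivalence via the greedy structure), and in the degenerate case you bound $\alpha_\delta^{B,\pvec}$ below by evaluating at $s=B$, whereas the paper invokes $\alpha_\delta^{B,\pvec}\geq \alpha_0^{B,*}>1$; both work, and your main-case argument (plugging the same $\qvec^{*}$ into the $i-1$ objective) is in fact slightly more explicit than the paper's phrasing.
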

\begin{proof}
    Suppose $p_1 = 0$; then we may eliminate the variable $q_1$ and its coefficient in the objective of the maximization form of the $\dcr$ in \eqref{eq:max_form_cvar_discrete}, since $q_1$ must be zero; then it is clear
    \begin{align*}
        \alpha_\delta^{B, \pvec}(1) = \max_{\qvec \in \R^B}&\quad \frac{1}{1-\delta}\left(\sum_{j=2}^B q_j\right) = 1\\
        \mathrm{s.t.} &\quad \bv{0} \leq \qvec \leq \pvec \\
        &\quad \bv{1}^\top \qvec = 1 - \delta,
    \end{align*}
    which is strictly less than $\alpha_\delta^{B, \pvec}$, since we always have the ordering $\alpha_\delta^{B, \pvec} \geq \alpha_0^{B, *} = \frac{1}{1-(1-1/B)^B} > 1$.

    Alternatively, suppose that $i > 1$ and $p_i = 0$. If it is also the case that $p_j = 0$ for all $j < i$, then the prior argument holds and we once again have $\alpha_\delta^{B, \pvec}(i) = 1$. Otherwise, at least one $p_j > 0$ for $j < i$. Since $p_i = 0$, we may eliminate the variable $q_i$ and its coefficient in the objective of the maximization form of the $\dcr$ in \eqref{eq:max_form_cvar_discrete}, since $q_i$ must be zero; thus
    \begin{align*}
        \alpha_\delta^{B, \pvec}(i) = \max_{\qvec \in \R^B}&\quad \frac{1}{1-\delta}\left(\sum_{j=1}^{i-1} \frac{B + j - 1}{i}q_j + \sum_{j=i+1}^B q_j\right) \tageq\label{eq:slack_result_obj_1}\\
        \mathrm{s.t.} &\quad \bv{0} \leq \qvec \leq \pvec \\
        &\quad \bv{1}^\top \qvec = 1 - \delta \\
        <\max_{\qvec \in \R^B}&\quad \frac{1}{1-\delta}\left(\sum_{j=1}^{i-1} \frac{B + j - 1}{i-1}q_j + \sum_{j=i+1}^B q_j\right) \tageq\label{eq:slack_result_obj_2}\\
        \mathrm{s.t.} &\quad \bv{0} \leq \qvec \leq \pvec \\
        &\quad \bv{1}^\top \qvec = 1 - \delta \\
        = \alpha_\delta^{B, \pvec}&(i-1)
    \end{align*}
    where the strict inequality holds due to the fact that, since there is some $p_j > 0$ for $j < i$, the optimal solution of both maximization problems will have some $q_j > 0$, and on this domain the objective of \eqref{eq:slack_result_obj_1} is strictly less than that of \eqref{eq:slack_result_obj_2}. Thus $\alpha_\delta^{B, \pvec}(i) < \alpha_\delta^{B, \pvec}(i-1) \leq \alpha_\delta^{B, \pvec}$.
\end{proof}

Using these lemmas, we can now prove the structural result establishing that the optimal algorithm, as long as it has competitive ratio strictly better than $2 - \frac{1}{B}$, has $\dcr$ independent of the adversary's choice of ski season duration. This is similar in spirit to the ``principle of equality'' in the expected cost setting (see, e.g., \cite{mathieuOnlineAlgorithmsSki}) and our tightness results in continuous time (Lemmas~\ref{lemma:s_1_tight} and \ref{lemma:any_s_tight}).

\begin{lemma} \label{lemma:equality_principle_discrete}
    Let $\delta \in [0, 1)$ be such that the optimal $\dcr$ for ski rental with buying cost $B$ strictly improves on the deterministic optimal, i.e, $\alpha_\delta^{B, *} < 2 - \frac{1}{B}$, and let $\pvec^{B, \delta, *}$ be an algorithm obtaining this optimal $\dcr$. Then
    $$\alpha_\delta^{B, \pvec^{B, \delta, *}}(i) = \alpha_\delta^{B, *}$$
    for all $i \in [B]$.
\end{lemma}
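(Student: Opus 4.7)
The proof mirrors the tightness results Lemmas~\ref{lemma:s_1_tight} and \ref{lemma:any_s_tight} from the continuous setting, with the analytic mass-shifting arguments replaced by combinatorial manipulations justified through the LP in Lemma~\ref{lemma:maximization_form_cvar_discrete}. The plan is to argue by contradiction: assume $\alpha_\delta^{B, \pvec^{B, \delta, *}}(i_0) < \alpha_\delta^{B, *}$ for some $i_0 \in [B]$ and construct a strategy with strictly smaller $\dcr$, contradicting optimality.

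First I would establish a \emph{no-large-atom} bound: $p_i^{B, \delta, *} < 1-\delta$ for every $i \in [B]$. If instead $p_i^{B, \delta, *} \geq 1-\delta$, then in Lemma~\ref{lemma:maximization_form_cvar_discrete} at adversary decision $i$ the choice $q_i = 1-\delta$ (with all other coordinates zero) is feasible, giving $\alpha_\delta^{B, \pvec^{B, \delta, *}}(i) \geq (B+i-1)/i \geq 2 - 1/B$, contradicting the hypothesis $\alpha_\delta^{B, *} < 2 - 1/B$. Next, paralleling Lemma~\ref{lemma:s_1_tight}, I would show tightness at $i = B$: if $\alpha_\delta^{B, \pvec^{B, \delta, *}}(B) \leq \alpha_\delta^{B, *} - \epsilon$, set $\tilde{\pvec} = (1-\gamma)\pvec^{B, \delta, *} + \gamma e_B$ for small $\gamma > 0$. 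A case analysis mirroring that in Lemma~\ref{lemma:s_1_tight} shows $\alpha_\delta^{B, \tilde{\pvec}}(i) \leq \alpha_\delta^{B, \pvec^{B, \delta, *}}(i) \leq \alpha_\delta^{B, *}$ for $i < B$, since the new mass at $X = B$ contributes only the low waiting cost $i$; meanwhile $\alpha_\delta^{B, \tilde{\pvec}}(B)$ increases by $O(\gamma)$ (kept bounded using Step 1), and is absorbed by $\epsilon$ for $\gamma$ small. This contradicts optimality.

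Finally, to handle $\alpha_\delta^{B, \pvec^{B, \delta, *}}(i_0) < \alpha_\delta^{B, *}$ with $i_0 < B$, I would propagate the slack rightward until reaching $i = B$, contradicting the previous step. The key mass-shifting operation transfers a small amount $\eta > 0$ from index $i_0 + 1$ to index $i_0$ (if $p_{i_0+1}^{B, \delta, *} = 0$, Lemma~\ref{lemma:discrete_zero_prob_implies_slack} already gives slack at $i_0 + 1$, so we instead take that as the slack point). Inspecting costs: for $i < i_0$, both outcomes $X = i_0, i_0 + 1$ give waiting cost $i$, so $\alpha_\delta(i)$ is unchanged; at $i = i_0$, a waiting outcome (cost $i_0$) becomes a buying outcome (cost $B + i_0 - 1$), increasing $\alpha_\delta(i_0)$ by at most $\eta(B-1)/((1-\delta) i_0)$, which the slack at $i_0$ absorbs when $\eta$ is chosen small; and at $i \geq i_0 + 1$, a cost-$(B + i_0)$ outcome is replaced by a cost-$(B + i_0 - 1)$ outcome, weakly decreasing $\alpha_\delta(i)$. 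Iterating this operation at most $B$ times shifts slack to $i = B$, contradicting the tightness established in the previous step.

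The main obstacle is the propagation argument. As in the continuous case, it requires carefully quantifying the $O(\eta)$ increase at $i_0$ using the structure of the optimal worst-$(1-\delta)$ subpopulation in Lemma~\ref{lemma:maximization_form_cvar_discrete}, and handling edge cases where the subpopulation's boundary falls exactly on an atom, so that small mass shifts can cause discontinuous changes in the optimal $\qvec$. The no-large-atom bound of Step 1 is essential to guarantee these changes stay controlled, and the hypothesis $\alpha_\delta^{B, *} < 2 - 1/B$ enters precisely through that bound.
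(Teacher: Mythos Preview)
Your proposal is correct and follows essentially the same strategy as the paper: the same no-large-atom bound, the same two mass-shifting operations (moving $\eta$ from $p_{i_0+1}$ to $p_{i_0}$ to propagate slack rightward, and forming $(1-\gamma)\pvec + \gamma \evec_B$ to handle $i = B$), just organized as ``tightness at $B$ first, then propagate'' rather than the paper's ``propagate to $B$, then spread slack everywhere.'' The one point to tighten is that your weak-decrease claims in Steps~2 and~3 must in fact be \emph{strict}---otherwise Step~2 does not contradict optimality and Step~3 does not create slack at $i_0+1$---and the paper extracts this strictness from the LP in Lemma~\ref{lemma:maximization_form_cvar_discrete}: the no-large-atom bound forces the optimal $q_{i_0+1}$ to saturate at $p_{i_0+1}$, so shrinking $p_{i_0+1}$ strictly lowers the objective at adversary decision $i_0+1$.
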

\begin{proof}
    We will abbreviate $\pvec^{B, \delta, *}$ simply as $\pvec$. By assumption that $\alpha_\delta^{B, *} < 2 - \frac{1}{B}$, it must be the case that $p_i < 1-\delta$ for all $i \in [B]$.

    Suppose for the sake of contradiction that there is some slack in the $\dcr$ for some adversary decision $i \in [B]$, so that
    $$\alpha_\delta^{B, \pvec}(i) < \alpha_\delta^{B, *}.$$
    Similar to the structure of the proof for tightness in the continuous-time setting, we will prove this result in two parts: in part (a), we will show that there exists some other distribution $\hat{\pvec} \in \Delta_B$ with at least as good $\dcr$ as $\pvec$ that has slack at time $i = B$, i.e., $\alpha_\delta^{B, \hat{\pvec}} \leq \alpha_\delta^{B, \pvec}$ and $\alpha_\delta^{B, \hat{\pvec}}(B) < \alpha_\delta^{B, *}$. Then, in part (b) we will show that we can redistribute this slack to every other time, i.e., we can construct some other $\tilde{\pvec} \in \Delta_B$ with $\alpha_\delta^{B, \tilde{\pvec}} \leq \alpha_\delta^{B, \hat{\pvec}}$ and $\alpha_\delta^{B, \tilde{\pvec}}(i) < \alpha_\delta^{B, *}$ for all $i \in [B]$, which implies that $\pvec$ is not optimal.

    \paragraph{(a)} Let $i$ be the largest element in $[B]$ with the slack property $\alpha_\delta^{B, \pvec}(i) < \alpha_\delta^{B, *}$; if $i = B$, we may define $\hat{\pvec} = \pvec$ and move to part (b). Otherwise, we have $i < B$. Note that since there is no slack for adversary decisions $i+1, \ldots, B$, Lemma~\ref{lemma:discrete_zero_prob_implies_slack} implies that $p_j > 0$ for all $j = i+1, \ldots, B$. Inspecting the maximization formulation of $\alpha_\delta^{B, \pvec}(i)$ in \eqref{eq:max_form_cvar_discrete}, it is clear that by adding a small constant $\epsilon > 0$ to $p_i$ and subtracting $\epsilon$ from $p_{i+1}$, one can slightly increase $\alpha_\delta^{B, \pvec}(i)$ while strictly decreasing $\alpha_\delta^{B, \pvec}(i+1)$, thus introducing slack at $i+1$. This is because increasing $p_i$ (which, recall, must be strictly less than $1-\delta$) by $\epsilon \leq 1 - \delta - p_i$ will increase the optimal $q_i$ in the maximization form of $\alpha_\delta^{B, \pvec}(i)$ by $\epsilon$, since $q_i$ is associated with the largest coefficient in the objective. However, the budget constraint $\bv{1}^\top \qvec = 1-\delta$ means that some other $q_j$ (or the sum of several $q_j$) must then decrease by $\epsilon$ in the optimal solution, leading to an increase of the optimal value by at most $\epsilon\left(\frac{B+i-1}{i} - 1\right)$. On the other hand, decreasing $p_{i+1}$ by $\epsilon \leq p_{i+1}$ will lead to a corresponding decrease by $\epsilon$ of the optimal $q_{i+1}$ in the maximization form of $\alpha_\delta^{B, \pvec}(i+1)$, since $q_{i+1}$ is associated with the largest coefficient in the objective, and the fact $p_{i+1} < 1-\delta$ means $q_{i+1} = p_{i+1}$ for the optimal $\qvec$. However, since $p_i$ is increased by $\epsilon$, this decrease in $q_{i+1}$ will be absorbed by $q_i$ (or a combination of multiple $q_j$, $j \neq i+1$), which is associated with the second largest coefficient in the objective. Altogether, the optimal value of the problem will decrease by at least $\epsilon\left(\frac{B+i}{i+1} - \frac{B+i-1}{i+1}\right)$, meaning that $\alpha_\delta^{B, \pvec}(i+1)$ has decreased accordingly. Thus if we choose $\epsilon > 0$ satisfying
    $$\epsilon \leq \min\{1 - \delta - p_i, p_{i+1}\} \qquad\text{and}\qquad \alpha_\delta^{B, \pvec}(i) + \epsilon\left(\frac{B+i-1}{i} - 1\right) < \alpha_\delta^{B, *},$$
    the modified distribution obtained from increasing $p_i$ by $\epsilon$ and decreasing $p_{i+1}$ by $\epsilon$ will have slack at both $i$ and $i+1$. By similar reasoning, $\alpha_\delta^{B, \pvec}(j)$ will not be impacted for $j < i$ and will not increase (but might decrease) for $j > i+1$. Repeating this process at $i+1$, $i+2$, and so on, we eventually obtain a distribution $\hat{\pvec}$ with the properties that $\alpha_\delta^{B, \hat{\pvec}} \leq \alpha_\delta^{B, *}$ and $\alpha_\delta^{B, \hat{\pvec}}(B) < \alpha_\delta^{B, *}$.

    \paragraph{(b)} Suppose $\hat{\pvec}$ is a distribution satisfying the properties $\alpha_\delta^{B, \hat{\pvec}} \leq \alpha_\delta^{B, *}$ and $\alpha_\delta^{B, \hat{\pvec}}(B) < \alpha_\delta^{B, *}$, and define a new distribution $\tilde{\pvec}$ by shifting a small $\epsilon$ fraction of the mass on all actions $i < B$ to $B$:
    $$\tilde{p}_i = \begin{cases} (1-\epsilon)\hat{p}_i & \text{for $i < B$} \\ \hat{p}_B + \epsilon\sum_{i=1}^{B-1} \hat{p}_i &\text{for $i = B$.} \end{cases}$$
    Note that this will always result in a different distribution when $\epsilon > 0$, since otherwise $\hat{\pvec}$ must place all probability on the action $B$, violating the assumption of $\alpha_\delta^{B, *} < 2-\frac{1}{B}$.
    Similar to the previous case, we evaluate the impact of this change on the $\dcr$ through inspection of the maximization formula \eqref{eq:max_form_cvar_discrete}. Since we add mass at most $\epsilon$ to the action $B$, the optimal $q_B$, which is associated with the largest coefficient in the objective of $\alpha_\delta^{B, \hat{\pvec}}(B)$, will increase by at most $\epsilon$, compensated by a decrease in the sum of $q_1, \ldots, q_{B-1}$, each of which has a coefficient at least 1. Thus we will have $\alpha_\delta^{B, \tilde{\pvec}}(B) \leq \alpha_\delta^{B, \hat{\pvec}}(B) + \epsilon\left(1 - \frac{1}{B}\right)$. On the other hand, consider $i < B$; if $\hat{p}_i = 0$, Lemma~\ref{lemma:discrete_zero_prob_implies_slack} ensures that there will be slack in $\alpha_\delta^{B, \tilde{\pvec}}(i)$. If instead $\hat{p}_i > 0$, then by a similar argument to the previous part, it holds that $\alpha_\delta^{B, \tilde{\pvec}}(i) < \alpha_\delta^{B, \hat{\pvec}}(i)$, since decreasing all of the nonzero $\hat{p}_i$ entries by a multiplicative factor of $(1-\epsilon)$ will in particular decrease the optimal $q_i$ (and possibly other $q_j$ with $j < i$) by the same factor, while increasing $q_j$ associated with smaller coefficients in the objective of \eqref{eq:max_form_cvar_discrete}. Together, these bounds imply that $\epsilon$ can be chosen such that $\alpha_\delta^{B, \tilde{\pvec}}(i) < \alpha_\delta^{B, \hat{\pvec}}(i) \leq \alpha_\delta^{B, *}$ for all $i \in [B]$, contradicting the assumed optimality of $\pvec$.
\end{proof}
Note that while we cannot validate the $\dcr$ condition $\alpha_\delta^{B, *} < 2 - \frac{1}{B}$ of the above lemma \emph{a priori} without knowledge of the $\dcr$, by part (ii) of the theorem, we can in general use the sufficient condition $\delta < 1 - \frac{c}{\log(B + 1)}$ as a heuristic.

With these lemmas proved, we are now prepared to prove the theorem.

\begin{proof}[Proof of Theorem~\ref{theorem:discrete_analytic_result}]
    We will abbreviate $\pvec^{B, \delta, *}$ simply as $\pvec$. We will prove this result in two parts: first, we show that the proposed $\pvec$, defined as:
    $$p_i = \frac{C}{B}\left(1-\frac{1}{B}\right)^{B-i}$$
    for all $i \in [B]$, obtains the claimed competitive bound of $\frac{C-\delta}{1-\delta}$ for the assumed range of $\delta$, giving an upper bound on the $\dcr$ in this region. After proving this upper bound, we will then prove that $\pvec$ is, in fact, optimal for this region of $\delta$.

    Observe that $p_i$ is increasing in $i$; thus, $p_1 = \frac{C}{B}\left(1 - \frac{1}{B}\right)^{B - 1}$ is the smallest probability assigned by $\pvec$ to any action, and moreover, by the assumption on the range of $\delta$ in the theorem statement, we have $\delta \leq p_1 \leq p_i$ for any $i \in [B]$. Inspecting the maximization form of $\dcvar$ in \eqref{eq:max_form_cvar_discrete}, one can easily observe that an optimal $\qvec$ for $\alpha_\delta^{B, \pvec}(i)$ when $i < B$ will be $q_j = p_j$ for $j < B$ and $q_B = p_B - \delta$, since when $i < B$, $q_B$ is associated with the (smallest) coefficient $1$ in the objective, while if $i = B$, the optimal solution will be $q_1 = p_1 - \delta$ and $q_j = p_j$ for $j > 1$, since $q_1$ is associated with the (smallest) coefficient $1$ in the objective. As a result, the $\dcvar$ for this range of $\delta$ amounts to subtracting $\delta$ from the original competitive ratio and normalizing by $1-\delta$, i.e.,
    $$\alpha_\delta^{B, \pvec}(i) = \frac{1}{1-\delta}\left(\alpha_0^{B, \pvec}(i) - \delta\right).$$
    Since $\alpha_0^{B, \pvec}(i) = C = \frac{1}{1-(1-1/B)^B}$ for all $i \in [B]$, we obtain the claimed $\dcr$: $\alpha_\delta^{B, \pvec} = \frac{C - \delta}{1-\delta}$.

    Now, we turn to proving that this is the optimal $\dcr$ for $\delta \leq \frac{C}{B}\left(1 - \frac{1}{B}\right)^{B - 1}$; henceforth, $\pvec^\delta$ will refer to the algorithm with optimal $\dcr$, which is presumed to be unknown. It must be the case that $\alpha_\delta^{B, \pvec^\delta}$ is continuous in $\delta$; this is because $\alpha_{\delta+\epsilon}^{B, \pvec^\delta} - \alpha_\delta^{B, \pvec^\delta}$ is bounded by a linear function of $\epsilon$, since there exist optimal $\qvec^{\delta+\epsilon}, \qvec^{\delta}$ in the corresponding optimization formulations \eqref{eq:max_form_cvar_discrete} satisfying $\|\qvec^{\delta+\epsilon} - \qvec^{\delta}\|_1 \leq \epsilon$ due to the structure of the optimal solution (i.e., its ``filling in'' $p_i$ associated with larger costs first, and indifference between $p_j$ associated with the coefficient $1$). This in turn implies that $\pvec^\delta$ ought to be continuous as a function of $\delta$ when $\alpha_\delta^{B, \pvec^\delta} < 2 - \frac{1}{B}$; if this were not the case, i.e., if $\pvec^\delta$ were discontinuous at some $\delta$, then letting $i$ be the smallest index such that $p_i^\delta$ is discontinuous at $\delta$, it is clear from inspection of the maximization form \eqref{eq:max_form_cvar_discrete} that this would introduce a discontinuity in $\alpha_\delta^{B, \pvec^\delta}$, yielding a contradiction whenever the $\dcr$ equality (Lemma~\ref{lemma:equality_principle_discrete}) is supposed to hold. Note that when $\delta \leq \frac{C}{B}\left(1 - \frac{1}{B}\right)^{B - 1}$, $\alpha_\delta^{B, *} \leq \frac{C-\delta}{1-\delta} < 2$, since $\frac{C-\delta}{1-\delta}$ is increasing in $\delta$ and
    \begin{align*}
        \frac{C-\delta}{1-\delta}\Bigg|_{\delta = \frac{C}{B}\left(1 - \frac{1}{B}\right)^{B - 1}} &= \left(\frac{(B-1)^2}{1+\left(1-\frac{1}{B}\right)^B - B} + B\right)^{-1},
    \end{align*}
    which is increasing in $B$ and has limit $\frac{e}{e-1}$ as $B \to \infty$. Thus by Lemma~\ref{lemma:equality_principle_discrete}, the tightness property must hold on the specified domain of $\delta$: $\alpha_\delta^{B, \pvec^\delta}(i) = \alpha_\delta^{B, \pvec^\delta}$ for all $i \in [B]$.

    $\pvec^\delta$ continuous in particular implies that, for any $\epsilon > 0$, there is some non-degenerate interval $[0, \hat{\delta}]$ with the property that, for all $\delta \in [0, \hat{\delta}]$, $p_i^\delta \geq p_0^0 - \epsilon = \frac{C}{B}\left(1 - \frac{1}{B}\right)^{B - 1} - \epsilon$. Thus, if we pick $\delta$ sufficiently small, the optimal solution of the maximization form of $\alpha_\delta^{B, \pvec^\delta}(i)$ in \eqref{eq:max_form_cvar_discrete} will be $\qvec = \pvec^\delta - \delta \evec_j$, where $j$ is an index associated with the cost $1$ in the objective, since this is the lowest cost. Defining the cost matrix $\mat{M}$ whose $i$th row contains the cost coefficients when the true ski season duration is $i$:
    $$M_{ij} = \begin{cases} \frac{B + j - 1}{i} & \text{if $j \leq i$} \\ 1 & \text{otherwise,} \end{cases}$$
    we thus have the equation
    $$\frac{1}{1-\delta}\left(\mat{M}\pvec^\delta - \delta\cdot\bv{1}\right) = \alpha_\delta^{B, \pvec^\delta} \cdot \bv{1}$$
    when $\delta$ is sufficiently small. Rearranging, we have
    \begin{equation} \label{eq:discrete_cvar_opt_eqn}
        \mat{M}\pvec^\delta = \left((1-\delta)\alpha_\delta^{B, \pvec^\delta} + \delta\right)\cdot\bv{1}
    \end{equation}
    Now, notice that \eqref{eq:discrete_cvar_opt_eqn} is of the form $\mat{M}\pvec = c \cdot \bv{1}$; this is exactly the equation that arises in the classical discrete-time ski-rental setting when we seek to obtain the optimal algorithm for the (expected cost) competitive ratio (see, e.g., \cite{mathieuOnlineAlgorithmsSki}). Since $\mat{M}$ is invertible, the unique solution is $\pvec = c\cdot \mat{M}^{-1}\bv{1}$, and $c$ must be chosen as $C = \frac{1}{1-(1-\frac{1}{B})^B}$ to ensure normalization of the resulting probability distribution. However, in our setting, this same reasoning implies that $\pvec^\delta = \left((1-\delta)\alpha_\delta^{B, \pvec^\delta} + \delta\right)\cdot \mat{M}^{-1}\bv{1}$, and $(1-\delta)\alpha_\delta^{B, \pvec^\delta} + \delta = C$ to ensure $\pvec^\delta$ is a valid probability distribution. As a result, we obtain $\alpha_\delta^{B, \pvec^\delta} = \frac{C-\delta}{1-\delta}$ and $\pvec^\delta = \pvec^0$, as claimed. Finally, note that the preceding argument was made for $\delta$ small enough that the optimal solution of the maximization form of $\alpha_\delta^{B, \pvec^\delta}(i)$ in \eqref{eq:max_form_cvar_discrete} is of the form $\qvec = \pvec^\delta - \delta \evec_j$, where $j$ is an index associated with the cost $1$ in the objective. Because $\pvec^\delta$ is constant in $\delta$ while this condition holds, this condition is seen to be equivalent to $\delta \leq p_0^0 = \frac{C}{B}\left(1 - \frac{1}{B}\right)^{B - 1}$.
\end{proof}

\section{Proofs and Additional Results for Section~\ref{section:one_max_search} \label{appendix:section:one_max_search}}

\subsection{Proof of Lemma~\ref{lemma:one_max_search_cost_integral_representation} \label{appendix:lemma:one_max_search_cost_integral_representation}}

\begin{proof}[noname]

We begin by calculating an expression for the inverse CDF of the profit random variable $L \cdot \indic_{X > v} + X \cdot \indic_{X \leq v}$ in terms of the inverse CDF of the threshold $X$. As shorthand, we define the cost function $C(X, v) = L \cdot \indic_{X > v} + X \cdot \indic_{X \leq v}$. Then clearly
\begin{align*}
    F_{C(X, v)}(y) = \Prob(C(X, v) \leq y) &= \begin{cases} \Prob(X > v) + \Prob(X \leq y) &\text{if $y \leq v$} \\ 1 & \text{otherwise} \end{cases} \\
    &= \begin{cases} 1 - F_X(v) + F_X(y) &\text{if $y \leq v$} \\ 1 & \text{otherwise} \end{cases}
\end{align*}
and hence, for $t \in [0, 1]$,
\begin{align*}
    F_{C(X, v)}^{-1}(t) &= \inf\{y \in [L, U] : F_{C(X, v)}(y) \geq t\} \\
    &= \begin{cases} L & \text{if $t \leq 1 - F_X(v)$} \\ \inf\{y \in [L, U] : 1 - F_X(v) + F_X(y) \geq t\} &\text{otherwise}\end{cases} \\
    &= \begin{cases} L & \text{if $t \leq 1 - F_X(v)$} \\ F_X^{-1}(t - 1 + F_X(v)) &\text{otherwise.}\end{cases}
\end{align*}
Using the definition of $\dcvar$ as an integral of the inverse CDF \eqref{eq:integral_form_cvar_reward}, we have
\begin{align*}
    \dcvar[C(X, v)] &= \frac{1}{1-\delta} \int_0^{1-\delta} F_X^{-1}(t)\,\der t \\
    &= \begin{cases} L & \text{if $1-\delta \leq 1 - F_X(v)$} \\ \frac{1}{1-\delta}\left[(1-F_X(v))L + \int_{1-F_X(v)}^{1-\delta} F_X^{-1}(t - 1 + F_X(v))\,\der t\right] &\text{otherwise}\end{cases} \\
    &= \begin{cases} L & \text{if $F_X(v) \leq \delta$} \\ \frac{1}{1-\delta}\left[(1-F_X(v))L + \int_{0}^{F_X(v) - \delta} F_X^{-1}(t)\,\der t\right] &\text{otherwise,}\end{cases}
\end{align*}
as claimed.
\end{proof}

\subsection{Proof of Theorem~\ref{theorem:one_max_search_upper_bound} \label{appendix:theorem:one_max_search_upper_bound}}

\begin{proof}[noname]

In this proof, we will suppress the sub- and superscript and simply write $\alpha \coloneqq \alpha_\delta^\theta$. First, note that when $\delta = 1$, the initial condition $\phi(t) = \alpha L = \sqrt{LU}$ holds over the entire interval $[0, 1]$. This is the inverse CDF of the deterministic optimal strategy that always plays the threshold $\sqrt{LU}$, and is hence $\sqrt{\theta}$-competitive; this is easily seen to match the solution to \eqref{theorem:one_max_dcr_ub_root}. In the following, we will restrict to $\delta < 1$.

Let $\phi(t)$ be the solution to the delay differential equation posed in the theorem statement; If we solve it on the region $[\delta, 1]$ by integration, we have
\begin{align*}
    \phi(t) &= \begin{cases}\alpha L & \text{for $t \in [0, \delta]$} \\ \alpha L + \frac{\alpha_\delta^{\theta}}{1-\delta}\int_\delta^t \phi(s-\delta) - L \,\der s &\text{for $t \in (\delta, 1]$}\end{cases} \\
    &=\begin{cases}\alpha L & \text{for $t \in [0, \delta]$} \\ \alpha L -\frac{\alpha_\delta^{\theta}(t-\delta)L}{1-\delta} + \frac{\alpha_\delta^{\theta}}{1-\delta}\int_0^{t-\delta} \phi(s)\,\der s &\text{for $t \in (\delta, 1]$}\end{cases} \\
    &= \begin{cases}\alpha L & \text{for $t \in [0, \delta]$} \\ \frac{\alpha}{1-\delta}\left[(1-t)L + \int_0^{t-\delta} \phi(s)\,\der s\right] &\text{for $t \in (\delta, 1]$.}\end{cases} \tageq\label{eq:one_max_search_phi_integral_form}
\end{align*}
$\phi$ is clearly continuous on $[0, 1]$, and by construction, $\phi(t)$ is also strictly increasing on $[\delta, 1]$, since in this region $\phi'(t) = \frac{\alpha_\delta^{\theta}}{1-\delta}\left[\phi(t-\delta) - L\right] = \frac{\alpha L}{1-\delta}(\alpha - 1) > 0$ since the $\dcr$ cannot be 1 unless the problem is trivial ($\theta = 1$). Thus, assuming $\alpha$ is chosen such that $\phi(1) = U$, we have that $\phi$ is one-to-one on the interval $[\delta, 1]$ and $\phi([\delta, 1]) = \phi([0, 1]) = [\alpha L, U]$.

Now, assume that an algorithm uses $\phi$ as the inverse CDF of its random threshold $X$, and suppose the adversary chooses a sequence with maximal price $\vmax < \alpha L$. In this case, the algorithm will not accept a price during the sequence, since $\phi([0, 1]) = [\alpha L, U]$ implies that $X$ only takes values within the interval $[\alpha L, U]$. In this case, the algorithm earns (deterministic) profit $L$ during the compulsory trade, so the algorithm's $\dcr$ is simply $\frac{\vmax}{L} < \frac{\alpha L}{L} = \alpha$.

On the other hand, suppose the adversary chooses a sequence with maximal price $\vmax \geq \alpha L$. Because $\phi$ is one-to-one on $[\delta, 1]$, $\phi^{-1}$ exactly coincides with the CDF of $X$ on the domain $[\alpha L, U]$, and for $\phi(\phi^{-1}(\vmax)) = \vmax$. Defining $t = \phi^{-1}(\vmax)$, noting that $t \geq \delta$, and applying Lemma~\ref{lemma:one_max_search_cost_integral_representation}, we may compute the $\dcr$:
\begin{align*}
\frac{\vmax}{\dcvar[L \cdot \indic_{X > v} + X \cdot \indic_{X \leq v}]} &= \frac{\phi(t)}{\frac{1}{1-\delta}\left[(1-\phi^{-1}(\vmax))L + \int_{0}^{\phi^{-1}(\vmax) - \delta} \phi(s)\,\der s\right]} \\
&= \frac{\phi(t)}{\frac{1}{1-\delta}\left[(1-t)L + \int_{0}^{t - \delta} \phi(s)\,\der s\right]} \\
&= \alpha,
\end{align*}
where the final equality follows from the integral form of $\phi(t)$ in \eqref{eq:one_max_search_phi_integral_form}. Thus $\phi$, when used as an inverse CDF for the random threshold, yields an algorithm with $\dcr$ $\alpha$.

Now, let us establish an analytic characterization of $\phi$ for $\delta \in [0, 1]$. When $\delta = 0$, the delay differential equation simplifies to an ordinary differential equation $\phi'(t) = \alpha(\phi(t) - L)$ with initial condition $\phi(0) = \alpha L$. Solving this initial value problem yields the solution $\phi(t) = L + (\alpha - 1)L e^{\alpha t}$, which is easily seen to coincide with \eqref{theorem:one_max_analytic_phi} in the $\delta \to 0$ limit, as the sum in \eqref{theorem:one_max_analytic_phi} becomes the Taylor series of the exponential $e^{\alpha t}$.

On the other hand, if $\delta \in (0, 1)$, then we may solve the delay differential equation by integrating step-by-step. That is, suppose we know $\phi(t-\delta)$ exactly on the interval $[k\delta, (k+1)\delta]$ for some $k \in \N$, either by the initial condition or because we have solved for $\phi(t)$ on the previous interval $[(k-1)\delta, k\delta]$. Then we may treat the delay differential equation as an ordinary differential equation on $[k\delta, (k+1)\delta]$ and solve for $\phi(t)$ accordingly.

We claim that on the interval $[k\delta, (k+1)\delta]$, $\phi$ takes the form
\begin{equation} \label{eq:inductive_form_phi_oms}
    \phi(t) = L + (\alpha - 1)L\sum_{j=0}^k \frac{\alpha^j (t-j\delta)^j}{(1-\delta)^j j!}.
\end{equation}
Note that this inductive form of $\phi$ for any nonnegative integer $k$ is equivalent to the form of $\phi$ posited in the theorem statement in \eqref{theorem:one_max_analytic_phi}, as $t \in [k\delta, (k+1)\delta]$ causes all terms with $j \geq k+1$ in \eqref{theorem:one_max_analytic_phi} to vanish.

We establish the validity of this form by induction on $k$, which is exactly the number of step-by-step integrations that must be performed to obtain the solution $\phi$ on the interval $[k\delta, (k+1)\delta]$. As the base case, when $k = 0$, \eqref{eq:inductive_form_phi_oms} simply reduces to the initial condition $\phi(t) = \alpha L$ on the interval $[0, \delta]$. Now suppose that the formula holds for a certain $k$; expressing $\phi(t)$ as an integral of the delay differential equation starting from $(k+1)\delta$, we have that, for $t \in [(k+1)\delta, (k+2)\delta]$,
\begin{align*}
    \phi(t) &= \phi((k+1)\delta) + \frac{\alpha}{1-\delta}\int_{(k+1)\delta}^t \phi(s - \delta) - L\,\der s \\
    &= L + (\alpha - 1)L\sum_{j=0}^k \frac{\alpha^j ((k+1)\delta  -j\delta)^j}{(1-\delta)^j j!} + \frac{\alpha}{1-\delta}\left[-(t - (k+1)\delta)L + \int_{k\delta}^{t-\delta}\phi(s)\,\der s\right] \tageq\label{eq:oms_ub_ind_step_1}\\
    &= L + (\alpha - 1)L\sum_{j=0}^k \frac{\alpha^j ((k+1-j)\delta)^j}{(1-\delta)^j j!} \\
    &\quad + \frac{\alpha}{1-\delta}\left[-(t - (k+1)\delta)L + \int_{k\delta}^{t-\delta} L + (\alpha - 1)L\sum_{j=0}^k \frac{\alpha^j (s-j\delta)^j}{(1-\delta)^j j!}\,\der s\right] \tageq\label{eq:oms_ub_ind_step_2}\\
    &= L + (\alpha - 1)L\sum_{j=0}^k \frac{\alpha^j ((k+1-j)\delta)^j}{(1-\delta)^j j!} + \frac{\alpha}{1-\delta}\left[(\alpha - 1)L\sum_{j=0}^k \frac{\alpha^j (s-j\delta)^{j+1}}{(1-\delta)^j (j+1)!}\right]_{s=k\delta}^{t-\delta} \\
    &= L + (\alpha - 1)L\sum_{j=0}^k \frac{\alpha^j ((k+1-j)\delta)^j}{(1-\delta)^j j!} \\
    &\quad + (\alpha-1)L\left[\sum_{j=0}^k \frac{\alpha^{j+1} (t-(j+1)\delta)^{j+1}}{(1-\delta)^{j+1} (j+1)!} - \sum_{j=0}^k \frac{\alpha^{j+1} ((k-j)\delta)^{j+1}}{(1-\delta)^{j+1} (j+1)!}\right] \\
    &= L + (\alpha - 1)L\sum_{j=0}^k \frac{\alpha^j ((k+1-j)\delta)^j}{(1-\delta)^j j!} \\
    &\quad + (\alpha-1)L\left[\sum_{j=1}^{k+1} \frac{\alpha^{j} (t-j\delta)^{j}}{(1-\delta)^{j} j!} - \sum_{j=1}^{k+1} \frac{\alpha^{j} ((k-(j-1))\delta)^{j}}{(1-\delta)^{j} j!}\right] \\
    &= L + (\alpha - 1)L + (\alpha-1)L\sum_{j=1}^{k+1} \frac{\alpha^{j} (t-j\delta)^{j}}{(1-\delta)^{j} j!} \\
    &= L + (\alpha-1)L\sum_{j=0}^{k+1} \frac{\alpha^{j} (t-j\delta)^{j}}{(1-\delta)^{j} j!}.
\end{align*}
where \eqref{eq:oms_ub_ind_step_1} and \eqref{eq:oms_ub_ind_step_2} follow by the induction hypothesis.
Thus, by induction, we have established that \eqref{theorem:one_max_analytic_phi} is the solution to the delay differential equation \eqref{eq:one_max_search_delay_diffeq}.

Let us now turn to analyzing the competitive ratio $\alpha$. First, note that when $\delta \geq \frac{1}{2}$, we have
\begin{equation} \label{eq:oms_phi_linear_form}
    \phi(t) = \alpha L + \frac{\alpha L}{1-\delta}(\alpha - 1)[t-\delta]^+,
\end{equation}
as all terms in \eqref{theorem:one_max_analytic_phi} with $j > 1$ vanish for $t \leq 1$. As $\alpha$ must be chosen so that $\phi(1) = U$, solving this equation for $\alpha$ yields
$$\alpha L + \alpha (\alpha - 1)L = U \implies \alpha = \sqrt{\theta}.$$
We may also obtain an identical upper bound on $\alpha$ for all $\delta \in [0, 1)$ (and in particular, $\delta > \frac{1}{5}$) by simply lower bounding $\phi$ by \eqref{eq:oms_phi_linear_form}, since every term in the sum in \eqref{theorem:one_max_analytic_phi} is nonnegative:
\begin{align*}
    U = \phi(1) &\geq \alpha L + \frac{\alpha L}{1-\delta}(\alpha - 1)(1-\delta) \\
    \implies \alpha &\leq \sqrt{\theta}.
\end{align*}
This establishes the $\delta > \frac{1}{5}$ case in the analytic bound \eqref{theorem:one_max_dcr_ub}; moreover, this case matches the implicit bound $\overline{r}(\delta)$ defined by \eqref{theorem:one_max_dcr_ub_root} since $\delta > \frac{1}{5}$ implies $\overline{n}(\delta) = \max\left\{1, \left\lfloor \left(\lfloor \delta^{-1}\rfloor - 1\right)/2 \right\rfloor\right\} = 1$, in which case $\overline{r}(\delta)$ is the positive root of the equation $(\overline{r}(\delta) - 1)(\overline{r}(\delta) + 1) = \theta - 1$, i.e., $\sqrt{\theta}$.

On the other hand, suppose that $\delta \leq \frac{1}{5}$. We have that
\begin{align*}
    \theta = \frac{\phi(1)}{L} &= 1 + (\alpha - 1)\sum_{j=0}^\infty \frac{\alpha^j ([1-j\delta]^+)^j}{(1-\delta)^j j!} \\
    &= 1 + (\alpha - 1)\sum_{j=0}^k \frac{\alpha^j (1-j\delta)^j}{(1-\delta)^j j!} &\text{where $k \coloneqq \left\lfloor \delta^{-1} \right\rfloor$} \tageq\label{eq:oms_higher_terms_vanish} \\
    &\geq 1 + (\alpha - 1)\sum_{j=0}^k \frac{\alpha^j (1-\frac{j}{k})^j}{(1-\frac{1}{k})^j j!} \tageq\label{eq:oms_delta_k_ineq} \\
    &= 1 + (\alpha - 1)\sum_{j=0}^k \left(\frac{k-j}{k-1}\right)^j \frac{\alpha^j}{j!} \\
    &\geq 1 + (\alpha - 1)\sum_{j=0}^{\lfloor (k-1)/2\rfloor} \left(\frac{k-j}{k-1}\right)^j \frac{\alpha^j}{j!} \tageq\label{ineq:oms_soln_truncate_half_of_terms}
\end{align*}
where \eqref{eq:oms_higher_terms_vanish} follows from the fact that $j\delta \geq 1$ for all integers $j \geq  \left\lfloor \delta^{-1} \right\rfloor + 1$ and $j\delta \leq 1$ for $j \leq \left\lfloor \delta^{-1} \right\rfloor$, and \eqref{eq:oms_delta_k_ineq} follows due to the fact that $\frac{1-j\delta}{1-\delta}$ is decreasing in $\delta$ for $\delta < 1$ and $\delta \leq \left(\left\lfloor \delta^{-1} \right\rfloor\right)^{-1} = k^{-1}$. Now, we claim that for all $j \in \{0, \ldots, \lfloor (k-1)/2\rfloor\}$, the following inequality holds:

\begin{equation} \label{eq:oms_coeff_ineq}
    \left(\frac{k-j}{k-1}\right)^j \frac{1}{j!} \geq \binom{\lfloor(k-1)/2\rfloor}{j}\frac{1}{\lfloor(k-1)/2\rfloor^j}.
\end{equation}
Note that, by our assumption that $\delta \leq \frac{1}{5}$, we have $k \geq 5$, so $\lfloor(k-1)/2\rfloor \geq 1$ and the right-hand side of \eqref{eq:oms_coeff_ineq} is well-defined. To see that this inequality holds, we will bound the ratio between the right-hand side of \eqref{eq:oms_coeff_ineq} and the left-hand side above by 1. Calling this ratio $R(j)$, observe that
\begin{align*}
    R(j) = \frac{\binom{\lfloor(k-1)/2\rfloor}{j}\frac{1}{\lfloor(k-1)/2\rfloor^j}}{\left(\frac{k-j}{k-1}\right)^j \frac{1}{j!}} &= \frac{\frac{\lfloor(k-1)/2\rfloor!}{(\lfloor(k-1)/2\rfloor-j)!\lfloor(k-1)/2\rfloor^j}}{\left(\frac{k-j}{k-1}\right)^j} \\
    &= \frac{(k-1)^j \prod_{i=0}^{j-1} (\lfloor(k-1)/2\rfloor - i)}{(k-j)^j\lfloor(k-1)/2\rfloor^j},
\end{align*}
from which it is clear that $R(0) = 1$, $R(1) = 1$, and
$$R(j) = R(j-1) \cdot \frac{(k-j+1)^{j-1}(k-1)(\lfloor(k-1)/2\rfloor - j + 1)}{(k-j)^j\lfloor(k-1)/2\rfloor}$$
for $j \geq 2$. Thus, if we can prove that $\frac{(k-j+1)^{j-1}(k-1)(\lfloor(k-1)/2\rfloor - j + 1)}{(k-j)^j\lfloor(k-1)/2\rfloor} \leq 1$ for each $j \in \{2, \ldots, \lfloor (k-1)/2\rfloor\}$, induction will yield the desired property that $R(j) \leq 1$ for all such $j$. Thus we compute:
\begin{align*}
    \frac{(k-j+1)^{j-1}(k-1)(\lfloor(k-1)/2\rfloor - j + 1)}{(k-j)^j\lfloor(k-1)/2\rfloor} &= \left(1 + \frac{1}{k-j}\right)^{j-1}\frac{k-1}{k-j} \cdot \frac{\lfloor(k-1)/2\rfloor - j + 1}{\lfloor(k-1)/2\rfloor} \\
    &\leq \frac{1}{1 - \frac{j-1}{k-j}} \cdot \frac{k-1}{k-j} \cdot \frac{\lfloor(k-1)/2\rfloor - j + 1}{\lfloor(k-1)/2\rfloor} \tageq\label{eq:oms_analysis_bernoulli_ineq} \\
    &= \frac{k - 1}{k - 2j + 1} \cdot \frac{\lfloor(k-1)/2\rfloor - j + 1}{\lfloor(k-1)/2\rfloor} \\
    &\leq \frac{k - 1}{k - 2j + 1} \cdot \frac{(k-1)/2 - j + 1}{(k-1)/2} \tageq\label{eq:oms_analysis_remove_floor} \\
    &= 1
\end{align*}
where \eqref{eq:oms_analysis_bernoulli_ineq} follows by applying a version of Bernoulli's inequality (e.g., \citep[Chapter 3, (1.2)]{mitrinovicClassicalNewInequalities1993}) to $\left(1 + \frac{1}{k-j}\right)^{-(j-1)}$ and \eqref{eq:oms_analysis_remove_floor} follows from the fact that $1-\frac{1}{x}$ is increasing in $x$ for $x > 0$. Thus, we have established that $R(j) \leq 1$ for all $j \in \{0, \ldots, \lfloor (k-1)/2\rfloor\}$, and hence \eqref{eq:oms_coeff_ineq} holds for all such $j$. Applying this inequality to \eqref{ineq:oms_soln_truncate_half_of_terms}, we obtain
\begin{align*}
    \theta &\geq 1 + (\alpha - 1)\sum_{j=0}^{\lfloor (k-1)/2\rfloor} \left(\frac{k-j}{k-1}\right)^j \frac{\alpha^j}{j!} \\
    &\geq 1 + (\alpha - 1)\sum_{j=0}^{\lfloor (k-1)/2\rfloor} \binom{\lfloor(k-1)/2\rfloor}{j}\frac{\alpha^j}{\lfloor(k-1)/2\rfloor^j} \\
    &= 1 + (\alpha - 1)\left(1 + \frac{\alpha}{\lfloor(k-1)/2\rfloor}\right)^{\lfloor(k-1)/2\rfloor}, \tageq\label{eq:oms_ub_binom_theorem}
\end{align*}
where \eqref{eq:oms_ub_binom_theorem} follows from the binomial theorem. Note that, since we have assumed $\delta \leq \frac{1}{5}$ in this case, $\overline{n}(\delta)$ as defined in the theorem is equal to $\lfloor(k-1)/2\rfloor$, which is always at least $2$. Thus, observing that \eqref{eq:oms_ub_binom_theorem} is increasing in $\alpha$ when $\alpha > 0$, we immediately obtain that the unique positive solution $\overline{r}(\delta)$ to the equality \eqref{theorem:one_max_dcr_ub_root} is an upper bound on $\alpha$.

All that remains to be shown is the $\delta \todown 0$ case in the analytic bound \eqref{theorem:one_max_dcr_ub}; to facilitate this case, we prove the following lemma characterizing the asymptotic behavior of solutions to equations of the general form \eqref{theorem:one_max_dcr_ub_root} as $\delta$ becomes small.

\begin{lemma} \label{lemma:asymptotics_root_equation}
    Let $r(\delta)$ be the unique positive solution to the equation
    \begin{equation} \label{eq:lemma_root_equation}
        (r(\delta) - 1)\left(1 + \frac{r(\delta)}{n(\delta)}\right)^{n(\delta)} = \theta - 1,
    \end{equation}
    where $n(\delta)$ is a function satisfying
    $$c_1\delta \leq \frac{1}{n(\delta)} \leq c_2\delta$$
    for all $\delta \in (0, 1]$, given some $c_2 \geq c_1 > 0$. Then $r(\delta) = 1 + W_0\left(\frac{\theta - 1}{e}\right) + \Theta(\delta)$, where the asymptotic notation reflects the $\delta \todown 0$ regime and omits dependence on $\theta$.
\end{lemma}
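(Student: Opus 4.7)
My plan is to reduce the statement to an asymptotic analysis of the root as $n \to \infty$ and then translate via the bound $1/n(\delta) \in [c_1\delta, c_2\delta]$. Define $g(r, n) := (r-1)(1 + r/n)^n$; assuming $\theta > 1$ (the $\theta = 1$ case is trivial with $r \equiv 1$), I would first verify that $g(\cdot, n)$ is strictly increasing on $r \geq 1$ with $g(1, n) = 0$ and $g(r, n) \to \infty$ as $r \to \infty$, so the equation $g(r, n) = \theta - 1$ has a unique positive root $r^\sharp(n) > 1$. Taking the locally uniform limit $(1 + r/n)^n \to e^r$ shows $r^\sharp(n) \to r_0$, where $r_0$ solves $(r_0 - 1)e^{r_0} = \theta - 1$; rearranging as $(r_0 - 1)e^{r_0 - 1} = (\theta - 1)/e$ and invoking the defining property of $W_0$ yields $r_0 = 1 + W_0\!\left(\tfrac{\theta - 1}{e}\right)$, matching the target limit.

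The heart of the proof is pinning down the rate $r^\sharp(n) - r_0$ via the analytic implicit function theorem. I would reparametrize $\epsilon := 1/n$ and set
\[
    F(r, \epsilon) := (r-1)\exp\!\left(\tfrac{1}{\epsilon}\log(1 + r\epsilon)\right) \text{ for } \epsilon > 0, \qquad F(r, 0) := (r-1)e^r.
\]
Using the convergent series $\tfrac{1}{\epsilon}\log(1 + r\epsilon) = r - \tfrac{r^2}{2}\epsilon + \tfrac{r^3}{3}\epsilon^2 - \cdots$, one checks that $F$ is real-analytic in $(r, \epsilon)$ in an open neighborhood of $(r_0, 0)$. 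The key partial derivatives at $(r_0, 0)$ evaluate to
\begin{align*}
    \partial_r F(r_0, 0) &= r_0 e^{r_0} > 0, \\
    \partial_\epsilon F(r_0, 0) &= -\tfrac{1}{2} r_0^2 (r_0 - 1) e^{r_0} = -\tfrac{1}{2}(\theta - 1) r_0^2 < 0,
\end{align*}
so the implicit function theorem yields a real-analytic root curve $\epsilon \mapsto r(\epsilon)$ with $r(0) = r_0$ and
\[
    r'(0) = -\frac{\partial_\epsilon F(r_0, 0)}{\partial_r F(r_0, 0)} = \frac{r_0 (r_0 - 1)}{2} > 0.
\]
By monotonicity of $g(\cdot, n)$ in $r$, this analytic branch must coincide with $r^\sharp(n)$ for $n$ large, and there exist $\epsilon_0, K > 0$ depending only on $\theta$ with $\bigl|r(\epsilon) - r_0 - r'(0)\epsilon\bigr| \leq K\epsilon^2$ for all $\epsilon \in [0, \epsilon_0]$.

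Composing with the hypothesis $\epsilon := 1/n(\delta) \in [c_1\delta, c_2\delta]$, for $\delta$ small enough that $c_2\delta \leq \epsilon_0$ and $K c_2 \delta \leq r'(0)/2$, the above expansion yields $r(\delta) - r_0 \in \bigl[\tfrac{1}{2}r'(0) c_1\delta,\; 2 r'(0) c_2\delta\bigr]$, which is exactly $\Theta(\delta)$ with constants depending only on $\theta$ (through $r_0$) and on $c_1, c_2$. The main obstacle I foresee is justifying the analytic extension of $F$ across $\epsilon = 0$ --- equivalently, that $\epsilon^{-1}\log(1 + r\epsilon)$ is analytic at $\epsilon = 0$ --- which follows from the power series above but is easy to overlook; once this is in hand the rest of the argument reduces to applying the implicit function theorem and composing with the prescribed bound on $1/n(\delta)$, and crucially $n(\delta)$ itself need not be smooth or even integer-valued.
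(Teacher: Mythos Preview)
Your proof is correct and takes a genuinely different route from the paper. The paper proceeds by sandwiching $(1+r/n)^n$ between explicit exponentials via an auxiliary inequality of the form
\[
e^{\frac{\log(1+C)}{C}\,x}\ \le\ \left(1+\tfrac{x}{n}\right)^n\ \le\ e^{\,x - \frac{C-\log(1+C)}{C^2}\cdot \frac{x^2}{n}}\qquad (x\in[0,nC]),
\]
applied with $C$ proportional to $\delta\sqrt{\theta}$; it then inverts each bound through the defining identity of $W_0$ and Taylor-expands the resulting expressions in $\delta$. This yields explicit but somewhat asymmetric first-order constants (the upper-bound coefficient carries a factor $\sqrt{\theta}$ rather than $r_0$). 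Your implicit function theorem argument is cleaner and sharper: once the analyticity of $\epsilon^{-1}\log(1+r\epsilon)$ at $\epsilon=0$ is verified, the partial derivatives $\partial_r F(r_0,0)=r_0 e^{r_0}$ and $\partial_\epsilon F(r_0,0)=-\tfrac12 r_0^2(r_0-1)e^{r_0}$ deliver the exact slope $r'(0)=\tfrac12 r_0(r_0-1)$, and the $\Theta(\delta)$ conclusion follows immediately by sandwiching $1/n(\delta)\in[c_1\delta,c_2\delta]$ and using local monotonicity of the analytic branch. The paper's approach has the minor advantage of being fully elementary and producing explicit inequalities at every step; yours is more conceptual, pins down the precise first-order coefficient, and makes transparent why no regularity on $n(\delta)$ is needed.
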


The proof of Lemma~\ref{lemma:asymptotics_root_equation} relies on the bounds in the following lemma.

\begin{lemma} \label{lemma:exponential_inequalities}
    Let $C \geq 1$ and and $n \in \R_{++}$. Then
    \begin{equation} \nonumber
        e^{\frac{\log(1+C)}{C}x} \leq \left(1+\frac{x}{n}\right)^n \leq e^{x - \frac{C - \log(1+C)}{C^2}\frac{x^2}{n}}
    \end{equation}
    for all $x \in [0, nC]$.
\end{lemma}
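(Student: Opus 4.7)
The plan is to take logarithms and reduce both inequalities to a single-variable statement. Setting $u = x/n$, the bounds for $x \in [0, nC]$ are equivalent to
\[
    \frac{\log(1+C)}{C}\,u \;\leq\; \log(1+u) \;\leq\; u - \frac{C-\log(1+C)}{C^2}\,u^2 \qquad \text{for all } u \in [0, C],
\]
after dividing by $n$. So the entire lemma reduces to verifying these two one-dimensional inequalities on $[0,C]$; once that is done, the lemma follows immediately by exponentiating.

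For the lower bound, I would appeal directly to concavity of $u \mapsto \log(1+u)$ on $[0,\infty)$. The right-hand side of the lower bound is precisely the chord joining $(0, 0)$ to $(C, \log(1+C))$, which a concave function lies above on the interval spanned by its endpoints. This takes only a line or two.

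For the upper bound, my plan is to rearrange it to
\[
    g(u) \;\coloneqq\; \frac{u - \log(1+u)}{u^2} \;\geq\; \frac{C - \log(1+C)}{C^2} \;=\; g(C),
\]
and prove this by showing $g$ is (strictly) decreasing on $(0, C]$ (with limiting value $g(0^+) = \tfrac12$ by Taylor expansion). Writing $f(u) = u-\log(1+u)$, one has $g'(u) = (u f'(u) - 2 f(u))/u^3$, so monotonicity reduces to showing the auxiliary function
\[
    h(u) \;\coloneqq\; u f'(u) - 2 f(u) \;=\; \frac{u^2}{1+u} - 2u + 2\log(1+u)
\]
is nonpositive on $[0, \infty)$. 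Since $h(0) = 0$, it suffices to check $h'(u) \leq 0$; a short computation yields $h'(u) = -u^2/(1+u)^2$, which is manifestly nonpositive. The evaluation $g(C) = (C - \log(1+C))/C^2$ then gives exactly the coefficient appearing in the upper bound.

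The only step I expect to require real care is the monotonicity argument for $g$: one must pick the right auxiliary function (here $h$) whose derivative simplifies to a manifestly signed expression. Once the ratio $g = f/u^2$ is identified as the right object to monotone-compare at the endpoints $0^+$ and $C$, the rest is routine algebra. Note that the assumption $C \geq 1$ is not actually used in my sketch — the inequalities hold for all $C > 0$ and $u \in [0, C]$ — so presumably $C \geq 1$ is imposed only to match how the lemma is applied in the proof of Lemma~\ref{lemma:asymptotics_root_equation}.
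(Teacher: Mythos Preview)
Your proof is correct, and your reduction step and concavity argument for the lower bound match the paper's exactly. Where you diverge is in the upper bound: the paper analyzes the difference $f(x) = x - \frac{C-\log(1+C)}{C^2}x^2 - \log(1+x)$ directly, computes $f'$, locates its two roots (one at $0$, one at a positive value $x_2$), argues $x_2 > 0$ using $C - \tfrac12 C^2 < \log(1+C) < C$, and then checks the sign of $f'$ near $0$ to conclude $f \geq 0$ on $[0,C]$. Your route via the monotonicity of $g(u) = (u-\log(1+u))/u^2$ is cleaner: the auxiliary function $h(u) = u f'(u) - 2f(u)$ has derivative $h'(u) = -u^2/(1+u)^2$, which is manifestly nonpositive, so no root-finding or sign-chasing is needed. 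Your observation that $C \geq 1$ is never used is also correct; the paper invokes it to justify $C - \tfrac12 C^2 < \log(1+C)$, but that inequality in fact holds for all $C > 0$ (differentiate $\log(1+C) - C + \tfrac12 C^2$), so the hypothesis is indeed superfluous for the lemma itself.
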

\begin{proof}
    We begin by showing that, for $x \in [0, C]$, the inequalities
    \begin{equation} \label{ineq:log_bounds}
        \frac{\log(1+C)}{C}x \leq \log(1+x) \leq x - \frac{C - \log(1+C)}{C^2}x^2
    \end{equation}
    hold. First, note that the first inequality in \eqref{ineq:log_bounds} holds due to the fact that $\log(1+x)$ is concave and agrees in value with $\frac{\log(1+C)}{C}x$ at both $x = 0$ and $C$. For the second inequality, first observe that $\log(1+x)$ and $x - \frac{C - \log(1+C)}{C^2}x^2$ agree in value at $x = 0$ and $C$. Consider their difference
    $$f(x)= x - \frac{C - \log(1+C)}{C^2}x^2 - \log(1+x);$$
    we will show that $f(x)$ is nonnegative on $[0, C]$. Computing its derivative, we have
    $$f'(x) = x\left(-\frac{2}{C} + \frac{1}{1+x} + \frac{2\log(1+C)}{C^2}\right)$$
    which has roots
    $$x_1 = 0, \quad x_2 = \frac{C^2 + 2\log(1+C) - 2C}{2(C - \log(1+C))},$$
    where we can observe $x_2 > 0$, since $C \geq 1$ implies the well-known bounds $C - \frac{1}{2}C^2 < \log(1+C) < C$. Thus, if we can show that $f'(x) > 0$ for all $x \in (0, x_2)$, this will establish the desired property that $f(x) \geq 0$ for all $x \in [0, C]$. Given that $f'$ is continuous on the nonnegative reals and only has roots at $0$ and $x_2$, it suffices to show that $f'$ is positive for some small $\epsilon > 0$. Noting that $C - \frac{1}{2}C^2 < \log(1+C)$ implies $C - \frac{1}{2}C^2 + \delta = \log(1+C)$ for some $\delta > 0$, it follows that
    \begin{align*}
        f'(x) &= x\left(-\frac{2}{C} + \frac{1}{1+x} + \frac{2\log(1+C)}{C^2}\right) \\
        &= x\left(\frac{2\log(1+C) - 2C}{C^2} + \frac{1}{1+x}\right) \\
        &= x\left(\frac{-C^2 + 2\delta}{C^2} + \frac{1}{1+x}\right) \\
        &= x\left(-1 + \frac{2\delta}{C^2} + \frac{1}{1+x}\right),
    \end{align*}
    which can be made strictly positive by choosing $x > 0$ sufficiently small, thus establishing the bound.

    Multiplying \eqref{ineq:log_bounds} by $n$, exponentiating, and making the substitution $x \leftarrow \frac{y}{n}$, we obtain the bounds
    \begin{equation} \nonumber
        e^{\frac{\log(1+C)}{C}y} \leq \left(1+\frac{y}{n}\right)^n \leq e^{y - \frac{C - \log(1+C)}{C^2}\frac{y^2}{n}}
    \end{equation}
    for $\frac{y}{n} \in [0, C]$, i.e., $y \in [0, nC]$.
\end{proof}

With Lemma~\ref{lemma:exponential_inequalities} proved, we may now proceed with the proof of Lemma~\ref{lemma:asymptotics_root_equation}.

\begin{proof}[Proof of Lemma~\ref{lemma:asymptotics_root_equation}]
We begin by establishing coarse lower and upper bounds on the solution $r(\delta)$ to \eqref{eq:lemma_root_equation}. First, note that $\theta = 1$ (the trivial case when all prices are identical) yields the unique positive solution $r(\delta) = 1 = \sqrt{\theta} = 1 + W_0\left(\frac{\theta - 1}{e}\right)$, regardless of the value of $n(\delta)$. On the other hand, suppose $\theta > 1$. It is clear that $n(\delta) = 1$ implies $r(\delta) = \sqrt{\theta}$ and the $\delta \todown 0$ limit (equivalently $n(\delta) \to \infty$, by the assumed bounds on $n(\delta)$) yields $r(0) = 1 + W_0\left(\frac{\theta - 1}{e}\right)$.
In addition, observe that the left-hand side of \eqref{eq:lemma_root_equation} is continuous and strictly increasing in both $r(\delta)$ and $n(\delta)$ when $r(\delta) > 1$ and $n(\delta) > 0$. Thus, increasing $n(\delta)$ from $1$ must yield a decrease in $r(\delta)$. As a result, we must have $r(\delta) \in \left[1 + W_0\left(\frac{\theta - 1}{e}\right), \sqrt{\theta}\right]$ for all $\delta \in [0, 1]$.

Now, we continue on to prove the main result. We break the proof into two parts: the upper bound and the lower bound. In the following, we omit the dependence of $r(\delta)$ and $n(\delta)$ on $\delta$, simply writing $r$ and $n$, respectively.

\paragraph{Upper bound.} Since $r \in \left[1 + W_0\left(\frac{\theta - 1}{e}\right), \sqrt{\theta}\right]$, we may apply the lower bound in Lemma~\ref{lemma:exponential_inequalities} with $C = c_2\delta\sqrt{\theta}$ to \eqref{eq:lemma_root_equation} to obtain
\begin{align*}
    \theta - 1 &= \left(r - 1\right)\left(1 + \frac{r}{n}\right)^{n} \\
    &\geq \left(r - 1\right)\left(1 + \frac{r}{\frac{1}{c_2\delta}}\right)^{\frac{1}{c_2\delta}} \tageq\label{ineq:ub_root_n_lb}\\
    &\geq (r-1)e^{\frac{\log(1+c_2\delta\sqrt{\theta})}{c_2\delta\sqrt{\theta}}r}, \tageq\label{ineq:ub_root_pre_lambert}
\end{align*}
where \eqref{ineq:ub_root_n_lb} follows by the monotonicity of the left-hand side of \eqref{eq:lemma_root_equation} in $n$. Monotonicity of \eqref{ineq:ub_root_pre_lambert} in $r > 1$ and the definition of the Lambert $W$ function yields the bound
$$\frac{\log(1+c_2\delta\sqrt{\theta})}{c_2\delta\sqrt{\theta}}\left(r-1\right) \leq W_0\left(\frac{\log(1+c_2\delta\sqrt{\theta})}{c_2\delta\sqrt{\theta}}\cdot\frac{\theta - 1}{\exp\left(\frac{\log(1+c_2\delta\sqrt{\theta})}{c_2\delta\sqrt{\theta}}\right)}\right),$$
and hence
$$r \leq 1 + \frac{c_2\delta\sqrt{\theta}}{\log(1+c_2\delta\sqrt{\theta})}\cdot W_0\left(\frac{\log(1+c_2\delta\sqrt{\theta})}{c_2\delta\sqrt{\theta}}\cdot\frac{\theta - 1}{\exp\left(\frac{\log(1+c_2\delta\sqrt{\theta})}{c_2\delta\sqrt{\theta}}\right)}\right).$$
Taylor expanding about $\delta = 0$ gives:
$$r \leq 1 + W_0\left(\frac{\theta - 1}{e}\right) +\frac{c_2\sqrt{\theta}\cdot W_0\left(\frac{\theta - 1}{e}\right)}{2}\delta + \calO(\delta^2),$$
i.e., $r = 1 + W_0\left(\frac{\theta - 1}{e}\right) + \calO(\delta)$ as $\delta \todown 0$.

\paragraph{Lower bound.} Since $r \in \left[1 + W_0\left(\frac{\theta - 1}{e}\right), \sqrt{\theta}\right]$, we may apply the upper bound in Lemma~\ref{lemma:exponential_inequalities} with $C = c_1\delta\sqrt{\theta}$ to \eqref{eq:lemma_root_equation} to obtain
\begin{align*}
    \theta - 1 &= (r-1)\left(1 + \frac{r}{n}\right)^n \\
    &\leq (r-1)\left(1 + \frac{r}{\frac{1}{c_1\delta}}\right)^{\frac{1}{c_1\delta}} \tageq\label{ineq:lb_root_n_ub} \\
    &\leq (r-1)e^{r - \frac{c_1\delta\sqrt{\theta} - \log\left(1+c_1\delta\sqrt{\theta}\right)}{(c_1\delta\sqrt{\theta})^2}c_1\delta r^2} \\
    &\leq (r-1)e^{r - \frac{c_1\delta\sqrt{\theta} - \log\left(1+c_1\delta\sqrt{\theta}\right)}{(c_1\delta\sqrt{\theta})^2}c_1\delta\left(1 + W_0\left(\frac{\theta - 1}{e}\right)\right)^2} \tageq\label{ineq:lb_root_pre_lambert}
\end{align*}
where \eqref{ineq:lb_root_n_ub} follows by the monotonicity of the left-hand side of \eqref{eq:lemma_root_equation} in $n$ and \eqref{ineq:lb_root_pre_lambert} results from $r \geq 1 + W_0\left(\frac{\theta - 1}{e}\right)$ and the fact that $C - \log(1+C) \geq 0$ for $C \geq 0$. Following the same approach as employed in the upper bound, monotonicity of \eqref{ineq:lb_root_pre_lambert} in $r > 1$ and the definition of the Lambert $W$ function yields the lower bound
$$r \geq 1 + W_0\left((\theta - 1)\exp\left[\frac{c_1\delta\sqrt{\theta} - \log\left(1+c_1\delta\sqrt{\theta}\right)}{(c_1\delta\sqrt{\theta})^2}c_1\delta\left(1 + W_0\left(\frac{\theta - 1}{e}\right)\right)^2 - 1\right]\right),$$
and Taylor expanding about $\delta = 0$ gives
$$r \geq 1 + W_0\left(\frac{\theta - 1}{e}\right) + \frac{c_1 \cdot W_0\left(\frac{\theta - 1}{e}\right)\left(1 + W_0\left(\frac{\theta - 1}{e}\right)\right)}{2}\delta + \Omega(\delta^2),$$
i.e., $r = 1 + W_0\left(\frac{\theta - 1}{e}\right) + \Omega(\delta)$ as $\delta \todown 0$.
\end{proof}

Having proved Lemma~\ref{lemma:asymptotics_root_equation}, the $\delta \todown 0$ case in the analytic bound \eqref{theorem:one_max_dcr_ub} follows as an immediate consequence of the fact that, when $\delta \in (0, 1]$, $\overline{n}(\delta) = \max\left\{1, \left\lfloor \left(\lfloor \delta^{-1}\rfloor - 1\right)/2 \right\rfloor\right\}$ can be upper bounded as
\begin{align*}
    \max\left\{1, \left\lfloor \left(\lfloor \delta^{-1}\rfloor - 1\right)/2 \right\rfloor\right\} &\leq \max\left\{1, \delta^{-1}\right\} \\
    &= \delta^{-1}
\end{align*}
and lower bounded as
\begin{align*}
    \max\left\{1, \left\lfloor \left(\lfloor \delta^{-1}\rfloor - 1\right)/2 \right\rfloor\right\} &= \begin{cases} \left\lfloor \left(\lfloor \delta^{-1}\rfloor - 1\right)/2 \right\rfloor & \text{if $\delta \leq \frac{1}{5}$} \\ 1 & \text{otherwise}\end{cases} \\
    &\geq \begin{cases} \left\lfloor 2\lfloor \delta^{-1}\rfloor/5 \right\rfloor & \text{if $\delta \leq \frac{1}{5}$} \\ 1 & \text{otherwise}\end{cases} \tageq\label{ineq:floor_lb_1} \\
    &\geq \begin{cases} \left\lfloor \frac{8}{25} \delta^{-1}\right\rfloor & \text{if $\delta \leq \frac{1}{5}$} \\ 1 & \text{otherwise}\end{cases} \tageq\label{ineq:floor_lb_2} \\
    &\geq \begin{cases} \frac{3}{25}\delta^{-1} & \text{if $\delta \leq \frac{1}{5}$} \\ 1 & \text{otherwise}\end{cases} \tageq\label{ineq:floor_lb_3} \\
    &\geq \frac{3}{25}\delta^{-1},
\end{align*}
where \eqref{ineq:floor_lb_1}, \eqref{ineq:floor_lb_2}, and \eqref{ineq:floor_lb_3} hold since $\delta \leq \frac{1}{5}$ implies $-1 \geq -\frac{1}{5}\lfloor \delta^{-1}\rfloor \geq -\frac{1}{5}\delta^{-1}$, which in turn implies $\lfloor \delta^{-1}\rfloor \geq \delta^{-1} - 1 \geq \frac{4}{5}\delta^{-1}$ and $\left\lfloor \frac{8}{25} \delta^{-1}\right\rfloor \geq \frac{8}{25} \delta^{-1} - 1 \geq \frac{3}{25}\delta^{-1}$. This concludes the proof.
\end{proof}

\subsection{Proof of Theorem~\ref{theorem:oms_lower_bound} \label{appendix:theorem:oms_lower_bound}}

\begin{proof}[noname]

The proof of this lower bound follows by establishing a connection between one-max search with the $\dcr$ metric and the problem of deterministic \emph{$k$-max search}, which is a modified form of one-max search in which an agent seeks to sell $k$ units of an item, rather than a single one. \cite{lorenzOptimalAlgorithmsKSearch2009} gives a threshold-based algorithm for $k$-max search which uses $k$ distinct, increasing price thresholds, with the agent selling its $i$th item at the first price surpassing the $i$th threshold. We construct our lower bound by comparing the quantiles of an arbitrary randomized algorithm for one-max search to the price thresholds of \cite{lorenzOptimalAlgorithmsKSearch2009} for $\lfloor \delta^{-1} \rfloor$-max search. We will assume without loss of generality that $L = 1$ and $U = \theta$. In the following, we define $k \coloneqq \underline{n}(\delta) = \max\left\{1, \lceil \delta^{-1} \rceil - 1\right\}$ and write the solution to \eqref{eq:one_max_dcr_root_lb} as $r \coloneqq \underline{r}(\delta)$ for clarity. That is, $r$ is the unique positive solution to the equation
\begin{equation} \label{eq:oms_lower_bound_fixed_pt_eqn}
    (r - 1)\left(1 + \frac{r}{k}\right)^{k} = \theta - 1.
\end{equation}
Note that \eqref{eq:oms_lower_bound_fixed_pt_eqn} has a unique positive solution $r$ since the left-hand side is strictly increasing in $r$ when $r > 0$.
First, in the $\delta = 0$ case, \eqref{eq:oms_lower_bound_fixed_pt_eqn} becomes
$$(r-1)e^r = \theta - 1,$$
whose solution is exactly $r = 1 + W_0\left(\frac{\theta - 1}{e}\right)$. Moreover, note that the $\delta = 0$ case is exactly the standard case of randomized one-max search with expected cost, in which case \cite{el-yanivOptimalSearchOneWay2001} has shown that the optimal competitive ratio is exactly the unique positive solution to $(r-1)e^r = \theta - 1$, thus establishing the validity of our lower bound for $\delta = 0$.\footnote{\cite{el-yanivOptimalSearchOneWay2001} specifically shows that the solution $r$ to the equation $(r-1)e^r = \theta - 1$ is the optimal competitive ratio for a fractional version of one-max search known as \emph{one-way-trading}, and that randomized one-max search (with expected cost) is equivalent to this fractional version, in the sense that any algorithm for one can be transformed into an algorithm for the other with identical competitive ratio.}

On the other hand, suppose $\delta = 1$; in this case, $k = 1$, and \eqref{eq:oms_lower_bound_fixed_pt_eqn} becomes
$$(r-1)(1+r) = \theta - 1,$$
yielding the positive solution $r = \sqrt{\theta}$, which matches the optimal deterministic strategy for one-max search \citep{el-yanivOptimalSearchOneWay2001}. Since the optimal $\dcr$ coincides with the optimal deterministic $\comprat$ when $\delta = 1$, this establishes the validity of our bound for this case.

Now, consider an arbitrary $\delta \in (0, 1)$; note that, in this case, $k$ simplifies to $k = \lceil \delta^{-1} \rceil - 1$. Define $k$ price thresholds $p_1, \ldots, p_k$ following \cite[Lemma 1]{lorenzOptimalAlgorithmsKSearch2009}:
$$p_i = 1 + (r - 1)\left(1 + \frac{r}{k}\right)^{i-1}$$
for $i \in [k]$. Let $X \sim \mu$ be any random threshold algorithm for one-max search supported on $[1, \theta]$ (recall from Section~\ref{section:problems_studied} that the restriction to such random threshold algorithms is made without loss of generality). For each $i \in [k]$, define $q_i \in [1, \theta]$ as the $i$th $(k+1)$-quantile of $X$:
$$q_i = F_X^{-1}\left(\frac{i}{k+1}\right).$$
By definition of the inverse CDF, for each $i$ we have $\mu[1, q_i] \geq \frac{i}{k+1}$ and $\mu[q_i, \theta] \geq 1 - \frac{i}{k+1}$.

Suppose that $q_i > p_i$ for some $i \in [k]$, and let $i^*$ be the smallest index for which this strict inequality holds. If $i^* = 1$, then we have
$$\mu(p_1, \theta] \geq \mu[q_1, \theta] \geq 1 - \frac{1}{k+1} = 1 - \frac{1}{\lceil \delta^{-1} \rceil} \geq 1 - \delta,$$
so the algorithm assigns a probability mass of at least $1-\delta$ to thresholds strictly greater than $p_1$. Thus its $\dcr$ is lower bounded as
$$\alpha_\delta^{\theta, \mu}(p_1) = \frac{p_1}{\dcvar[1 \cdot \indic_{X > p_1} + X \cdot \indic_{X \leq p_1}]} \geq \frac{p_1}{1} = r.$$

Otherwise, if $i^* > 1$, then we have $q_{i^*} > p_{i^*}$ and $q_{j} \leq p_{j}$ for all $j \in [i^* - 1]$. We define a modified version of the inverse CDF of $X$ as
$$\hat{F}_X^{-1}(t) = \begin{cases} 1 & \text{if $t = 0$} \\ q_j & \text{if $t \in (\frac{j - 1}{k+1}, \frac{j}{k+1}]$ for $j \in [i^* - 1]$} \\ F_X^{-1}(t) & \text{otherwise,} \end{cases}$$
which is the inverse CDF of the modified random variable $\hat{X}$ obtained by moving all the probability mass between $q_{j-1}$ and $q_j$ to $q_j$ for each $j \in [i^* - 1]$, leaving the rest of the distribution alone. Clearly $F_X^{-1}(t) \leq \hat{F}_X^{-1}(t)$ for all $t \in [0, 1]$, since inverse CDFs are increasing and we define $\hat{F}_X^{-1}$ by increasing the value of $F_X^{-1}$ on $(\frac{j - 1}{k+1}, \frac{j}{k+1}]$ to $q_j = F_X^{-1}\left(\frac{j}{k+1}\right)$ for $j \in [i^* - 1]$.

Now, suppose the adversary chooses the maximum price as $p_{i^*}$; then the $\dcvar$ of the algorithm's profit is upper bounded as:
\begin{align*}
    \dcvar[1 \cdot \indic_{X > p_{i^*}} + X \cdot \indic_{X \leq p_{i^*}}] &\leq \cvar_{\frac{1}{k+1}}[1 \cdot \indic_{X > p_{i^*}} + X \cdot \indic_{X \leq p_{i^*}}] \tageq\label{ineq:cvar_monotone} \\
    &= \frac{1}{1-\frac{1}{k+1}}\left[1 - F_X(p_{i^*}) + \int_0^{F_X(p_{i^*}) - \frac{1}{k+1}} F_X^{-1}(t)\,\der t\right] \tageq\label{eq:oms_lb_inv_cdf_rep} \\
    &\leq \frac{k+1}{k}\left[1 - F_X(p_{i^*}) + \int_0^{F_X(p_{i^*}) - \frac{1}{k+1}} \hat{F}_X^{-1}(t)\,\der t\right] \tageq\label{ineq:modified_inv_cdf} \\
    &= \frac{k+1}{k}\left[1 - F_X(p_{i^*}) + \frac{1}{k+1}\sum_{j=1}^{i^*-2} q_j + \left(F_X(p_{i^*}) - \frac{i^* - 1}{k+1}\right)q_{i^* - 1}\right] \tageq\label{eq:sum_form_cvar_oms_lb}\\
    &\leq \frac{k+1}{k}\left[1 - \frac{i^*}{k+1} + \frac{1}{k+1}\sum_{j=1}^{i^*-1} q_j\right] \tageq\label{eq:oms_lb_2nd_to_last_bound} \\
    &\leq \frac{1}{k}\left[k + 1 - i^* + \sum_{j=1}^{i^*-1} p_j \right] \tageq\label{eq:oms_lb_final_bound},
\end{align*}
where \eqref{ineq:cvar_monotone} holds due to the property that $\dcvar$ (in the maximization setting) is decreasing in $\delta$ \citep[Proposition 3.4]{acerbiCoherenceExpectedShortfall2002} and $\delta \geq (\lceil \delta^{-1} \rceil)^{-1} = (k + 1)^{-1}$; \eqref{eq:oms_lb_inv_cdf_rep} follows from Lemma~\ref{lemma:one_max_search_cost_integral_representation} and the fact that $i^* > 1$ implies $F(p_{i^*}) \geq F(q_{i^* - 1}) \geq \frac{1}{k+1}$; \eqref{ineq:modified_inv_cdf} follows from $F_X^{-1} \leq \hat{F}_X^{-1}$; \eqref{eq:sum_form_cvar_oms_lb} follows by the definition of $\hat{F}_X^{-1}$ and the fact that $p_{i^*} \in [q_{i^* - 1}, q_{i^*})$ implies $F_X(p_{i^*}) - \frac{1}{k+1} \in \Big[\frac{i^* - 2}{k+1}, \frac{i^* - 1}{k+1}\Big)$; \eqref{eq:oms_lb_2nd_to_last_bound} is a result of $F_X(p_{i^*}) \leq \frac{i^*}{k+1}$ and $q_{i^* - 1} \geq 1$; and \eqref{eq:oms_lb_final_bound} follows by the assumption that, in this case, $q_j \leq p_j$ for all $j \in [i^* - 1]$. Substituting the definition of $p_j$ into \eqref{eq:oms_lb_final_bound} and simplifying the sum, we thus obtain the lower bound
\begin{align*}
    \alpha_\delta^{\theta, \mu}(p_{i^*}) &= \frac{p_{i^*}}{\dcvar[1 \cdot \indic_{X > p_{i^*}} + X \cdot \indic_{X \leq p_{i^*}}]} \\
    &\geq \frac{k\left(1 + (r - 1)\left(1 + \frac{r}{k}\right)^{i^*-1}\right)}{k + 1 - i^* + \sum_{j=1}^{i^* - 1}1 + (r - 1)\left(1 + \frac{r}{k}\right)^{j-1}} \\
    &= r.
\end{align*}

Finally, consider the case that $q_i \leq p_i$ for all $i \in [k]$. Defining
$$\hat{F}_X^{-1}(t) = \begin{cases} 1 & \text{if $t = 0$} \\ q_j & \text{if $t \in (\frac{j - 1}{k+1}, \frac{j}{k+1}]$ for $j \in [k]$} \\ F_X^{-1}(t) & \text{otherwise,} \end{cases}$$
it is straightforward to see that an argument identical to the previous case gives the following upper bound when the adversary chooses a maximum price of $\theta$:
\begin{align*}
    \dcvar[1 \cdot \indic_{X > \theta} + X \cdot \indic_{X \leq \theta}] &\leq \frac{1}{k} \sum_{j = 1}^k p_j.
\end{align*}
As a result, the $\dcr$ in this case is lower bounded as
\begin{align*}
    \alpha_\delta^{\theta, \mu}(\theta) &= \frac{\theta}{\dcvar[1 \cdot \indic_{X > \theta} + X \cdot \indic_{X \leq \theta}]} \\
    &\geq \frac{k\theta}{\sum_{j=1}^{k}1 + (r - 1)\left(1 + \frac{r}{k}\right)^{j-1}} \\
    &= \frac{r\theta}{1 + (r-1)\left(1 + \frac{r}{k}\right)^{k}} \\
    &= r & \text{by \eqref{eq:oms_lower_bound_fixed_pt_eqn}}.
\end{align*}

Thus, we have established that any random threshold algorithm (and thus \emph{any} algorithm) for one-max search has $\dcr$ at least $r$. For the analytic bounds in \eqref{eq:one_max_dcr_lb}, there are two cases: first, the $\delta \geq \frac{1}{2}$ case follows from the fact that $\delta \geq \frac{1}{2}$ implies $k = \max\left\{1, \lceil \delta^{-1} \rceil - 1\right\} = 1$, in which case $r = \sqrt{\theta}$. Second, the $\delta \todown 0$ case follows as an immediate consequence of Lemma~\ref{lemma:asymptotics_root_equation} upon noting that, for $\delta \in (0, 1]$, $\underline{n}(\delta) = \max\left\{1, \lceil \delta^{-1} \rceil - 1\right\}$ can be  upper bounded as
\begin{align*}
    \max\left\{1, \lceil \delta^{-1} \rceil - 1\right\} &\leq \max\left\{1, \delta^{-1}\right\} = \delta^{-1}
\end{align*}
and lower bounded as
\begin{align*}
    \max\left\{1, \lceil \delta^{-1} \rceil - 1\right\} &= \begin{cases} \lceil \delta^{-1} \rceil - 1 & \text{if $\delta < \frac{1}{2}$} \\ 1 & \text{otherwise} \end{cases} \\
    &\geq \begin{cases} \frac{2}{3}\lceil \delta^{-1} \rceil & \text{if $\delta < \frac{1}{2}$} \\ 1 & \text{otherwise} \end{cases} \tageq\label{ineq:ceil_lb} \\
    &\geq \begin{cases} \frac{1}{2}\delta^{-1} & \text{if $\delta < \frac{1}{2}$} \\ 1 & \text{otherwise} \end{cases} \\
    &\geq \frac{1}{2}\delta^{-1},
\end{align*}
where \eqref{ineq:ceil_lb} follows from the fact that $\delta < \frac{1}{2}$ implies $\lceil \delta^{-1} \rceil \geq 3$, so $-1 \geq -\frac{1}{3}\lceil \delta^{-1} \rceil$. This concludes the proof.
\end{proof}

\end{document}